\def\draft{0}

\documentclass[letterpaper,11pt]{article}
\usepackage{graphicx} 
\usepackage{style}

\title{Expanders Meet Reed-Muller: Easy Instances of Noisy k-XOR}

\author{Jaros{\l}aw B{\l}asiok\thanks{Bocconi University. \texttt{jaroslaw.blasiok@unibocconi.it}.} \and Paul Lou\thanks{Bocconi University, BIDSA. \texttt{paul.lou@unibocconi.it}.} \and Alon Rosen\thanks{Bocconi University, BIDSA. \texttt{alon.rosen@unibocconi.it}.}\and Madhu Sudan\thanks{Harvard University. \texttt{madhu@cs.harvard.edu}.}}

\begin{document}

\maketitle

\begin{abstract}
In the noisy $k$-XOR problem, one is given $y \in \bF_2^\constraints$ and must distinguish between $y$ uniform and $y = A x + e$, where $A$ is the adjacency matrix of a $k$-left-regular bipartite graph with $\variables$ variables and $\constraints$ constraints, $x\in \bF_2^\variables$ is random, and $e$ is noise with rate $\eta$. Lower bounds in restricted computational models such as Sum-of-Squares and low-degree polynomials are closely tied to the expansion of $A$, leading to conjectures that expansion implies hardness. We show that such conjectures are false by constructing an explicit family of graphs with near-optimal expansion for which noisy $k$-XOR is solvable in polynomial time. 

Our construction combines two powerful directions of work in pseudorandomness and coding theory that have not been previously put together. Specifically, our graphs are based on the lossless expanders  of Guruswami, Umans and Vadhan (JACM 2009). Our key insight is that by an appropriate interpretation of the vertices of their graphs, the noisy XOR problem turns into the problem of decoding Reed-Muller codes from random errors. Then we build on a powerful body of work from the 2010s correcting from large amounts of random errors. Putting these together yields our construction. 

Concretely, we obtain explicit families for which noisy $k$-XOR is polynomial-time solvable at constant noise rate $\eta = 1/3$ for graphs with $\constraints = 2^{O(\log^2 \variables)}$, $k = (\log \variables)^{O(1)}$, and $(\variables^{1-\alpha}, 1-o(1))$-expansion. Under standard conjectures on Reed-Muller codes over the binary erasure channel, this extends to families with $\constraints = \variables^{O(1)}$, $k=(\log \variables)^{O(1)}$, expansion $(\variables^{1-\alpha}, 1-o(1))$ and polynomial-time algorithms at noise rate $\eta = \variables^{-c}$.
\end{abstract}

\mnote{Set flag draft to 0 before submitting. Don't remove this note.}

\section{Introduction}
\label{sec:intro}

Noisy $k$-XOR is a canonical hypothesis-testing problem for sparse random linear systems and a central example in the study of statistical-computational gaps. A widely held intuition attributes such gaps to expansion properties of an associated graph. We show that this intuition fails in general by providing an explicit counterexample.

This serves as a cautionary tale: near-optimal expansion does not, by itself, imply hardness for noisy $k$-XOR. While many prior works focus on the case of constant $k$, for example $k=3$, our results apply when $k$ is polylogarithmic in the number of variables. We are not aware of any compelling reason to expect this regime to be algorithmically easier than the constant-$k$ setting.

\subsection{\texorpdfstring{The Noisy $k$-XOR Problem}{The Noisy k-XOR Problem}}

Given a vector $y \in \bF_2^{\constraints}$, one must distinguish between $y$ being sampled from the null distribution where $y$ is uniform random and the planted distribution where $y = A_H\cdot x + e$ that we now describe.
In the planted distribution, we consider matrices $A_H \in \bF_2^{\constraints \times \variables}$ where every row is of Hamming weight $k$, a uniform random $x \in \bF_2^{\variables}$, and a sparse noise vector $e \in \bF_2^\constraints$.
Viewing $A_H$ as the adjacency matrix of a $k$-left regular bipartite graph $H$ where each of the $\constraints$ rows corresponds to a constraint and each of the $\variables$ columns corresponds to a variable, we have the following definition.

\begin{definition}[Noisy k-XOR Distinguishing Problem]
\label{def:k-xor-problem}
    In the $\eta$-noisy XOR distinguishing problem associated with a constraint graph $H$, we are given a vector $y \in \bF_2^\constraints$, with a promise that either 
    \begin{itemize}
        \item (Null case): $y \sim \Unif(\bF_2^\constraints)$,
        \item (Planted case): $y = A_H \cdot x + e$, where $x \in \bF_2^\variables$ is a uniform random vector, and each element of the noise vector $e \in \bF_2^\constraints$ is independently $1$ with probability $\eta$ (and $0$ otherwise).
    \end{itemize}
    The goal of the algorithm is to distinguish between those two distributions. For example, an algorithm $\mathcal{A}$ is said to succeed in the distinguishing task, if 
    \begin{equation*}
        \left|\Pr_{y\sim \Unif{\bF_2^\constraints}}[\mathcal{A}(y) = 1] - \Pr_{\substack{x\sim \bF_2^\variables \\ e \sim \mathsf{Ber}(\eta)^{\constraints}}}[\mathcal{A}(A_H \cdot x  + e) = 1]\right| > 1/10.
    \end{equation*}
\end{definition}
When $\constraints \leq \variables$, the planted and the null distributions are typically identical (as long as the adjacency matrix $A_H$ is of full rank), so the problem is not well-posed. 
We therefore only consider the setting where $\constraints > \variables$. 
In this setting, the columns of the matrix $A_H$ span a linear code in $\mathbb{F}_2^\constraints$, and the problem is essentially equivalent to deciding whether $y$ is a randomly corrupted codeword (through a binary symmetric channel with noise level $\eta$), or a random word from $\bF_2^\constraints$.

\subsection{Statistical-Computational Gap and Expansion Properties}

When $H$ is a random \emph{dense} graph, or equivalently when $A_H$ is a random linear code, the problem is the classical Learning Parity with Noise (LPN) problem, which is conjectured to remain subexponentially hard even with subexponentially many samples~\cite{DBLP:conf/crypto/BlumFKL93,FOCS:Alekhnovich03}.

In the \emph{sparse} regime, a large body of work has studied random $k$-left-regular constraint graphs $H$. This includes hardness and cryptographic formulations of sparse noisy parity/$k$-LIN/XOR-code problems~\cite{FOCS:Alekhnovich03,bogdanov-sabin-vasudevan,DBLP:conf/crypto/BogdanovRT25}, as well as algorithmic work on planted and random CSPs~\cite{NIPS:FPV15,arXiv:BHLM25}. The sparse regime is believed to exhibit a statistical-computational gap: the problem becomes polynomial-time solvable beyond a known computational threshold, but is conjectured to remain hard below it. Some of this line of work was motivated by Feige's reduction from refuting random $k$-SAT to refuting random $k$-XOR~\cite{feige2002relations,fko06}.

Closely related, though somewhat orthogonal to the noisy $k$-XOR distinguishing problem itself, is a line of work on local pseudorandom generators, Goldreich-type functions, and algebraic attacks, often instantiated using sparse random or expanding graphs~\cite{DBLP:journals/rsa/MosselST06,DBLP:journals/cc/BogdanovQ12,DBLP:journals/joc/ApplebaumBR16,DBLP:journals/siamcomp/ApplebaumL18,DBLP:journals/siamcomp/Applebaum13}.

Positive results in this setting provide a polynomial-time algorithm when $M \gtrsim N^{k/2}$. For these values of parameters, the polynomial-time distinguisher between planted and null distributions is trivial (the birthday paradox implies that with decent probability there are two equations on exactly the same set of variables, and one can just check if those two have the same value), and a non-trivial algorithm can either recover a planted solution, or refute a random instance~\cite{COJA:OGHLAN_GOERDT_LANKA_2007, barak2016noisytensorcompletionsumofsquares,allen2015refuterandomcsp,STOC:RRS17,guruswami2023algorithmscertificatesbooleancsp}.

Moreover, some evidence for computational hardness is provided by the means of lower bounds in the restricted models of computation (Sum-of-Squares, Statistical Queries, low  degree polynomials) for both the distinguishing and refutation versions of the problem~\cite{CC:Grigoriev01, applebaum2010public, NIPS:FPV15,STOC:KMOW17,barak2023hiddenprogressdeeplearning,GHJS25}.

The connection between graph expansion properties of the constraint graph $H$ and the conjectured hardness of noisy $k$-XOR associated with $H$ is most transparent in terms of low-weight linear dependencies in the matrix $A_H$, or equivalently small even covers.
Graph expansion rules out the existence of small even covers, implying local pseudorandomness, lower bounds for low-degree distinguishers, and SoS lower bounds (see Section~\ref{sec:techover}).
These results, together with the lack of efficient algorithms for sparse instances, suggest the intuition that expansion may be the \emph{only} structural source of hardness for noisy $k$-XOR.
This intuition can be formalized into a mathematical conjecture in several ways, among which one of the strongest we provide below.

\begin{conjecture}

There is a constant $\alpha \in (1/2, 1)$, such that for every $k$-left regular bipartite graph $H= ([\constraints] \cup [\variables], E_H)$ that is $(T, \alpha)$-expanding, any circuit $C$ that succeeds in distinguishing the $\eta$-noisy $k$-XOR distinguishing problem with the constraint graph $H$, for $\eta\cdot T \gtrsim \log (\constraints)$
has size at least $2^{\Omega(T)}$.

\label{conj:weak}
\end{conjecture}
\begin{remark}
    See Lemma~\ref{lem:simple-distinguisher} for more discussion on why condition $\eta \gtrsim \log(M)/T$ is needed for the conjectured hardness.
\end{remark}

The belief that the corresponding noisy version of the $k$-XOR distinguishing problem is hard for every sufficiently expanding graph $H$ seems to go back at least to the work of Alekhnovich~\cite[Remark 1]{FOCS:Alekhnovich03} who proposes two YES and NO distributions in the spirit of our YES and NO distribution and says (in our notation) that ``We believe that (...), and if $H$ is an expander (which occurs
with probability $1 - O(1/n)$) then the distributions of
YES and NO are indistinguishable.''\footnote{See \Cref{rmrk:alek-distribution} for a description of actual YES and NO instances suggested by Alekhnovich, and a sketch of how our approach can be used to distinguish them.}

The exact conjecture that the $k$-XOR distinguishing problem as in~\Cref{def:k-xor-problem} is hard for every expanding graph $H$ was reiterated by Barak in~\cite[Page 39]{barak2014sum}, although without specifying a concrete setting of parameters.

A version of this conjecture with a more concrete realization of the parameters was suggested in~\cite{GHJS25}, albeit in the case of larger field $\bF_p$ for $p = \variables^{\Omega(1)}$, instead of $\bF_2$ as below (see Conjecture 4.3 therein). They used this conjecture, together with the planted-clique conjecture as a security basis for a Public-Key Encryption protocol.
Their conjecture is the following.

\begin{conjecture}
\label{conj:strong}
    There is a constant $\alpha \in (1/2, 1)$, such that for every $k$-left regular bipartite graph $H = ([\constraints] \cup [\variables], E_H)$ if it is $(T, \alpha)$-expanding, then every circuit that can solve the $\eta$-noisy $k$-XOR distinguishing problem must have size $M^{\omega(1)}$, where, with $\logVariables \coloneqq \log_2 \variables$,
    \begin{itemize}
        \item $k = \Omega(\logVariables)$,
        \item $T = \exp(\logVariables^\gamma)$ for some $\gamma \in (0,1)$,
        \item $\eta = \logVariables^{-\zeta}$ for some constant $\zeta$.
    \end{itemize}
\end{conjecture}

\subsection{Our Results}

In our work, we first refute Conjecture~\ref{conj:weak} by showing the following.

\begin{restatable}[See \Cref{thm:constant-rate}]{mainthm}{MainOne}\label{thm:main1}
   For every constant $\alpha > 0$ there is an infinite family of $k$-left regular constraint graphs $G = ([\constraints] \cup [\variables], E)$, where $\constraints = 2^{\Theta(\log^2 \variables)}$, and $k = (\log \variables)^{\Theta(1/\alpha)}$, which is $(\variables^{1-\alpha}, 1-o(1))$-expanding and there is an algorithm with running time $\poly(\constraints)$ to solve the $\eta$-noisy $k$-XOR distinguishing problem for those graphs, with constant noise rate $\eta = 1/3$.
\end{restatable}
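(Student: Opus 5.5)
The construction uses the Guruswami--Umans--Vadhan (GUV) lossless expander, reinterpreted so that the planted $k$-XOR instance becomes a Reed--Muller codeword. Recall the GUV graph. Left vertices are polynomials $f \in \bF_q[X]$ of degree $<n$, so $\variables = q^n$. Right vertices are pairs $(y, (f_0(y),\dots,f_{m-1}(y)))$ with $y\in\bF_q$ and $f_i = f^{h^i} \bmod E(X)$, where $E$ is irreducible of degree $n$; the left degree is $q$. GUV show this is a $(h^m, A)$ lossless expander with $A \approx q - nhm$. Choosing $h \approx \variables^{\alpha'}$-type parameters and $q = h^{O(1/\alpha)}$, this gives $(\variables^{1-\alpha}, 1-o(1))$ expansion with $k = q = (\log\variables)^{\Theta(1/\alpha)}$, after swapping sides so that left-regularity matches our convention.

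\textbf{Step 1: choose the orientation.} We need the constraint side to be $k$-left-regular, and the constraints should be numerous: $\constraints = 2^{\Theta(\log^2\variables)}$. I would take variables to be the GUV right side $\bF_q^{m+1}$, so $\variables = q^{m+1}$, and constraints to be polynomials $f$, so $\constraints = q^n$. Each constraint then touches its $q$ neighbors $(y, \Phi_f(y))$. Here $\Phi_f(y)=(f_0(y),\dots,f_{m-1}(y))$ traces a curve of low degree in $\bF_q^{m}$. Expansion in this direction (constraints to variables) has to be re-derived, or taken as the cited GUV analysis of this side. I will assume the stated expansion result applies, with $n \approx \log \variables$ giving $\constraints = q^n = 2^{\Theta(\log^2\variables)}$.

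\textbf{Step 2: identify the planted instance with a Reed--Muller codeword.} A planted $x\in\bF_2^{\variables}$ is an arbitrary function on $\bF_q^{m+1}$. The constraint value for $f$ is $\sum_{y} x(y,\Phi_f(y))$. The key reinterpretation is to view $\bF_q$ as $\bF_2^{\log q}$ and write $x$ as a multilinear polynomial over $\bF_2$ in $(m+1)\log q$ variables. The map $f \mapsto (A x)_f$, viewed over the $\bF_2$-coordinates of $f$'s coefficients, then becomes a polynomial of bounded degree. To see this, note that $f\mapsto f^{h^i} \bmod E$ is $\bF_2$-linear when $h$ is a power of $2$ (Frobenius). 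So each coordinate $f_i(y)$ is a linear function of the bits of $f$. Consequently, any degree-$d$ monomial of $x$ composes to degree $\le d$ in $f$'s bits, and the sum over $y$ preserves degree. A random $x$ has full degree, however, so I would restrict to a degree bound differently. The fix is to exploit that the sum over all $y \in \bF_q$ of a polynomial of low degree in $y$ vanishes. Thus only the top-degree part in $y$ contributes, and $Ax$ lies in $\mathrm{RM}(n\log q, r)$ for some $r$ polynomially related to $m\log q$. Choosing parameters so that $r$ is a small fraction of $n\log q$ makes $Ax$ a low-rate RM codeword.

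\textbf{Step 3: decode.} Apply the known results on RM codes under random errors: Abbe--Shpilka--Wigderson, and Saptharishi--Shpilka--Volk, which handle constant error rate for degree $r = o(\sqrt{\text{number of variables}/\log})$. With these, a polynomial-time algorithm decodes from $\mathsf{Ber}(1/3)$ noise, or at least distinguishes a noisy codeword from a uniform word. Concretely, run the decoder, re-encode, and accept iff the Hamming distance is $\le (1/3+\epsilon)\constraints$. A uniform $y$ is far from all codewords by a counting (Chernoff plus union) bound, since the code has rate $2^{-\Omega(\cdot)}$.

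\textbf{Main obstacle.} Step 2 is where I expect the most trouble. The difficulty is reconciling the degree bound needed for the decoders (which require $r$ to be sublinear in the number of variables $n\log q$) with the degree-$q$, $h^{i}$-power structure of GUV, while keeping the expansion parameters intact. I would first try setting $h=2^j$ and $m$ constant-order relative to $1/\alpha$, then bound the degree by $m\log q \ll n\log q$.
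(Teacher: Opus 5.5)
Your architecture matches the paper's. Constraints are the GUV polynomials, and that is already GUV's $k$-regular left side, so no swap and no re-derivation of expansion is needed. Frobenius-linearity makes each variable's neighborhood an affine subspace, so $A_Hx$ is a Reed--Muller codeword; one then decodes and union-bounds the null case. The gap is the quantitative core, which you flag as the main obstacle but do not resolve, and the resolution you sketch fails.

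\textbf{Decoding needs $d = c\sqrt{L}$.} The paper handles $\eta=1/3$ with the Saptharishi--Shpilka--Volk reduction. Errors on a pattern $P$ in $\RM(L,d)$, where $L=\log\constraints$, are efficiently correctable whenever erasures on $P$ are recoverable in $\RM(L,(L+d)/2)$. That auxiliary code has rate $\Pr[\mathrm{Bin}(L,1/2)\le (L+d)/2]\approx\Phi(d/\sqrt L)$. Tolerating $P\sim\mathrm{Ber}(1/3)^{\constraints}$ therefore requires $d=c\sqrt L$ with a small constant $c$ (the paper takes $c=0.1$). It also requires the Kudekar et al.\ theorem that constant-rate RM codes achieve BEC capacity. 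Abbe--Shpilka--Wigderson's erasure results only cover rates tending to $0$ or $1$, so they do not apply to this code of rate near $1/2$. Your Step 2 target, degree ``a small fraction of $n\log q$'', is far too large: for $d=\gamma L$ one gets $1-\mathrm{Rate}=2^{-\Theta(\gamma^2 L)}$, so this route tolerates only exponentially small noise.

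\textbf{The GUV parameters must realize this degree.} You need $d=m\log q\le c\sqrt{n\log q}$ while keeping $nmh\ll q$ and $h^m\ge\variables^{1-\alpha}$. Since $\variables=q^{m+1}$, constant $m$ gives $k=q=\variables^{1/(m+1)}$, which is polynomial rather than polylogarithmic. Likewise, $n\approx\log\variables$ in Step 1 gives $\log\constraints=\log q\cdot\log\variables$, which is $\Theta(\log^2\variables)$ only if $q$ is polynomial in $\variables$. With polylogarithmic $k$ we have $\log\variables=d+\log q\approx d$. So taking $d=c\sqrt{\log\constraints}$ is exactly what yields $\constraints=2^{\Theta(\log^2\variables)}$. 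The regime $r=o(\sqrt{L/\log L})$ you quote in Step 3 would only give $2^{\omega(\log^2\variables)}$. The paper takes $q=n^{\Theta(1/\alpha)}$, $h\approx q^{1-\alpha}$, and a growing number $m\approx\beta\sqrt{n/\log n}$ of PV components. Then $d=\Theta(\sqrt{n\log n})=c\sqrt{\log\constraints}$, with $c$ small for small $\beta$, and $nmh/q\to 0$.

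\textbf{The degree bound in Step 2.} The ``fix'' via vanishing sums over $y$ is unjustified, since for arbitrary $x$ there is no low-degree structure in $y$. It is also unnecessary. For fixed $y$, the map $f\mapsto x(y,\Phi_f(y))$ factors through the $\bF_2$-linear map $f\mapsto\Phi_f(y)\in\bF_2^{m\log q}$, so it has degree at most $m\log q$ whatever $x$ is. Equivalently, each column of $A_H$ is the indicator of an affine subspace of codimension at most $m\log q$. You need this exact bound, because a degree only ``polynomially related to $m\log q$'' would not fit under $c\sqrt{\log\constraints}$.
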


Then, under a standard assumption that Reed-Muller codes efficiently achieve capacity for the binary erasure channel (BEC) (see Conjecture~\ref{conj:RM-random-erasure-strong} and a short discussion thereafter), we provide a construction with polynomial number of constraints.

\begin{restatable}[Informal, see \Cref{thm:polynomial-size-weak}]{mainthm}{MainTwo}\label{thm:main2}
Assume Reed-Muller codes efficiently achieve capacity for the BEC. 
Then, for every $\alpha,c>0$ there is an infinite family of
$k$-left-regular constraint graphs $H=([\constraints]\cup[\variables],E)$ such that
\[
\constraints = \variables^{\Theta(1)},
\qquad
k=(\log \variables)^{\Theta(1/\alpha)},
\]
the graph is $(\variables^{1-\alpha},1-o(1))$-expanding, and there is an algorithm
running in time $\poly(\constraints)$ that solves the $\eta$-noisy $k$-XOR distinguishing
problem on these graphs for $\eta = \variables^{-c}$.
\end{restatable}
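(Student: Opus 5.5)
The plan is to reuse the Guruswami--Umans--Vadhan (GUV) construction, with constraints and variables assigned as in the proof of Main Theorem~\ref{thm:main1}, but with parameters re-tuned so that $\constraints = \variables^{\Theta(1)}$. The unconditional random-error Reed--Muller decoders used there are replaced by a decoder that follows from the BEC-capacity assumption (Conjecture~\ref{conj:RM-random-erasure-strong}).

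\textbf{The graph and its expansion.} Fix a field $\bF_q$ with $q = (\log \variables)^{\Theta(1/\alpha)}$, an irreducible $E$ of degree $n$, a power $h$, and a length $m$.
\begin{itemize}
\item The constraints are the polynomials $f \in \bF_q[Y]$ of degree $< n$, so $\constraints = q^n$.
\item The variables are the points of $\bF_q^{m+1}$, so $\variables = q^{m+1}$.
\item Constraint $f$ contains the $k = q$ variables $(a, f_0(a), \dots, f_{m-1}(a))$ for $a \in \bF_q$, where $f_i = f^{h^i} \bmod E$.
\end{itemize}
The GUV theorem gives $(K, (1-\epsilon)k)$ left-expansion for $K \approx h^m$. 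This is $(T,1-o(1))$-expansion with $T = \variables^{1-\alpha}$, provided we take $h \approx q^{1-\alpha}$ and $\epsilon = nh m/q = o(1)$. The only difference from Theorem~\ref{thm:main1} is that we now choose $n = \Theta(m)$, rather than $n \approx m^2$, which makes $\constraints = q^n = \variables^{\Theta(1)}$. We then recheck that the condition $\epsilon = o(1)$ survives with $k$ still polylogarithmic; I expect it does, since it depends only on $nhm/q$.

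\textbf{Reduction to Reed--Muller decoding.} As before, identify $f$ with its coefficient vector in $\bF_2^{n\log q}$. The coordinate map $f \mapsto f_i(a)$ is $\bF_2$-linear composed with the Frobenius-type power $h^i$. Hence the indicator $[(f_0(a),\dots,f_{m-1}(a)) = b]$ is a polynomial in the bits of $f$ of degree at most $r := m\log q$. So every column of $A_H$, and therefore every planted codeword $A_H x$, lies in $\mathrm{RM}(n\log q, r)$. With $n = Cm$ for a large constant $C$, the degree ratio is $r/(n\log q) = 1/C$, which puts us in the constant-rate regime. Here the BEC-capacity conjecture, combined with the Abbe--Shpilka--Wigderson reduction, says the following: $\mathrm{RM}(\ell, r)$ is efficiently decodable from random errors of rate $\eta$ whenever erasures of rate $\approx \eta$ are correctable in $\mathrm{RM}(\ell, \ell - 2r - 2)$. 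The erasure condition holds when $\eta$ is below the capacity $1-\mathrm{rate}$ of that dual-type code.

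In our regime we only need $\eta = \variables^{-c}$. This is polynomially small in $\constraints$, and therefore far below capacity for any fixed $C$, so the conjectured decoder applies. If needed, we can instead take $C$ depending on $c$.

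\textbf{The distinguisher.} Run the decoder on $y$ and accept iff it returns a word of $\mathrm{RM}(n\log q, r)$ within distance $2\eta\constraints$ of $y$.
\begin{itemize}
\item In the planted case, with high probability the noise $e$ has weight at most $2\eta\constraints$ and the decoder succeeds, so we accept.
\item In the null case, take a union bound over the $2^{\dim \mathrm{RM}}$ codewords against Hamming balls of radius $2\eta\constraints$. The dimension is at most $2^{H(1/C) n\log q} = \constraints^{H(1/C)}$, which is much smaller than $\constraints(1-H(2\eta))$. So a uniform $y$ is far from the code with overwhelming probability and we reject.
\end{itemize}

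\textbf{Main obstacle.} The step I expect to be hardest is the parameter juggling. We simultaneously need:
\begin{itemize}
\item $n = \Theta(m)$, for polynomial $\constraints$;
\item $nhm/q = o(1)$, for near-lossless expansion;
\item $T = \variables^{1-\alpha}$;
\item the degree ratio $1/C$ small enough that the conjectured ASW-plus-BEC decoding holds at noise $\variables^{-c}$.
\end{itemize}
I would first fix $C$ as a function of $c$ and the conjecture's threshold, and then solve for $q$, $h$ and $m$ exactly as in Theorem~\ref{thm:main1}.
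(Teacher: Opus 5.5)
You have the construction right, and it matches the paper, which takes Corollary~\ref{cor:expanding-coset-family} with $d=\beta\logConstraints$ (your $\beta=1/C$). So does your observation that every column of $A_H$ is a degree-$(m\log q)$ indicator and hence a Reed--Muller codeword, and so does the union-bound distinguisher. The gap is in the only step that uses the conjecture: showing that $\eta=\variables^{-c}$ lies below the erasure threshold of the relevant code.

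\textbf{The reduction is misstated.} By Theorem~\ref{thm:efficient-decoding}, errors in $\RM(\ell,r)$ are corrected whenever the same pattern is correctable as erasures in $\RM(\ell,\frac{\ell+r}{2})$. In Abbe--Shpilka--Wigderson's phrasing, $m-2r-2$ is the degree of the \emph{error} code, and it is paired with erasures in $\RM(m,m-r-1)$. Take your code $\RM(\ell,\ell-2r-2)$ with $r=\ell/C$ and $C$ large. Its $1-\Rate$ is about $\constraints^{-(1-H_2(2/C))}$, which tends to $\constraints^{-1}$. That is far below $\eta=\constraints^{-c/C+o(1)}$, so the decoding guarantee you invoke does not apply as written.

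\textbf{The capacity comparison is wrong even with the correct code.} Your claim that $\eta$ is ``polynomially small in $\constraints$, and therefore far below capacity for any fixed $C$'' is false, and the regime is not constant-rate. For $r=\beta\ell$, the erasure code $\RM(\ell,\frac{(1+\beta)\ell}{2})$ has $1-\Rate=\constraints^{-\lambda(\beta)+o(1)}$, where $\lambda(\beta)=1-H_2(\frac{1-\beta}{2})=\frac{\beta^2}{2\ln 2}+O(\beta^4)$. So the capacity is itself polynomially small. Since $\variables=\constraints^{\beta+o(1)}$, you need two conditions:
\begin{itemize}
\item $c\beta>\zeta\lambda(\beta)$, with $\zeta=1$ under Conjecture~\ref{conj:RM-random-erasure-strong};
\item $\lambda(\beta)<\gamma$, so that the capacity dominates the conjecture's $\constraints^{-\gamma}$ slack.
\end{itemize}
Both hold only because $\lambda$ is quadratic in $\beta$ while the exponent of $\eta$ is linear in $\beta$, and only once $\beta$ is small in terms of $c$ and the conjecture's constants. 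For fixed $C$ and small $c$ they fail, and no decoding guarantee follows. This is exactly what the paper's choice of $\beta$ in \eqref{eq:beta-choice-for-weak-thm} secures. Your remark about choosing $C$ as a function of $c$ points the right way, but this computation is the missing content of the proof.
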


In fact, we prove the conclusion of Theorem~\ref{thm:main2} under a significantly weaker conjecture on Reed-Muller codes: for some constants $\gamma, C > 0$, Reed-Muller codes of blocklength $M$ and rate $1-\varepsilon$ for $\varepsilon \gtrsim M^{-\gamma}$ can efficiently recover from $O(\varepsilon^C)$ random erasures (see Conjecture~\ref{conj:RM-random-erasure-weak}). 

\subsection{Our Techniques}
Our counterexample to expansion-based hardness in the context of the noisy XOR problem combines two previously separate lines of work.
Namely, we combine explicit lossless expander constructions from the pseudorandomness community with Reed-Muller decoding from random errors.

At a high level, the noisy XOR problem is naturally coding-theoretic, so using a decoding algorithm for a linear error-correcting code to disprove a conjecture about the hardness of recognizing a noisy XOR system is not surprising.
In particular, error correcting codes have been used to (weakly) refute other conjectures in the statistical-computational gap setting, most notably the low-degree conjecture in~\cite{ITCS:HW21,buhai2025quasipolynomiallowdegreeconjecturefalse}.
Yet, despite this obvious connection, to the best of our knowledge such conjectures on expansion-based hardness for noisy XOR have not previously been refuted in any nontrivial parameter regime.

The first obstacle is that the encoding matrix of an error-correcting code is hard to realize as the adjacency matrix of a sparse constraint graph.
Our transformation exploits some simple but specific properties of Reed-Muller codes to achieve this effect.
Namely, we observe that Reed-Muller decoding applies whenever the constraint graph is a \emph{coset graph}, that is, when it has additional algebraic structure such that every constraint defines an affine subspace (equivalently, a coset).

After establishing this connection, we draw on a powerful body of work from the 2010s showing that Reed-Muller codes can decode random errors far beyond their minimum distance~\cite{ASW14,SSV15,STOC:KKMPSU16}.
However, even then it is \emph{a priori} unclear how to ensure that the resulting graph is an expander.
Indeed, we leave open the question of whether a random subspace construction of coset graphs is expanding.

To overcome the second obstacle, namely ensuring that the encoding matrix comes from an expander, we turn to the explicit constructions of Guruswami, Umans, and Vadhan, who built lossless expanders from Parvaresh-Vardy codes~\cite{FOCS:PV05,GUV09}.
We show that their lossless expanders can be interpreted as coset graphs, allowing us to obtain expansion while retaining the ability to apply Reed-Muller decoding from random errors.

\subsection{Related Work on Hardness and Graph Expansion}  

Perhaps the earliest work in the deterministic\footnote{The nonlinear predicate for locally computing output bits produces noise that is a deterministic function of the input. In the noisy $k$-XOR setting we add \emph{random} Bernoulli error independently of the input.} noise setting that explicitly connects computational hardness and graph expansion properties is Goldreich's proposal of a candidate one-way function (OWF) in which a local predicate is evaluated on an expanding constraint graph and whose hardness heuristically depends on graph expansion~\cite{ECCC:Goldreich00,Goldreich11}.
The same construction idea for longer output lengths immediately gives a candidate pseudorandom generator (PRG).
Subsequently Oliveira, Santhanam, and Tell refuted this general conjecture by constructing a family of expander graphs such that instantiating Goldreich's PRG with this family is insecure even for reasonable predicate families~\cite{ITCS:OST19}.
In their work, they critically use the fact that the neighborhood function of the graph is of low complexity, e.g. affine or $\mathsf{AC}^{0}[\oplus]$. 
Our counterexample is incomparable with theirs for two reasons.
First, their counterexamples concern the deterministic noise setting while ours concern the random noise setting.
Secondly, while our construction also falls into a low-complexity regime because our neighbor function is affine, our algorithm for distinguishing noisy k-XOR proceeds through decoding from random erasures rather than complexity theoretic techniques.

A connection between graph expansion and hardness in the random noise setting was proposed in the work of Alekhnovich~\cite{FOCS:Alekhnovich03}.
Alekhnovich conjectured that the noisy $3$-XOR with exact error weight $w$ is indistinguishable from noisy $3$-XOR with exact error weight $w+1$ when the matrix $A$ is \emph{any} expander.
Our work morally refutes the conjecture (see Remark~\ref{rmrk:alek-distribution}).

Applebaum, Barak, Wigderson construct public-key encryption (PKE) from various combinations of three combinatorial assumptions whose hardness is related to expansion properties~\cite{DBLP:journals/siamcomp/Applebaum13}.
These constructions, however, use \emph{random} graphs that are expanders with high probability.
Similarly, a follow-up by Bogdanov, Kothari, Rosen combines ideas from the prior work to construct a PKE scheme whose security is based on Goldreich's PRG, thereby also using random graphs~\cite{TCC:BKR23}. 
Our counterexample has no immediate implications for either of these works.

More recently, the work of Ghosal, Hair, Jain, Sahai constructs PKE from the conjectured hardness of the planted clique problem and the noisy $k$-XOR over expander problem over large fields (see Conjecture~\ref{conj:strong})~\cite{GHJS25}.
The reason they use such a strong conjecture is that their PKE scheme constructs a \emph{structured} expander graph from a random $G(n, 1/2)$ graph.
While our counterexample does not address the case of large fields nor their specific structured expander family, it is cautionary counter evidence to statements such as their second assumption.

\subsection{Low degree method}
The \emph{low degree method} is a powerful heuristic that is gaining significant attention in the algorithmic learning community, widely employed to shed a light into computational complexity of concrete learning tasks.  A raising number of non-trivial algorithms in computational statistics is either accompanied, or followed by a ``matching'' lower bound against low-degree polynomials or algorithms in the statistical query model; and a lack of such lower bound is often a motivation to look for more efficient algorithms. 

While low-degree lower bounds themselves are often mathematically interesting, and provide deeper understanding of a structure of a problem at hand, the intuition that for ``natural'' and ``noisy enough'' problems the predictions provided by low-degree method should match the complexity of actual algorithms is sporadically challenged. One concrete example of this is the low-degree conjecture by Hopkins \cite{hopkins2018statistical}, a formal mathematical statement attempting to capture the intuition that every ``symmetric enough'' and ``noisy enough'' distribution, which is hard to distinguish from uniform by low-degree polynomials should be also hard to distinguish by all efficient algorithms. This conjecture has recently been weakly refuted in~\cite{buhai2025quasipolynomiallowdegreeconjecturefalse}.

Our construction does not satisfy the symmetry requirement of the low-degree conjecture, but it can be treated as providing a new example of a "noisy" problem for which low-degree predictions suggest exponential hardness, while a polynomial-time algorithm exists.

\section{Technical Overview}
\label{sec:techover}

We begin by reviewing how graph expansion implies lower bounds in several restricted computational models, e.g. SoS and low-degree polynomials.
This motivates our community's current belief that an expanding constraint graph $H$ should imply the computational hardness of the associated noisy $k$-XOR problem.
Then, we overview the construction of our counterexample, demonstrating that expansion cannot be the only explanation of hardness.

\subsection{Expansion Implies Lower Bounds}

We review explicitly the connection between graph expansion and known lower bounds.
These relations establish why the community has believed that graph expansion alone implies hardness for the noisy $k$-XOR problem.

\begin{definition}
In a bipartite graph $H = ([\constraints] \cup [\variables], E_H)$, we say that a set of left-vertices $C \subset [\constraints]$ is  \anevencover if every right vertex $x \in [\variables]$ has an even number of neighbors in $C$.
\end{definition}
This can be related to the standard notion of cycle in a graph $G$ in the following way. From a graph $G$ we build a $2$-left regular bipartite graph $H = ( E(G) \cup V(G), E_H )$
in which the left-vertices are the edges of $G$, and the right-vertices are the vertices of $G$ (in $H$ a left vertex $e$ is connected with a right vertex $v$ if $v$ lies on the edge $e$ in $G$). In this case, any minimal \evencover of $H$ corresponds exactly to a collection of edges forming a simple cycle in $G$.

Note that equivalently, in terms of linear algebra, \anevencover $C$ can be thought of as a subset of rows that add up to a zero vector, or $y \in \bF_2^\constraints$ such that $y^T A_H = 0$.

\begin{observation}
    If the constraint graph $H$ has \anevencover $z \in \bF_2^\constraints$ of size $\wt(z) \ll 1/\eta$, then the planted and null distributions are easy to distinguish.
\end{observation}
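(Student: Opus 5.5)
The distinguisher is the parity check given by the even cover itself. Let $z \in \bF_2^\constraints$ be the indicator of \anevencover with $w \coloneqq \wt(z)$, so $z^T A_H = 0$. On input $y$, the algorithm outputs $\langle z, y\rangle \in \bF_2$. Since $z$ is a fixed vector determined by $H$, this is a linear function of $y$ computable in time $O(\constraints)$, or just $O(w)$ once $z$ is known.

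\emph{Null case.} If $y \sim \Unif(\bF_2^\constraints)$ and $z \neq 0$, then $\langle z, y\rangle$ is a uniform bit, so
\[
\Pr[\langle z,y\rangle = 1] = 1/2 .
\]

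\emph{Planted case.} If $y = A_H x + e$, then
\[
\langle z, y\rangle = z^T A_H x + \langle z, e\rangle = \langle z, e\rangle = \sum_{i \in \mathrm{supp}(z)} e_i .
\]
This is a sum of $w$ independent $\mathsf{Ber}(\eta)$ bits. By the standard piling-up computation,
\[
\Pr[\langle z,e\rangle = 0] = \tfrac12 + \tfrac12 (1-2\eta)^w .
\]
Hence the distinguishing advantage of $\mathcal{A}(y) = \langle z, y\rangle$ is exactly $\tfrac12 (1-2\eta)^w$.

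\emph{Bounding the bias.} It remains to show $\tfrac12(1-2\eta)^w > 1/10$ when $w \ll 1/\eta$. For $\eta \le 1/4$ we have $1-2\eta \ge e^{-4\eta}$, so
\[
(1-2\eta)^w \ge e^{-4\eta w},
\]
which exceeds $1/5$ as soon as $\eta w < (\ln 5)/4$. This is a concrete reading of the hypothesis ``$\wt(z) \ll 1/\eta$''.

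If one wants advantage arbitrarily close to $1$, use several disjoint small even covers, if available, and take a majority vote. For the observation as stated, the single-check bias suffices.

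The only care needed is to interpret ``$\ll$'' as a small constant in $\eta w$, and to note that the check requires $z$ to be known. That is fine, since $H$, and hence $z$, is fixed and public; finding $z$ is not part of the distinguisher's running time for non-uniform circuits. There is no real obstacle beyond this.
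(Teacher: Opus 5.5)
Your proof is correct and takes the same route as the paper. The distinguisher is the parity check $\langle z, y\rangle$, which is a uniform bit under the null distribution and equals $\langle z, e\rangle$ under the planted one. The only difference is that you compute the bias exactly as $\frac{1}{2}(1-2\eta)^{w}$, whereas the paper uses the union bound $\Pr[\langle z,e\rangle = 1] \le w\eta$; that gives advantage at least $\frac{1}{2} - w\eta$, which is enough when $w\eta$ is a small constant.
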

\begin{proof}
    Given vector $y$, consider $\langle z, y\rangle$. In the null distribution $\Pr[\langle z, y\rangle = 1] = \frac{1}{2}$,
    whereas in the planted distribution we have $\langle z, y \rangle = z^T (A_H x + e) = \langle z, e\rangle$.
    Now by union bound $\Pr[\langle z, e\rangle = 1] \leq \wt(z) \eta$.
\end{proof}

We can improve this bound by a logarithmic factor, if we have many disjoint \evencovers of small weight.
\begin{lemma}
    \label{lem:simple-distinguisher}
    If we have $S$ disjoint \evencovers $z_1, \ldots z_S$, each of weight at most $t$, we can distinguish planted from null distribution with noise level $\eta \lesssim \log(S)/t$.
\end{lemma}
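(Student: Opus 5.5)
The plan is to compute the $S$ parities $b_i \coloneqq \langle z_i, y\rangle$ for $i=1,\dots,S$, and output ``planted'' iff $b_i = 0$ for all $i$. This is a test for the event that every parity vanishes.

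\emph{Null case.} The rows of $A_H$ indexed by distinct $z_i$ are disjoint, so the supports of the $z_i$ are disjoint and nonempty. Hence under $y\sim\Unif(\bF_2^\constraints)$ the bits $b_i$ are independent uniform. The test therefore accepts with probability $2^{-S}$, which is negligible once $S\ge 4$.

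\emph{Planted case.} Since $z_i^T A_H=0$, we have $b_i=\langle z_i, e\rangle$. The supports are disjoint, so these are again independent. Each is the parity of at most $t$ independent $\mathsf{Ber}(\eta)$ bits, so
\[
\Pr[b_i=0] = \frac{1+(1-2\eta)^{\wt(z_i)}}{2} \ge \frac{1+(1-2\eta)^t}{2} \ge (1-2\eta)^t \ge e^{-4\eta t}
\]
for $\eta\le 1/4$. We could also simply use $\Pr[b_i=0]\ge(1-\eta)^t$. By independence, the probability that all parities vanish is at least $e^{-4\eta t S}$.

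The naive product bound needs $\eta t S = O(1)$, which is too weak. This is the main obstacle: we must exploit the gain of a $\log S$ factor. To get it, I would change the test to a counting statistic. Let $N=\#\{i : b_i = 0\}$ and accept iff $N \ge S/2 + \tau$ for a threshold $\tau$ to be chosen.

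Under the null, $N\sim \mathrm{Bin}(S,1/2)$, with standard deviation $\sqrt S/2$. Under the planted distribution, $N$ is a sum of independent indicators, each with mean at least $\tfrac12(1+(1-2\eta)^t)\ge \tfrac12(1+e^{-4\eta t})$. So the mean gap is $\tfrac{S}{2}e^{-4\eta t}$. Chebyshev (or Hoeffding) then distinguishes the two cases with advantage more than $1/10$ whenever $S e^{-4\eta t}\gtrsim \sqrt S$, i.e.\ whenever $e^{-4\eta t}\ge C/\sqrt S$. That condition is exactly $\eta \le \frac{\log S - O(1)}{8t}$, i.e.\ $\eta\lesssim \log(S)/t$, as claimed.

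The concluding step is to set $\tau = \tfrac{S}{4}e^{-4\eta t}$, and to bound both error probabilities by Chebyshev using variance at most $S/4$. This yields each error at most $4/(S e^{-8\eta t})$, which is below $0.4$ for the stated range of $\eta$ with a suitable constant. Computing the $S$ inner products takes time $O(\constraints)$, so the distinguisher is efficient. The only real care point is tracking the constants in the exponent; the independence given by disjointness does all the structural work.
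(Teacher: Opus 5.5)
Your proposal is correct and follows the paper's own sketch. The paper likewise writes $\mathsf{Ber}(\eta)$ as $0$ with probability $1-2\eta$ and uniform otherwise, so each $\langle z_i, y\rangle = \langle z_i, e\rangle$ has bias $(1-2\eta)^{t}\approx e^{-2\eta t}$ toward $0$, and then asserts that about $e^{4\eta t}$ independent samples from disjoint even covers suffice. That is exactly the $1/\text{bias}^2$ sample count your counting statistic and Chebyshev bound (each error at most $4e^{8\eta t}/S$) make rigorous, under the implicit and necessary proviso, which the paper also leaves unstated, that $\eta$ stays bounded away from $1/2$ (e.g.\ your $\eta\le 1/4$).
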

\begin{proof}[Proof Sketch]
    If $z$ is \anevencover\hspace{-0.4em}, then $z^T y = z^T e$ is a sum (over $\bF_2$) of $t := \wt(z)$ independent random variables, each with $\Bern(\eta)$ distribution. We can interpret that $\Bern(\eta)$ random variable instead, as a variable that's $0$ with probability $(1-2\eta)$, and uniformly random with probability $2\eta$.

 The sum $\langle z, e\rangle$ is zero with probability $(1-2\eta)^t \approx \exp(-2\eta t)$, and uniformly random with the remaining probability. If we had $\exp(4 \eta t)$ independent samples (i.e. corresponding to disjoint \evencovers), we could distinguish the planted from the null distribution.
\end{proof}

Note that by simple linear algebra, every constraint graph $H = ([\constraints] \cup [\variables], E_H)$ has $\Omega(\constraints/\variables)$ pairwise disjoint \evencovers of length at most $\variables + 1$. As such, the noise level $\eta \leq \frac{\log(\constraints/ \variables)}{\variables}$ can be considered trivial.

The size of minimum even cover is relevant, when attempting to use the so-called \emph{low-degree heuristic} to understand the complexity of the noisy $k$-XOR problem: for any graph without small even cover the planted and null distributions are indistinguishable by low-degree polynomials.
In particular, if the smallest cover is of size at least $T + 1$, the planted distribution is $T$-wise independent and fools all degree $T$-polynomials over the reals.

Notably, a simple counting argument shows every sufficiently expanding constraint graph $H$ does not have small even covers.

\begin{definition}
    A set $S \subset [\constraints]$ of left vertices in a $k$-left-regular graph $H = ([\constraints] \cup [\variables], E_H)$ is said to be $\alpha$-expanding if
    \begin{equation*}
        |N(S)| \geq \alpha k |S|.
    \end{equation*}

    A $k$-left-regular graph $H$ is an $(T, \alpha)$-expander if for every set $S \subset [\constraints]$ of size at most $T$ is $\alpha$-expanding.
\end{definition}

\begin{lemma}
\label{lem:expander-has-no-small-circuit}
    If a graph $H$ is $(T, \alpha)$-expander for $\alpha > 1/2$, then every \evencover in $H$ is larger than~$T$.
\end{lemma}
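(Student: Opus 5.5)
We argue by contradiction using a unique-neighbor counting argument. Let $C \subset [\constraints]$ be a nonempty \evencover with $|C| \le T$; the goal is to show $C$ cannot exist. Since $|C| \le T$ and $H$ is a $(T,\alpha)$-expander, $C$ is $\alpha$-expanding:
\begin{equation*}
|N(C)| \ge \alpha k |C|.
\end{equation*}
The number of edges between $C$ and $N(C)$ is exactly $k|C|$, because $H$ is $k$-left-regular.

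Next we exploit the even-cover property. By definition, every right vertex $v \in N(C)$ has an even number of neighbors in $C$. Since $v \in N(C)$, that number is positive, so it is at least $2$. Double counting the edges between $C$ and $N(C)$ then gives
\begin{equation*}
k|C| = \sum_{v \in N(C)} |N(v) \cap C| \ge 2|N(C)|,
\end{equation*}
and hence $|N(C)| \le k|C|/2$.

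Combining the two bounds yields $\alpha k |C| \le k|C|/2$. Because $C$ is nonempty and $k \ge 1$, we may divide by $k|C|$ to get $\alpha \le 1/2$. This contradicts the hypothesis $\alpha > 1/2$, so every nonempty \evencover has size larger than $T$.

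The argument presents no real obstacle. The only points needing care are two conventions: the empty set is trivially \anevencover and must be excluded, and we need $k \ge 1$ so that the final division is legitimate. The essential idea is that an \evencover has no unique neighbors, whereas expansion above $1/2$ forces some vertex of $N(C)$ to be a unique neighbor of $C$.
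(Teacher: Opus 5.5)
Your proof is correct and is the same argument as the paper's: expansion above $1/2$ gives $|N(C)| > k|C|/2$, while the even-cover property forces every vertex of $N(C)$ to have at least two neighbors in $C$, and double counting the $k|C|$ edges gives the contradiction. Your explicit exclusion of the empty even cover, which the paper's statement silently ignores, is a welcome bit of extra care.
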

\begin{proof}
Consider any set $S$ of size $|S| \leq T$. Since $H$ is an expander, $N(S) > k |S| / 2$. 

Consider an induced subgraph on $S \cup N(S)$. Clearly, adding up the degrees of left vertices, the total number of edges in this subgraph is just $k |S|$.

If every vertex in $N(S)$ had at least two neighbors in $S$, we can count the total number of edges in this graph by adding up the right-degrees of vertices in $N(S)$. We would have $|E[ S \cup N(S)]| \geq 2 |N(S)| > 2 k |S| /2 = k |S|$, a contradiction.
\end{proof}

The relevance of expansion is highlighted as a determining factor for a sum-of-squares lower bound for the noisy $k$-XOR problem. Concretely, Grigoriev showed that for a random $k$-regular constraint graph $H$, low-degree sum-of-squares cannot solve the $k$-XOR distinguishing problem \cite{CC:Grigoriev01}. Later, Barak \cite[Lemma 3.4]{barak2014sum} observed that in the Grigoriev's lower-bound proof, the only relevant property of a random $k$-regular constraint graph, was its expansion (see also \cite[Chapter 3.2]{baraksteurer16}). That is, he showed the following.

\begin{theorem}[\cite{barak2014sum}]
\label{thm:sos-lower-bound}
    For every constraint graph $H$ which is $(T, \alpha)$-expander, with $\alpha > 1/2$, Sum-of-Squares of degree much smaller than $T/100$ cannot distinguish between $A_H \cdot x$ (i.e. planted distribution with noise level $\eta = 0$) and the null distribution $\Unif(\bF_2^\constraints)$.
\end{theorem}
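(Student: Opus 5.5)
The natural route is Grigoriev's pseudo-expectation construction, adapted to an arbitrary expander through the boundary-expansion property. Work with $\pm1$ variables $\chi_j = (-1)^{x_j}$, so each constraint $i$ becomes $\prod_{j\in N(i)}\chi_j = b_i$, where $b_i = (-1)^{y_i}$. We need a degree-$d$ pseudo-expectation $\tilde{\mathbb{E}}$, with $d \approx T/100$, that satisfies all constraints of a typical null instance $y \sim \Unif(\bF_2^\constraints)$. This shows that degree-$d$ SoS cannot refute the null instance, and hence cannot separate it from the planted case, where the system is genuinely satisfiable.

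\textbf{Step 1 (unique neighbors).} By the counting argument of Lemma~\ref{lem:expander-has-no-small-circuit}, every $S$ with $|S|\le T$ satisfies $|N(S)| \ge \alpha k|S|$. From this we get the stronger statement that $S$ has at least $(2\alpha-1)k|S|>0$ unique neighbors, i.e.\ right vertices with exactly one neighbor in $S$.

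\textbf{Step 2 (derivations and the closure).} Define the set of derivable monomials: start from the constraint monomials $\chi^{N(i)}$ with value $b_i$, and combine two derived monomials by multiplying them, tracking both the set (a symmetric difference) and the sign (a product). Restrict to derivations that use at most $T$ constraints and whose resulting monomial has degree at most $d$.

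Define $\tilde{\mathbb{E}}[\chi^U] = \prod_{i\in C} b_i$ if $U = \triangle_{i\in C} N(i)$ for such a bounded $C$, and $0$ otherwise. Well-definedness is the key point. If two sets $C, C'$ with $|C|,|C'|\le T/2$ give the same $U$, then $C\triangle C'$ has empty boundary and size at most $T$. By Step 1 it must therefore be empty, so the sign is forced.

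The substantive part is bounding the size of the generating sets. Take a minimal $C$ with $|\triangle_{i \in C} N(i)| \le d$. If $|C|\le T$, Step 1 gives boundary at least $(2\alpha-1)k|C|$. So $|C| \le d/((2\alpha-1)k)$, which is small. A standard "no medium-size sets" argument, which grows $C$ one constraint at a time, shows that a derivation can never jump past size $T$ while keeping the degree at most $d$. Hence the closure is consistent as long as $d \ll (2\alpha-1)kT$. The bound $d \le T/100$ is safely within this.

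\textbf{Step 3 (PSD).} The pseudo-expectation $\tilde{\mathbb{E}}$ is linear, normalized, and satisfies the constraints by construction. For positivity, partition the monomials of degree at most $d/2$ into equivalence classes, where $U\sim U'$ iff $U\triangle U'$ is derivable. Within a class, after fixing a representative $U_0$, the moment matrix equals $\sigma\sigma^T$ with $\sigma_U = \tilde{\mathbb{E}}[\chi^{U\triangle U_0}]$. This uses the multiplicativity of signs along derivations, which in turn follows from well-definedness. Across classes the entries are zero. The matrix is therefore block diagonal with rank-one PSD blocks, hence PSD, so $\tilde{\mathbb{E}}[p^2]\ge 0$.

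The main obstacle is Step 2. We must show that the closure under bounded derivations never produces a contradiction $1 = -1$, which requires carefully controlling intermediate derivation sizes using expansion. It is only here that $\alpha>1/2$ is essential; after that, Step 3 is routine.
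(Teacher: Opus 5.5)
Your proposal is correct and matches the argument the paper relies on. The paper gives no proof of its own; it imports Barak's observation that Grigoriev's pseudo-expectation construction uses only expansion, and your route is exactly that construction: unique-neighbor expansion bounds every generating set by $d/((2\alpha-1)k)$, which gives uniqueness of the generating set, closure under combination, and a block-diagonal moment matrix with rank-one blocks.

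The one thing to tighten is the parameter claim. The argument needs $k \le d \lesssim (2\alpha-1)kT$, so ``$d \le T/100$ is safely within this'' only holds when $(2\alpha-1)k$ is at least a suitable absolute constant. That is true in the paper's regime of near-lossless expansion and polylogarithmic $k$, but not for $\alpha$ arbitrarily close to $1/2$ with small $k$.
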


From this perspective, the Grigoriev's lower bound for a random $k$-XOR can be interpreted as a corollary of~\Cref{thm:sos-lower-bound} together with a standard fact that a random $k$-regular graph is highly expanding with high probability.

This expansion can also be used to prove lower bounds in various other restricted models of computation.

\begin{definition}[$\delta$-biased]
A distribution $D$ over $\{0,1\}^m$ is said to be $\delta$-biased if for every nonzero
$a \in \mathbb{F}_2^m$,
\[
\left|
\mathbb{E}_{y\sim D}\!\left[(-1)^{\langle a,y\rangle}\right]
\right|
\le \delta.
\]
\end{definition}

\begin{theorem}[Imported from~\cite{DBLP:journals/siamcomp/Applebaum13}, see Theorem 9.1 therein]
Let $A_H \in \mathbb{F}_2^{\constraints \times \variables}$ be a $k$-sparse\footnote{That is, every row has weight $k$.} matrix, and let
$H = ([\constraints] \cup [\variables], E)$ be its associated bipartite graph.
Assume that $H$ is a $(T,0.51)$-expander.
Let $y = A_H \cdot x + e$,
where $x$ is uniform in $\{0,1\}^{\variables}$ and
$e \in \{0,1\}^{\constraints}$ has independent $\mathsf{Ber}(\eta)$ coordinates.
Then the distribution of $y$ satisfies:
\begin{enumerate}
    \item it is $T$-wise independent;
    \item it is $\delta$-biased, where
    $\delta = \frac12(1-2\eta)^T$;
    \item for every Boolean function $f:\{0,1\}^{\constraints}\to\{0,1\}$ representable by a
    degree-$t$ polynomial over $\bF_2$,
    \[
    \left|
    \Pr[f(y)=1]-\Pr[f(u)=1]
    \right|
    \le
    8\cdot (1-2\eta)^{T/2^t - 1},
    \]
    where $u$ is uniform in $\{0,1\}^{\constraints}$.
\end{enumerate}
\end{theorem}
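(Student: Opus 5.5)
The plan is to work in the Fourier basis. Every property reduces to controlling the characters $\chi_a(y)=(-1)^{\langle a,y\rangle}$ for nonzero $a\in\bF_2^\constraints$, identifying $a$ with a set $S\subseteq[\constraints]$ of rows. For the planted $y=A_Hx+e$ we have
\begin{equation*}
\mathbb{E}[\chi_a(y)]=\mathbb{E}_x\left[(-1)^{\langle a^TA_H,x\rangle}\right]\cdot\mathbb{E}_e\left[(-1)^{\langle a,e\rangle}\right].
\end{equation*}
The first factor is $0$ unless $a^TA_H=0$, that is, unless $S$ is \anevencover. When $S$ is \anevencover, the second factor is $(1-2\eta)^{|S|}$ by independence of the noise coordinates.

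First we show that there are no small even covers. Lemma~\ref{lem:expander-has-no-small-circuit} with $\alpha=0.51>1/2$ gives that every \evencover has size $>T$. Hence every nonzero $a$ with $\wt(a)\le T$ has zero bias. Any marginal on $T$ coordinates is determined by the characters supported on those coordinates, so item~1, $T$-wise independence, follows.

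For item~2, every nonzero $a$ with nonzero bias is \anevencover of weight $\ge T+1$, so its bias is at most $(1-2\eta)^{T+1}\le\frac12(1-2\eta)^T$ whenever $\eta\le 1/4$. For larger $\eta$ I would redo the constant slightly, or simply bound by $(1-2\eta)^{T}$. The factor $\frac12$ is not important, and I would match it by reporting the bias as the distance $|\Pr[\langle a,y\rangle=1]-\frac12|=\frac12|\mathbb{E}\chi_a|$.

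Item~3 is the main step. I would invoke the known result of Viola/Bogdanov–Viola: if a distribution is $\delta$-biased, then it fools degree-$t$ $\bF_2$-polynomials with error $O(\delta^{1/2^{t-1}})$. The proof is by repeated Cauchy–Schwarz (the "squaring trick"). At each step one takes a directional derivative of $f$, which lowers the degree by one, and the final degree-1 step is handled by small bias. Each Cauchy–Schwarz application square-roots the error, so after $t$ steps one gets $(1-2\eta)^{T/2^{t}}$ up to constants, which matches the stated $8(1-2\eta)^{T/2^t-1}$.

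The main obstacle is the derivative step. Taking $\mathbb{E}_{h}$ over shifts $y+y'$ requires $y'$ to be an independent copy of the distribution, and one must check that the sum of two independent samples keeps the same small-bias structure. Here linearity helps: $y+y'=A_H(x+x')+(e+e')$, which is again planted, now with noise rate $2\eta(1-\eta)$. So I expect to carry the argument through directly for this family of distributions, tracking how $(1-2\eta)$ squares at each level, rather than citing the generic small-bias theorem. That should give exactly the exponent $T/2^t$ with the $-1$ absorbing constants.
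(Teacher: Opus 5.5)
Items 1 and 2 are handled correctly. By Lemma~\ref{lem:expander-has-no-small-circuit}, no nonempty set of at most $T$ rows is an even cover, so every nonzero character of weight at most $T$ has mean zero, and an even cover $a$ of weight $w>T$ has $\mathbb{E}\chi_a(y)=(1-2\eta)^w$. Two corrections there. First, $(1-2\eta)^{T+1}\le\frac12(1-2\eta)^T$ holds iff $\eta\ge 1/4$, not $\eta\le 1/4$. Second, your renormalization $|\Pr[\langle a,y\rangle=1]-\frac12|$ is not cosmetic but necessary: under the paper's definition of $\delta$-biased, item 2 fails for small $\eta$ (at $\eta=0$ every even cover has bias $1>\frac12$, and even covers exist once $\constraints>\variables$).

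The genuine gap is item 3. The result you cite is false as stated: a single small-bias distribution need not fool even quadratics. For even $\constraints$, the uniform distribution on $\{y:\sum_i y_{2i-1}y_{2i}=0\}$ is $2^{-\Omega(\constraints)}$-biased, yet the inner-product polynomial is constant on it. Bogdanov--Viola and Viola require the sample to be a sum of \emph{independent} small-bias samples. The reason is that for a single sample, Cauchy--Schwarz produces $\mathbb{E}\,F(y)F(y+h)$ with $h=y+y'$ correlated with $y$ (here $F=(-1)^f$), so you cannot fix $h$ and average a degree-$(t-1)$ phase over $y$. Your observation $D_\eta+D_\eta=D_{2\eta(1-\eta)}$ points the wrong way; what you need is that the given $y$ \emph{splits}. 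Since $\mathsf{Ber}(\eta)$ is the XOR of independent $\mathsf{Ber}(\eta_1),\mathsf{Ber}(\eta_2)$ whenever $(1-2\eta_1)(1-2\eta_2)=1-2\eta$, and $x=x_1+x_2$, one has $y\stackrel{d}{=}W+Y$ with $W,Y$ independent planted samples of lower noise. Moreover, iterated Cauchy--Schwarz only upper-bounds $|\mathbb{E}F(y)|$. When $f$ is biased under the uniform distribution, such a bound does not control $\mathbb{E}F(y)-\mathbb{E}F(u)$.

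The missing ingredient is Viola's lemma. Suppose $W$ fools degree-$(t-1)$ polynomials with error $\varepsilon_{t-1}$ (in $\pm1$ expectation), and an independent $Y$ has $|\mathbb{E}\chi_a(Y)|\le\lambda$ for all $a\neq 0$. Put $\beta=\mathbb{E}F(u)$ and $\mu=\mathbb{E}F(W+Y)$.
\begin{itemize}
\item Cauchy--Schwarz over $W$, plus the Fourier expansion of $\mathbb{E}_u F(u+Y)F(u+Y')$, gives $(\mu-\beta)^2\le 2\beta(\beta-\mu)+\lambda^2+\varepsilon_{t-1}$. This is where your fact that $Y+Y'$ has squared bias is used.
\item The shift identity $\beta\mu=\mathbb{E}[F(u+W)F(W+Y)]$ is, for fixed $u,Y$, a degree-$(t-1)$ phase in $W$, so $|\beta\mu-\beta^2|\le\varepsilon_{t-1}$.
\end{itemize}
Hence $|\mu-\beta|\le\sqrt{3\varepsilon_{t-1}+\lambda^2}$.

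The bookkeeping also matters. With $\rho=1-2\eta$, splitting evenly at every level, as your $y+y'$ computation suggests, only yields the exponent $(T+1)/4^{t-1}$, which falls short of $T/2^t$ for $t\ge 3$. Instead, write $y$ as a sum of independent planted pieces $Y_1,\dots,Y_t$ with noise rates $\eta_j=(1-\rho^{\theta_j})/2$, where $\theta_j=2^{t-j}/(2^t-1)$, and induct on $Y_1+\dots+Y_i$. This gives error at most $5\rho^{(T+1)/(2^t-1)}$ in $\pm1$ expectation, which after halving implies the stated $8(1-2\eta)^{T/2^t-1}$.

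The paper itself only imports this statement from Applebaum, so there is no in-paper proof to compare against.
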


\noindent
Through a known result, this $T$-wise independence implies lower bounds against $\mathsf{AC}^0$ circuits~\cite{JACM:Braverman08}.  

Finally, since random $k$-left regular graphs are optimally expanding with large probability, all the lower bounds above can be used to deduce similar hardness for a random noisy $k$-XOR.

\begin{lemma}[{\cite[Theorem 11.2.3]{guruswami2012essential}}]
    There exists a universal constant $c$, such that the following holds.
    For every $\constraints, \variables$, where $\constraints \leq \variables^{ck}$, let us consider  a random $k$-left regular graph with $\constraints$ left vertices and $\variables$-right vertices. With  probability $1-o(1)$ this graph is a $(T, 3/4)$-expander, where $T \gtrsim \Omega(\constraints/k)$.
\end{lemma}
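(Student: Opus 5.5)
The plan is a first-moment (union bound) argument over all small left-sets that fail to expand. We fix $T = \varepsilon \constraints/k$ for a small constant $\varepsilon>0$ chosen later, and model the random $k$-left regular graph as each of the $\constraints$ left vertices independently choosing $k$ neighbours (uniform, with or without repetition; repetitions only shrink $N(S)$, so the bound covers both models). For $1\le s\le T$, let $B_s$ be the event that some $S\subseteq[\constraints]$ with $|S|=s$ has $|N(S)|<\tfrac34 ks$. If $S$ fails to expand, then all $ks$ edge endpoints land in some fixed set $R\subseteq[\variables]$ with $|R|=\tfrac34 ks$. A union bound over $S$ and $R$ gives
\[
\Pr[B_s]\le \binom{\constraints}{s}\binom{\variables}{3ks/4}\Bigl(\frac{3ks}{4\variables}\Bigr)^{ks}
\le \Bigl[\frac{e\constraints}{s}\cdot e^{3k/4}\Bigl(\frac{3ks}{4\variables}\Bigr)^{k/4}\Bigr]^{s}.
\]
It then suffices to show that $\sum_{s\le T}\Pr[B_s]=o(1)$. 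We do this by showing that the bracket $\beta_s$ is at most $\tfrac12$ throughout, and that it is $o(1)$ for the smallest $s$.

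Next we split the range of $s$ into two regimes.

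\textbf{Small $s$} (say $s\le \variables^{1/2}$). Here $(ks/\variables)^{k/4}\le \variables^{-k/8+o(k)}$. Together with $\constraints\le \variables^{ck}$ this gives
\[
\beta_s\le \variables^{ck-k/8+o(k)}e^{O(k)},
\]
which is $\variables^{-\Omega(k)}$ once the universal constant satisfies $c<1/8$. This is exactly where the hypothesis $\constraints\le\variables^{ck}$ is used. Summing the geometric series over these $s$ gives $o(1)$.

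\textbf{Large $s$} (up to $T$). Here we rewrite the bracket using $\constraints/s=(k/\varepsilon)\cdot(\constraints/(ks))$ at $s=T$, and use $ks\le\varepsilon\constraints$. The factor $(3ks/4\variables)^{k/4}$ is at most $(\tfrac34\varepsilon\constraints/\variables)^{k/4}$. We need this to beat $e^{3k/4}\cdot e\constraints/s$, which is at most $e^{O(k)}\cdot k/\varepsilon$ when $\constraints/s$ is bounded by a function of $\varepsilon$ and $k$. Choosing $\varepsilon$ small makes $\beta_s\le 2^{-\Omega(k)}$, and the tail sum is again $o(1)$.

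The main obstacle is this large-$s$ regime, specifically the requirement that $ks=\Theta(\constraints)$ endpoints still have room to spread. The term $(ks/\variables)^{k/4}$ is small only when $ks\le \varepsilon'\variables$, and trivially no set with $\tfrac34 ks>\variables$ can expand. So the estimate genuinely yields $T=\Omega(\min(\constraints,\variables)/k)$, and the stated $T\gtrsim\constraints/k$ follows directly in the regime $\constraints\lesssim\variables$. When $\constraints\gg\variables$ I would first check whether the intended reading is $T=\Omega(\variables/k)$ (as in \cite{guruswami2012essential}), since the bracket cannot be made small for $s>\variables/k$. In that case the same computation goes through verbatim with $T=\varepsilon\variables/k$, and I would present the proof in that form.
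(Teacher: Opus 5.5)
The paper gives no proof of its own here; it only cites the textbook. Your first-moment setup, including $\Pr[B_s]\le\beta_s^s$, is the standard argument behind that theorem. You are also right that $T=\Omega(\constraints/k)$ is impossible once $\constraints\gg\variables$, since $|N(S)|\le\variables$.

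The genuine gap is your fallback claim. With $T=\varepsilon\variables/k$ and only the hypothesis $\constraints\le\variables^{ck}$, the computation does \emph{not} go through. It is cleanest to rewrite the bracket as $\beta_s=\frac{3ek\constraints}{4\variables}\,e^{3k/4}\bigl(\frac{3ks}{4\variables}\bigr)^{k/4-1}$, which is increasing in $s$ for $k>4$. You need this form for the intermediate range anyway. Your claim that $\constraints/s$ is bounded by a function of $\varepsilon,k$ holds only near $s=T$. For $s$ just above $\variables^{1/2}$, your estimate pits a polynomially large factor $\constraints/s$ against only $2^{-O(k)}$ decay. At $s=T=\varepsilon\variables/k$, the requirement $\beta_T\le\frac12$ reads $\frac{\constraints}{\variables}\lesssim\frac{\varepsilon}{k}\bigl(\frac{c_0}{\varepsilon}\bigr)^{k/4}$ for an absolute constant $c_0$, i.e.\ $\constraints\le\variables\cdot 2^{O(k\log(1/\varepsilon))}$. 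When $\constraints=\variables^{ck}$ the ratio is $\variables^{ck-1}$, so for constant $\varepsilon$ the bracket blows up as $\variables\to\infty$. Your assertion that the estimate ``genuinely yields'' $T=\Omega(\min(\constraints,\variables)/k)$ is therefore wrong in the case $\constraints>\variables$.

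This is not merely a weakness of the union bound: for $\constraints=\variables^{ck}$ with $ck\ge 2$, the conclusion $T=\Omega(\variables/k)$ is false. Fix $R\subseteq[\variables]$ with $|R|=r$. The number of left vertices whose $k$ neighbours all lie in $R$ is binomial with mean about $\constraints(r/\variables)^k=r^k\variables^{-(1-c)k}$. This mean is at least $r$ once $r\ge\variables^{(1-c)k/(k-1)}$, and that exponent is at most $1-c/2$. So with high probability there is a left set $S$ of size about $\frac{4r}{3k}=o(\variables/k)$ with $|N(S)|\le r<\frac34 k|S|$.

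To get a true statement you must change the parameters. One option is to assume $\constraints\le\variables\cdot 2^{c'k}$, the usual form $k=\Theta(\log(\constraints/\variables))$ of this theorem. Then your computation, together with the monotonicity remark and your small-$s$ estimate, gives $T=\varepsilon\variables/k$ for a small constant $\varepsilon$. The other option is to keep $\constraints\le\variables^{ck}$ and aim for $T=\variables^{1-\delta}/k$ with $\delta>4c$. There the same formula gives $\beta_T\le k\,e^{O(k)}\variables^{-k(\delta/4-c)}=o(1)$, hence $\sum_{s\le T}\beta_s^s=o(1)$.
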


\subsection{Our Results}

First, combining our construction based on GUV expanders (\cite{GUV09}) with the reduction from decoding RM-codes over Binary symmetric channel (BSC) to decoding a (higher degree) code over binary erasure channel (BEC) \cite{SSV15} and the result that $\RM$ codes achieve capacity over BEC in the constant rate regime \cite{STOC:KKMPSU16}, we obtain the following theorem.

\MainOne*

This theorem statement refutes~\Cref{conj:weak}.
In the proof of this statement, we use a result of Kudekar, Kumar, Mondelli, Pfister, Şaşoğlu and Urbanke that RM codes achieve capacity over BEC, i.e. they recover from erasure rate
$\varepsilon - o_{\constraints}(1)$ where $\varepsilon \coloneq 1 - \mathsf{Rate}$~\cite{STOC:KKMPSU16}
\footnote{Following closely the proof in~\cite{STOC:KKMPSU16}, it is possible to extract an explicit upper bound on the \emph{gap to capacity} $o_M(1)$: RM codes of rate $1-\varepsilon$ can recover from $\varepsilon - O(1/\log \log \constraints)$ fraction of random erasures; see~\cite{raoyoutube} for an expository YouTube talk proving this bound.}.
To get a result for a polynomial number of equations in the number of variables we need much faster rate of convergence to capacity than is currently known.
We say that Reed-Muller codes efficiently achieve capacity for the binary erasure channel, if there is a constant $\gamma > 0$, such that any Reed-Muller codeword in $\RM(\logConstraints, r)$ over $\bF_2$ with rate $R = 1 -\varepsilon$ can be recovered with probability $1-o(1)$ from independent random erasures with erasure rate $\varepsilon - O(1/\constraints^\gamma)$. The constant $1/\gamma$ is known in the coding theory literature as a \emph{scaling exponent}.

\begin{conjecture}[Reed-Muller Codes Efficiently Achieve Capacity for BEC]
    \label{conj:RM-random-erasure-strong}
     There exists a constant $\gamma$, such that for every $\logConstraints, r \in \mathbb{N}$ such that $\RM(\logConstraints, r)$ has rate at most $1-\epsilon$,  a random erasure pattern sampled from $(\mathsf{Ber}(\eta))^{2^{\logConstraints}}$ is recoverable with probability $1 - o(1)$ provided that $\eta < \varepsilon - \frac{1}{\constraints^\gamma}$, where $\constraints = 2^\logConstraints$ is the blocklength of the underlying code.
\end{conjecture}

The question of the gap-to-capacity results for RM codes for symmetric channels has been explicitly raised in~\cite[Section V]{abbe2019reedmullercodespolarize}, \cite{hassani2018optimalscalingreedmullercodes, journal:mondelli2014polar}, and in~\cite{Journal:ASY21}. 

Note that this type of gap-to-capacity behavior is true and relatively simple to show for random codes, with a scaling exponent $1/\gamma = 2$, but it is much more challenging to prove for any explicit code with an efficient decoding algorithm. The breakthrough results~\cite{journal:guruswami-xia, journal:hassani2014finite} showed that polar codes efficiently achieve capacity for binary symmetric channels (i.e. have finite scaling exponent), and variants of polar codes can achieve $1/\gamma \to 2$ \cite{journal:gry22}.

\begin{remark}
Consider a linear space $\bF_2[X]_{\leq r}$ of low-degree polynomials with dimension $(1-\varepsilon) \constraints$, and a random subset $S$ of $(1 - \varepsilon) \constraints + \constraints^{1 - \gamma}$ locations on the hypercube $\bF_2^ \logConstraints$. \Cref{conj:RM-random-erasure-strong} is equivalent to a statement that with probability $1 - o(1)$ there is no non-zero polynomial in  $\bF_2[X]_{\leq r}$ that happens to vanish on all points of $S$.
\end{remark}

In fact, a plausibly simpler-to-prove conjecture suffices to derive a counterexample to the noisy-XOR distinguishing problem where the number of constraints $\constraints$ is polynomial in the number of variables $\variables$.

\begin{conjecture}[Weak Random Erasure Recovery]
    \label{conj:RM-random-erasure-weak}
     There exist constants $\gamma > 0$ and $\zeta\geq 1$, such that for every $\logConstraints, r \in \mathbb{N}$ if $\RM(\logConstraints, r)$ has rate at most $1-\epsilon$ for $\constraints^{-\gamma} <\varepsilon < 1/2$,  a random erasure pattern sampled from $(\mathsf{Ber}(\eta))^{2^\logConstraints}$ is recoverable with probability $1 - o(1)$ provided that $\eta < \varepsilon^{\zeta}$,  where $\constraints = 2^\logConstraints$ is the blocklength of the underlying code.
\end{conjecture}

\begin{remark}
    Conjecture~\ref{conj:RM-random-erasure-weak} is implied by Conjecture~\ref{conj:RM-random-erasure-strong}.
    Therefore, we state the following theorem only assuming Conjecture~\ref{conj:RM-random-erasure-weak}.
\end{remark}

Either of these conjectures holding would imply the following counterexample, in which the number of constraints $\constraints$ is polynomially related to the number of variables $\variables$, but the noise-rate is inverse polynomial.

\MainTwo*

Therefore, under Conjecture~\ref{conj:RM-random-erasure-strong} or Conjecture~\ref{conj:RM-random-erasure-weak}, we obtain an even stronger refutation of~\Cref{conj:weak}, in which we refute the setting where the number of constraints and variables are polynomially related, as opposed to quasi-polynomially related.

\begin{remark}
\label{rmrk:alek-distribution}
    The inverse-polynomial noise rate $\eta = \variables^{-c}$ was already suggested as a regime potentially hard for every expander $H$ in~\cite{FOCS:Alekhnovich03}.  The YES and NO distributions suggested there are slightly different: YES  instances consist of vectors $A_H x + e$ where $e$ is a random vector with sparsity exactly $t := \lfloor \eta \constraints \rfloor$, where NO distributions are given by vectors $A_H x + e$ where $e$ is a random $t+1$-sparse vector.
    
      Since the algorithm we propose for planted instances $y = A_H x + e$ actually recovers the solution $x$, our approach can be used to distinguish Alekhnovich's distributions as well. Specifically, upon receiving a vector $y$, we can independently flip each coordinate with probability somewhat larger than $\eta$, to obtain a vector $y' = A_H x + e + e'$. Now $e + e'$ is statistically close to a random vector with i.i.d. entries, hence with high probability we can correctly recover $A_H x$ from $y'$, and check if the Hamming distance between $A_H x$ and $y$ is $t$ or $t+1$.
\end{remark}

\section{Preliminaries}

We begin by recalling facts that relate the Hamming ball volume to the binary entropy function.
These facts will be used to prove correctness of our distinguisher.

\begin{fact}
\label{fct:hamming-volume}
    Let
    \begin{equation*}
        \binom{n}{\leq d} := \sum_{k \leq d} \binom{n}{k}.
    \end{equation*}
    Then for $d \leq n/2$, we have 
    \begin{equation*}
        n H_2(d/n) - o(n) \leq \log_2 \binom{n}{\leq d} \leq n H_2(d/n),
    \end{equation*}
    where 
    \begin{equation*}
        H_2(p) \coloneqq p \log (1/p) + (1-p) \log (1/(1-p))
    \end{equation*}
    is the binary entropy function.
\end{fact}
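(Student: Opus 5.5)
The plan is to prove both inequalities of Fact~\ref{fct:hamming-volume} by comparing $\binom{n}{\leq d}$ with its largest term $\binom{n}{d}$, and then estimating that single binomial coefficient. Throughout, write $p := d/n \le 1/2$.

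\textbf{Upper bound.} I would use the standard probabilistic argument. Expand
\[
1 = (p + (1-p))^n \ge \sum_{k \le d} \binom{n}{k} p^k (1-p)^{n-k}.
\]
Since $p \le 1/2$, the ratio $p/(1-p)$ is at most $1$. Hence the factor $p^k(1-p)^{n-k} = (1-p)^n (p/(1-p))^k$ is non-increasing in $k$. So for every $k \le d$ it is at least $p^d (1-p)^{n-d} = 2^{-n H_2(p)}$. This gives
\[
\binom{n}{\leq d} \, 2^{-nH_2(p)} \le 1,
\]
which is exactly the upper bound. The boundary cases $d = 0$ (where $H_2(0) = 0$, so both sides equal $0$) follow from the same computation with the convention $0^0 = 1$.

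\textbf{Lower bound.} First drop all terms except the last, so that $\binom{n}{\leq d} \ge \binom{n}{d}$. Then lower bound $\binom{n}{d}$ using Stirling's formula with explicit error bounds:
\[
\sqrt{2\pi m}\,(m/e)^m \le m! \le e\sqrt{m}\,(m/e)^m .
\]
The exponential parts combine to $n^n / \bigl(d^d (n-d)^{n-d}\bigr) = 2^{nH_2(p)}$. The leftover polynomial factors contribute at least $\Omega(1/\sqrt{n})$. Therefore
\[
\log_2 \binom{n}{d} \ge nH_2(p) - \tfrac12 \log_2 n - O(1) = nH_2(p) - o(n).
\]
This step needs $1 \le d \le n-1$ so that Stirling's formula applies to $d!$ and $(n-d)!$. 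When $d = 0$ both sides are $0$ and the claim is trivial.

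\textbf{Where the care is needed.} The main point to watch is that the $o(n)$ term is uniform in $d$. In the Stirling step it is genuinely $O(\log n)$ independent of $d$, so this is not a real obstacle; I would just state the explicit bound $\frac12\log_2 n + O(1)$. An alternative that avoids Stirling entirely is the type-class counting argument. Among the $n+1$ possible weights, the weight-$d$ term maximizes $\binom{n}{k}p^k(1-p)^{n-k}$, so that term is at least $1/(n+1)$. This gives $\binom{n}{d} \ge 2^{nH_2(p)}/(n+1)$, which is also uniform in $d$.
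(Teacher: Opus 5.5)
Your proof is correct. The paper simply recalls this as a standard fact and gives no proof, and your argument is the standard one. For the upper bound you use $\binom{n}{\le d}\,p^d(1-p)^{n-d}\le 1$ with $p=d/n\le 1/2$. For the lower bound you use $\binom{n}{\le d}\ge\binom{n}{d}\ge 2^{nH_2(p)}/(n+1)$ via the mode of $\mathrm{Bin}(n,p)$, or the Stirling version. Both give an error of $O(\log n)$ that is uniform in $d$, which is stronger than the $o(n)$ required.

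One small remark, relevant to how the paper applies the fact: the radius $t=(1+\delta)\eta M$ in the distinguisher need not be an integer. Your upper-bound argument still works verbatim for real $d$, since monotonicity of $(p/(1-p))^k$ gives $p^k(1-p)^{n-k}\ge p^d(1-p)^{n-d}$ for all integers $k\le d$. For the lower bound with non-integer $d$, pass to $\lfloor d\rfloor$; this changes $nH_2(d/n)$ by only $O(\log n)$.
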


Now we recall standard asymptotic bounds on the binary entropy function.

\begin{fact}[Standard Entropy Bounds]
Let $H_2$ be the binary entropy function.
Then
\begin{enumerate}
    \item For $\varepsilon \to 0$, 
    \[H_2 \left(\varepsilon \right) = O(\varepsilon\log (1/\varepsilon)).\]
    \item For $\varepsilon \to 0$, \[H_2 \left( \frac12 - \varepsilon \right) = 1 - \Theta(\varepsilon^2).\]
\end{enumerate}
\end{fact}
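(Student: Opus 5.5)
For the first item I would bound the two terms of $H_2(\varepsilon)$ directly, and for the second I would expand $H_2$ around $1/2$ as a power series and sandwich the series between two multiples of $\varepsilon^2$. Throughout, $\log$ is base $2$.

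\textbf{First item.} The term $\varepsilon\log(1/\varepsilon)$ is already of the required form. For the second term I use $-\ln(1-\varepsilon)\le \varepsilon/(1-\varepsilon)$, valid for $\varepsilon\in(0,1)$. This gives
\[
(1-\varepsilon)\log\frac{1}{1-\varepsilon}\le \frac{\varepsilon}{\ln 2}.
\]
Once $\varepsilon\le 1/2$ we have $\log(1/\varepsilon)\ge 1$, so $\varepsilon/\ln 2 = O(\varepsilon\log(1/\varepsilon))$. Adding the two terms proves the first item.

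\textbf{Second item: the series.} Substitute $x=2\varepsilon$, so that $1/2-\varepsilon=(1-x)/2$ and $1/2+\varepsilon=(1+x)/2$. A direct rewriting then gives
\[
H_2\!\left(\tfrac12-\varepsilon\right)=1-\frac{1}{2\ln 2}\Bigl[(1-x)\ln(1-x)+(1+x)\ln(1+x)\Bigr].
\]
Next I expand $\ln(1\pm x)=\sum_{n\ge 1}(\mp 1)^{n+1}x^n/n$ for $|x|<1$ and collect terms. The odd powers of $x$ cancel, and the bracket equals
\[
\sum_{j\ge1}\frac{x^{2j}}{j(2j-1)}.
\]
Therefore
\[
1-H_2\!\left(\tfrac12-\varepsilon\right)=\frac{1}{\ln 2}\sum_{j\ge 1}\frac{x^{2j}}{2j(2j-1)}.
\]

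\textbf{Second item: the sandwich.} Every term of the series is nonnegative. Keeping only $j=1$ gives the lower bound $x^2/(2\ln 2)=2\varepsilon^2/\ln 2$. For the upper bound I use $x^{2j}\le x^2$ when $|x|\le 1$, together with the identity $\sum_{j\ge1}\frac{1}{2j(2j-1)}=\sum_{j\ge 1}\bigl(\tfrac{1}{2j-1}-\tfrac{1}{2j}\bigr)=\ln 2$. This gives $1-H_2(1/2-\varepsilon)\le x^2=4\varepsilon^2$. Combining the two bounds,
\[
\frac{2}{\ln 2}\,\varepsilon^2\;\le\; 1-H_2\!\left(\tfrac12-\varepsilon\right)\;\le\; 4\varepsilon^2 ,
\]
which is $1-\Theta(\varepsilon^2)$. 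In fact this holds for all $\varepsilon\in[0,1/2)$, not only as $\varepsilon\to 0$.

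The only step requiring care is the sign bookkeeping when adding the two logarithm series, which is what produces the clean nonnegative even series. Everything after that is routine.
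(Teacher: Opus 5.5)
Your proof is correct. The bound $-\ln(1-\varepsilon)\le \varepsilon/(1-\varepsilon)$ settles item 1. For item 2, the even series $1-H_2(\tfrac12-\varepsilon)=\frac{1}{\ln 2}\sum_{j\ge1}\frac{(2\varepsilon)^{2j}}{2j(2j-1)}$ gives the sandwich $\frac{2}{\ln 2}\varepsilon^2\le 1-H_2(\tfrac12-\varepsilon)\le 4\varepsilon^2$, which holds even uniformly on $[0,1/2)$.

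The paper states this fact without proof, and where it uses item 2 it just appeals to the Taylor expansion of $H_2$ around $1/2$, which is exactly your series. Your leading coefficient agrees with the paper's $\lambda(\beta)=\beta^2/(2\ln 2)+O(\beta^4)$ under $\varepsilon=\beta/2$.
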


Finally, we recall the Berry-Ess\'een Theorem.
It allows us to obtain more precise lower bounds on $1-R$ for the rate $R$ of a $(\logConstraints, r)$ Reed-Muller code where $r > \logConstraints/2$.

\begin{theorem}[Berry-Ess\'een Theorem~\cite{tao2023topics}]
\label{thm:berry-esseen}
Let $X$ have mean zero, variance one and finite third moment.
Let $X_1, \ldots, X_n$ be  i.i.d. copies of $X$, and let $S_n \coloneqq \left( X_1 + \cdots + X_n \right) / \sqrt{n}$.
Then,
\[\Pr \left[ S_n  < a\right ] \leq \Pr \left[ Z  < a\right ] + \frac{  K \cdot \E \left[ |X|^3 \right ] }{ \sqrt{n} } \]
    uniformly for all $a \in \bR$ where $Z \sim N(0, 1)$ and $K$ is some absolute constant. 
\end{theorem}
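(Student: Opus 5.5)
Since this is a classical result imported from~\cite{tao2023topics}, I would follow the standard Fourier-analytic (Esseen smoothing) proof rather than a Lindeberg replacement argument. Lindeberg swapping with smooth test functions only gives an error of order $n^{-1/6}$ after smoothing the indicator $1_{(-\infty,a)}$. Throughout write $\rho \coloneqq \E|X|^3$ (note $\rho \ge 1$ by Hölder, since $\E X^2 = 1$). Write $\varphi(t) \coloneqq \E e^{itX}$ for the characteristic function of $X$. Then $S_n$ has characteristic function $\varphi_n(t) = \varphi(t/\sqrt n)^n$, while $Z$ has characteristic function $e^{-t^2/2}$. We may assume $\rho/\sqrt n$ is at most a small absolute constant, since otherwise the bound is trivial after enlarging $K$.

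\textbf{Step 1: the smoothing inequality.} I would first prove Esseen's inequality. Let $F$ and $G$ be the distribution functions of $S_n$ and $Z$, where $G$ has density bounded by $1/\sqrt{2\pi}$. Then for every $T>0$,
\[
\sup_a |F(a)-G(a)| \le C_1\int_{-T}^{T} \frac{|\varphi_n(t)-e^{-t^2/2}|}{|t|}\,dt + \frac{C_2}{T}.
\]
To prove it, convolve $F-G$ with a kernel whose Fourier transform is supported in $[-T,T]$, such as the Fejér-type density $\frac{1-\cos(Tx)}{\pi T x^2}$. Fourier inversion expresses the smoothed difference through the integral above. One then undoes the smoothing at a point where $|F-G|$ nearly attains its supremum, using the bounded density of $G$; this costs $O(1/T)$.

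\textbf{Step 2: the characteristic function estimate.} The key estimate, which I expect to be the main obstacle, is that for $|t| \le T \coloneqq c\sqrt n/\rho$ with small absolute $c$,
\[
|\varphi_n(t)-e^{-t^2/2}| \le C\,\frac{\rho |t|^3}{\sqrt n}\, e^{-t^2/4}.
\]
Taylor expansion with third-moment remainder gives $\varphi(s) = 1 - s^2/2 + \theta \rho|s|^3/6$ with $|\theta|\le 1$. I would split into two ranges of $t$.
\begin{itemize}
\item For $|t| \le c'\sqrt n\,\rho^{-1/3}$, the quantity $\varphi(t/\sqrt n)$ stays close to $1$. I would take logarithms and expand to get $n\log\varphi(t/\sqrt n) = -t^2/2 + O(\rho|t|^3/\sqrt n)$. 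Then $|e^{w}-e^{-t^2/2}| \le |w+t^2/2|\,e^{\max(\Re w,\,-t^2/2)}$ yields the claim, because the error term is at most $t^2/4$ in this range.
\item For the remaining range up to $T$, I would use the symmetrization trick. The bound $|\varphi(s)|^2 \le 1 - s^2 + \tfrac{4}{3}\rho|s|^3$, applied with $|s|\le c/\rho$, gives $|\varphi(t/\sqrt n)|^n \le e^{-t^2/3}$. Both $|\varphi_n(t)|$ and $e^{-t^2/2}$ are then bounded by $e^{-t^2/4}$. Moreover, in this range $\rho|t|^3/\sqrt n \gtrsim 1$, so the crude bound $2e^{-t^2/4}$ already has the required form.
\end{itemize}

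\textbf{Step 3: integration.} Plugging the estimate into Step 1 gives
\[
\int \frac{\rho t^2}{\sqrt n} e^{-t^2/4}\,dt = O\!\left(\frac{\rho}{\sqrt n}\right),
\]
and the boundary term is $C_2/T = O(\rho/\sqrt n)$. Hence $\Pr[S_n<a] \le \Pr[Z<a] + K\rho/\sqrt n$ uniformly in $a$. For the strict inequality one passes to left limits of $F$ and uses the continuity of $G$.
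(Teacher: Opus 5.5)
Your proposal is correct. It is the standard Fourier-analytic proof: Esseen's smoothing inequality together with the bound $|\varphi(t/\sqrt{n})^n - e^{-t^2/2}| \lesssim \rho\,|t|^3 n^{-1/2} e^{-t^2/4}$ on $|t| \le c\sqrt{n}/\rho$. This is essentially the argument of the cited reference; the paper itself gives no proof and simply imports the statement.

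One small slip: since $\rho \ge 1$, your cutoff $c'\sqrt{n}\,\rho^{-1/3}$ is already at least $T = c\sqrt{n}/\rho$ once $c \le c'$. So your second range is empty, and the claim that the error term is at most $t^2/4$ holds only because you are inside $|t| \le T$. The split you evidently intended, where $\rho|t|^3/\sqrt{n} \approx 1$, is at $|t| \approx n^{1/6}\rho^{-1/3}$. In fact the logarithmic argument alone already covers all of $|t| \le T$, so no split is needed.
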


\section{Our Counterexample}

 We will discuss a generic construction of a family of graphs for which $A_H \cdot  x$ is a Reed-Muller codeword; as such, in some noise range the distinguishing problem can be solved in polynomial time by the known results for decoding Reed-Muller codewords from random errors. 

What remains to be shown is that there is a graph in this class which is a $(T, 3/4)$-expander for some decent value of $T$. 

\subsection{Graphs with Reed-Muller decoding}

In order to use the theory of Reed-Muller codes for the distinguishing problem, we will use an additional structure of a constraint graph.

\begin{definition}[Coset graph] Given $k$ matrices $A_1, \ldots A_k \in \bF_2^{d \times \logConstraints}$ (where $d < \logConstraints$), the $(\logConstraints,k,d)$-coset graph associated with $(A_1, \ldots A_k)$ is the following:
\begin{itemize}
    \item The set of left vertices $[\constraints]$ is identified with points on a hypercube $\bF_2^\logConstraints$.
    \item The set of right vertices $[\variables]$ is identified with $[k] \times \bF_2^{d}$
    \item Left vertex $v \in \bF_2^\logConstraints$ is connected with the right vertex $(i, u) \in [k] \times \bF_2^{d}$ if $A_i \cdot v = u$.
\end{itemize}
Alternatively, for a collection $(V_1, \ldots V_k)$ of subspaces of dimension at least $\logConstraints-d$ in $\bF_2^\logConstraints$, the $(\logConstraints,k,d)$ coset graph associated with $(V_1, \ldots V_k)$ is the $(\logConstraints,k,d)$-coset graph associated with $(A_1, \ldots A_k)$ where matrices $A_i$ are such that $\ker A_i = V_i$ (the coset graph does not depend on the choice of matrices $A_i$, only on the subspaces $V_i$).
\end{definition}

We now observe that the code generated by the columns of the matrix $A_H$ forms a subcode of the Reed-Muller code $\RM(\logConstraints, d)$.

\begin{lemma}[Coset Graphs give RM Subcodes]
    \label{lem:coset-graph-is-subcode}
    Let $H$ be an $(\logConstraints, k, d)$-coset graph and let $A_H \in \mathbb{F}_2^{2^\logConstraints \times k \cdot 2^d}$ be the adjacency matrix.
    Then the column space $\mathsf{Im}(A_H) \subseteq \mathbb{F}_2^{2^{\logConstraints}}$ is a subcode of $\RM(\logConstraints, d)$.
\end{lemma}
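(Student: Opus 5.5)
It suffices to show that each column of $A_H$ lies in $\RM(\logConstraints, d)$, because the image of $A_H$ is the span of its columns and $\RM(\logConstraints, d)$ is a linear subspace of $\bF_2^{2^\logConstraints}$. We will identify a vector in $\bF_2^{2^\logConstraints}$ with a function $f: \bF_2^\logConstraints \to \bF_2$, indexed by the left vertices $v \in \bF_2^\logConstraints$. Under this identification, $\RM(\logConstraints, d)$ is the set of evaluation vectors of polynomials of degree at most $d$ in $\logConstraints$ variables.

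Fix a right vertex $(i,u) \in [k] \times \bF_2^d$. Its column in $A_H$ is the indicator function
\[ f_{i,u}(v) = \mathbf{1}[A_i v = u]. \]
Let $a_1, \ldots, a_d$ denote the rows of $A_i$. The condition $A_i v = u$ is exactly the conjunction of the $d$ affine equations $\langle a_j, v\rangle = u_j$. Over $\bF_2$ we have $\mathbf{1}[\langle a_j, v\rangle = u_j] = 1 + u_j + \langle a_j, v\rangle$, which is a polynomial of degree at most $1$ in $v$. Hence
\[ f_{i,u}(v) = \prod_{j=1}^{d} \bigl(1 + u_j + \langle a_j, v\rangle\bigr), \]
a product of $d$ affine forms, so it is a polynomial of degree at most $d$. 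Therefore $f_{i,u} \in \RM(\logConstraints, d)$.

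This argument works even if the $A_i$ are not of full rank. In that case a column may be the all-zero vector (when $u \notin \mathsf{Im}(A_i)$), and the zero vector trivially lies in the code. For the subspace formulation, choose any $A_i$ with $\ker A_i = V_i$, which gives the same graph. There is no real obstacle here; the only point to check is the degree count, namely that the coset $\{v : A_i v = u\}$ is cut out by $d$ affine equations, so its indicator has degree at most $d$ rather than at most $\logConstraints$.
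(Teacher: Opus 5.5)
Your proposal is correct and follows essentially the same route as the paper: each column is the indicator of the coset $\{v : A_i v = u\}$, written as a product of at most $d$ affine forms $1 + u_j + \langle a_j, v\rangle$, so it has degree at most $d$ and lies in $\RM(\logConstraints, d)$. Taking the product over all $d$ rows, rather than only $\rank(A_i)$ of them as the paper does, also handles the empty-coset case ($u \notin \mathsf{Im}(A_i)$) automatically.
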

\begin{proof}
We will show that every column of $A_H$ is a $\RM(\logConstraints, d)$ codeword, i.e. an evaluation vector of a multilinear polynomial of degree at most $d$ on all of $\bF_2^\logConstraints$.
Every column of $A_H$ corresponds to a right vertex of $H$ given by a pair $(i, u)$ with $i \in [k]$ and $u \in \bF_{2}^{d}$.
Its neighborhood of left vertices is given by the affine subspace
\[W_{i, u} \coloneqq \{ v \in \bF_2^{\logConstraints} : A_i \cdot v = u \}.\]
where by the definition of the $(\logConstraints, k, d)$-coset graph, we have $\rank(A_i) \leq d$.
In the adjacency matrix $A_H$, for every $v \in \bF_2^{\logConstraints}$, the $v$-th entry of the $(i, u)$-column of $A_H$ is $1$ if and only if $A_i \cdot  v = u$.
Observe that this subspace can be characterized by at most $d$ many affine constraints of the following form for linear functions $\ell_j : \bF_{2}^\constraints \to \bF_{2}$, given by the $j$-th row of $A_i$, and scalar values $u_j$ over $\bF_2$:
\[\left \{ \ell_j(X) + u_j = 0  \right \}_{j \in [\rank(A_i)]}.\]
Therefore, the evaluation vector is exactly that of a degree $\rank(A_i) \leq d$ indicator polynomial for the subspace $W_{i, u}$:
\[\mathbf{1}_{i, u}(X) = \prod_{j \in [\rank(A_i)]} \left( \ell_j(X) + u_j + 1\right).\]
Since every column of $A_H$ is a codeword of $\RM(\logConstraints, d)$, the column span $\Image(A_H)$ is contained in $\RM(\logConstraints, d)$.
\end{proof}

Therefore, if we can decode the Reed-Muller code up to the $\eta$ fraction of random errors, we can distinguish the $A_H \cdot x + e$ vector from a random one for any coset graph $H$.

Let us first see what we can obtain using the unique decoding of the $\RM$ codes.
\begin{fact}
    The distance of the Reed-Muller code $\RM(\logConstraints, r)$ is $\Delta = 2^{\logConstraints-r}$. Relative distance is $\delta = 2^{-r}$.
\end{fact}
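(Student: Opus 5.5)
The statement to prove is the classical fact that $\RM(\logConstraints, r)$ has minimum distance $\Delta = 2^{\logConstraints - r}$, so its relative distance is $\delta = 2^{-r}$. Since the code is linear, its distance equals the minimum Hamming weight of a nonzero codeword. I will therefore show two things. First, some degree-$\le r$ polynomial has exactly $2^{\logConstraints-r}$ nonzero points on $\bF_2^\logConstraints$. Second, every nonzero multilinear polynomial of degree $\le r$ has at least that many nonzero points.

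For the upper bound I take the monomial $x_1 x_2 \cdots x_r$. Its evaluation vector is the indicator of the subcube $\{x_1=\dots=x_r=1\}$, which has exactly $2^{\logConstraints-r}$ points. This is the same indicator-of-affine-subspace polynomial used in Lemma~\ref{lem:coset-graph-is-subcode}.

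For the lower bound I induct on $\logConstraints$, with $r$ ranging over $0,\dots,\logConstraints$.
\begin{itemize}
\item Base cases. When $r=0$, a nonzero constant is nonzero on all $2^\logConstraints$ points. When $r = \logConstraints$, a nonzero function has at least one nonzero point, and $2^{0}=1$.
\item Inductive step. Write $f(x) = g(x') + x_\logConstraints h(x')$, where $x'=(x_1,\dots,x_{\logConstraints-1})$, $\deg g \le r$ and $\deg h \le r-1$. The restrictions of $f$ to the hyperplanes $x_\logConstraints=0$ and $x_\logConstraints=1$ are $f_0 = g$ and $f_1 = g+h$, so $\wt(f) = \wt(g) + \wt(g+h)$.
\item Case $h = 0$. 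Then $g \neq 0$, and since $\deg g \le r$ on $\logConstraints-1$ variables, induction gives $\wt(f) = 2\wt(g) \ge 2\cdot 2^{\logConstraints-1-r}$. (If $r > \logConstraints - 1$, a nonzero $g$ still has weight at least $1 = 2^{\logConstraints-r}/\ldots$; more precisely $r=\logConstraints$ is already the base case.)
\item Case $h \neq 0$. By the triangle inequality, $\wt(g)+\wt(g+h) \ge \wt(h)$. Since $\deg h \le r-1$ on $\logConstraints-1$ variables, induction gives $\wt(h) \ge 2^{(\logConstraints-1)-(r-1)} = 2^{\logConstraints-r}$.
\end{itemize}

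The relative distance then follows by dividing $\Delta$ by the blocklength $2^\logConstraints$. The only delicate point is the degree bookkeeping in the decomposition, namely that $\deg h \le r-1$. This holds because every monomial containing $x_\logConstraints$ contributes a monomial of degree one less to $h$.
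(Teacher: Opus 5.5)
Your proof is correct and complete. The upper bound via the monomial $x_1\cdots x_r$, together with the lower bound via the $(u \mid u+v)$ decomposition $f = g + x_{\logConstraints} h$ and the triangle inequality, is the standard argument; the paper does not reproduce it but simply invokes the result as a known Fact. The only blemish is the garbled parenthetical in the case $h=0$, which you can delete, since $r=\logConstraints$ is already a base case and for $r\le \logConstraints-1$ the inductive hypothesis applies to $g$ directly.
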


Since we can uniquely decode $\RM$ codes up to an error rate $\delta/2$, we have the following.
\begin{corollary}
    For every $(\logConstraints,k,d)$-coset graph, the associated XOR-distinguishing problem can be solved in polynomial time when $\eta < \delta / 2 \approx \frac{k}{\variables}$. 
\end{corollary}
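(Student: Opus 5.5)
The plan is to run the unique decoder of $\RM(\logConstraints, d)$ on the input $y$ and accept exactly when it returns a codeword close to $y$. By \Cref{lem:coset-graph-is-subcode}, in the planted case $A_H\cdot x \in \Image(A_H) \subseteq \RM(\logConstraints, d)$, so $y = c + e$ with $c$ a codeword of relative distance $\delta = 2^{-d}$. The distinguisher first runs a polynomial-time unique decoder for Reed--Muller codes, for example Reed's majority-logic decoder, which corrects any error pattern of weight below $\Delta/2 = 2^{\logConstraints - d - 1}$ in time $\poly(\constraints)$. It then outputs ``planted'' if and only if the decoder returns a codeword $c'$ with $\wt(y - c') < \Delta/2$. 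Checking membership in the subcode $\Image(A_H)$ is an optional extra test and is not needed.

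For the planted case, the weight of $e$ is a $\mathrm{Bin}(\constraints, \eta)$ variable. If $\eta < \delta/2$ by a constant factor, say $\eta \le (1-\epsilon)\delta/2$, then a Chernoff bound gives $\wt(e) < \delta\constraints/2$ with probability $1 - \exp(-\Omega(\epsilon^2 \eta \constraints))$. On that event the decoder returns $c$ and the test accepts. Here $\eta\constraints$ is large in any nontrivial regime, and a constant-factor slack can be absorbed into the ``$<$'' of the statement.

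For the null case, accepting means that $y$ lies within distance $\Delta/2$ of some codeword. The number of such words is at most $|\RM(\logConstraints,d)| \cdot \binom{\constraints}{\le \delta\constraints/2}$. By \Cref{fct:hamming-volume} this is at most
\[
2^{\binom{\logConstraints}{\le d} + \constraints H_2(\delta/2)}.
\]
Dividing by $2^{\constraints}$ shows the acceptance probability is $o(1)$ whenever $\binom{\logConstraints}{\le d}/\constraints + H_2(2^{-d-1}) < 1 - \Omega(1)$. This holds because $d < \logConstraints$ is in the low-rate regime relevant here, and $H_2(2^{-d-1}) \le H_2(1/4) < 1$. This step of verifying that the code is far from covering the whole space is the one I expect to need the most care. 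If $d$ were close to $\logConstraints$, I would instead rely on membership in the smaller subcode $\Image(A_H)$, which has dimension at most $\variables \ll \constraints$, and count $2^{\variables}\binom{\constraints}{\le \delta \constraints/2}$ instead.

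Finally, I would relate $\delta$ to the graph parameters. The graph has $\variables = k\cdot 2^d$ right vertices, so $\delta = 2^{-d} = k/\variables$, which gives the stated threshold $\eta < \delta/2 \approx k/\variables$. Combining the two cases yields a distinguishing advantage of $1-o(1) > 1/10$ in polynomial time.
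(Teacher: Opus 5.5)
Your proposal is correct and follows the paper's route. \Cref{lem:coset-graph-is-subcode} places $A_H x$ in $\RM(\logConstraints,d)$, unique decoding up to radius $\delta/2$ handles the planted case, $\variables = k2^d$ gives $\delta = 2^{-d} = k/\variables$, and your union bound over codewords for the null case supplies the step the paper leaves implicit here (the paper uses the same counting in the proof of \Cref{thm:decoding-coset-graph}).

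One caveat: the subcode check is not actually optional. For $d$ very close to $\logConstraints$, for example $d = \logConstraints - 2$ with $k = 3$, $\RM(\logConstraints,d)$ is the extended Hamming code, and its radius-$1$ decoding balls cover about half of $\bF_2^{\constraints}$. So in that regime you genuinely need the check $\hat c \in \Image(A_H)$ and your fallback count $2^{\variables}\binom{\constraints}{\le \delta\constraints/2}$.
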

Note that the largest $\alpha$-expanding set is in a $k$-left regular bipartite graph on $[\constraints] \cup [\variables]$ is of size proportional to $\variables/k$. Even in the most optimistic scenario with respect to the expansion of random $(\logConstraints,k,d)$-coset graphs, this falls (barely) short of disproving the \Cref{conj:weak}, as we would prefer the noise rate to be at least larger by a $\log(\variables)$ factor. This is not particularly helpful; we can try to improve on that in the following way.

\subsection{Known Results in RM Decoding from Random Erasures}
The approach for distinguishing a randomly corrupted codeword from a uniformly random string will leverage the fact that corruptions are introduced at random. 
Several prior works~\cite{ASW14,SSV15,SS18} have studied decoding Reed-Muller codes with random errors (see also \cite{Journal:ASY21} for an exposition of recent results on Reed-Muller codes). 
Concretely, \cite{SSV15} provides an algorithmic result for decoding Reed-Muller codes from random errors of a rate vastly exceeding the distance of the code. Specifically, they show the following.

\begin{theorem}[\cite{SSV15}]
    \label{thm:efficient-decoding}
    There is an efficient\footnote{By efficient, we mean polynomial time in the block length $2^\logConstraints$.} algorithm that corrects a pattern $P \subset [\constraints]$ of corruptions in $\RM(\logConstraints,r)$ codeword, if the corresponding codeword can be recovered from the same pattern of \emph{erasures} $P$ in a Reed-Muller code of higher degree $\RM(\logConstraints, \frac{\logConstraints+r}{2})$.
\end{theorem}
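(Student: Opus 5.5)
The plan is to use a syndrome-decoding argument in the style of \cite{SSV15}. First I locate the corruption pattern $P$ using only the syndrome of the received word. Then I invoke erasure decoding on the located pattern.

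\textbf{Syndromes.} Write the received word as $y = c + \mathbf{1}_P$ with $c \in \RM(\logConstraints, r)$. The dual code is $\RM(\logConstraints,r)^\perp = \RM(\logConstraints, \logConstraints - r - 1)$. So for every polynomial $f$ of degree at most $\logConstraints - r - 1$ we can compute, in time $\poly(2^\logConstraints)$,
\[ \sum_{x} y(x) f(x) = \sum_{u \in P} f(u). \]
Set $t := \lfloor (\logConstraints - r - 2)/2 \rfloor$. Then $2t+1 \leq \logConstraints - r - 1$, so we know all "power sums" $\sum_{u\in P} g(u)$ for $\deg g \leq 2t+1$. For a point $z$, let $v_t(z)$ denote its vector of monomial evaluations of degree at most $t$. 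Let $E$ be the $|P| \times \binom{\logConstraints}{\le t}$ matrix with rows $v_t(u)$, $u\in P$.

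The hypothesis is that $P$ is erasure-recoverable in $\RM(\logConstraints, \logConstraints - t - 1)$, which equals $\RM(\logConstraints, \approx\frac{\logConstraints + r}{2})$ up to the rounding in the statement. This means no nonzero codeword of that code is supported inside $P$. By duality, this is equivalent to every function on $P$ being the restriction of a degree-$\le t$ polynomial, i.e., the rows of $E$ are linearly independent.

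\textbf{Locating $P$.} For each candidate $z \in \bF_2^\logConstraints$, the algorithm asks whether there is a polynomial $f$ with $\deg f \leq t$ satisfying two families of identities. The first family is
\[ \sum_{u\in P} f(u) h(u) = h(z) \quad \text{for all } h \text{ with } \deg h \le t. \]
The second family is the same identity with $h$ replaced by $h\cdot(1+u_i+z_i)$, for every coordinate $i$.

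All these sums involve polynomials of degree at most $2t+1$, so their coefficients are known syndromes. Hence this is a linear system in the coefficients of $f$, solvable by Gaussian elimination.

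If $z\in P$, row independence of $E$ gives an $f$ with $f|_P = \mathbf{1}_z$, and this $f$ satisfies every identity. Conversely, suppose a solution exists and put $a := f|_P$. Subtracting the two families gives
\[ \sum_{u \in P} a(u)(u_i+z_i) v_t(u) = 0 \quad \text{for each } i, \]
and, since the factor $(1+u_i+z_i)$ evaluates to $1$ at $z$, also $\sum_{u\in P} a(u) v_t(u) = v_t(z)$. Independence of the rows of $E$ then forces $a(u)(u_i+z_i)=0$ for all $i$, so $a$ is supported on $\{z\}\cap P$. Since the constant monomial gives $v_t(z)\neq 0$, $a$ is nonzero, and therefore $z\in P$.

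\textbf{Finishing.} Having recovered $P$ exactly, erase those coordinates of $y$. The hypothesis gives erasure recoverability in the larger code $\RM(\logConstraints, \frac{\logConstraints+r}{2}) \supseteq \RM(\logConstraints,r)$, so recoverability in $\RM(\logConstraints,r)$ follows. Erasure decoding of a linear code is just solving a linear system, so $c$ is recovered in polynomial time.

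I expect the main obstacle to be the membership test. A naive test, asking only whether $v_t(z) \in \mathrm{span}\{v_t(u) : u\in P\}$, gives false positives. The multiplication by the linear factors $(1+u_i+z_i)$ is what rules them out. The price is one extra degree in the syndromes, so I must check that the parameter bookkeeping $2t+1 \le \logConstraints-r-1$ together with $\logConstraints-t-1 \le \frac{\logConstraints+r}{2}$ matches the statement up to the rounding it suppresses.
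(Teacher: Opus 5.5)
Your proposal is correct, and it is essentially the Saptharishi--Shpilka--Volk argument behind \cite{SSV15}, which the paper imports by citation without giving a proof: compute syndromes from the dual code, use the duality between erasure-recoverability of $P$ in $\RM(\logConstraints,\logConstraints-t-1)$ and linear independence of the vectors $v_t(u)$ for $u\in P$, and locate each error $z$ by testing whether a degree-$\le t$ polynomial can act as $\mathbf{1}_z$ on $P$, with the linear factors $(u_i+z_i)$ ruling out false positives at one extra degree. When $\logConstraints+r$ is even, which is the only case the paper uses, your bookkeeping matches the statement exactly; and once $P$ is located, the final erasure-decoding step can be replaced by simply outputting $y+\mathbf{1}_P$, although doing it your way is harmless.
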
 

This reduces the question of efficient correction of random corruptions, to a purely analytical question of understanding what fraction of erasures the Reed-Muller code can reconstruct under the binary erasure channel. 
To this end, the work of Kudekar et al. implies the following theorem statement.

\begin{theorem}[Random Erasure Recovery~\cite{kkmpsu16}]
    \label{thm:RM-random-erasure}
    For every $\logConstraints, r \in \mathbb{N}$ such that $\RM(\logConstraints, r)$ has rate at most $1-\epsilon$,  a random erasure pattern sampled from $(\mathsf{Ber}(\eta))^{2^\logConstraints}$ is recoverable\footnote{By recoverable we mean that the codeword is information-theoretically determined. For any linear code this also implies that it can be efficiently recovered by solving a linear system.} with probability $1 - o(1)$ 
    provided that $\eta < \varepsilon - o(1)$.
\end{theorem}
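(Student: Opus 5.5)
The plan is to follow the argument of Kudekar, Kumar, Mondelli, Pfister, \c{S}a\c{s}o\u{g}lu and Urbanke. It combines three ingredients: an \emph{area theorem} (EXIT functions), the \emph{symmetry} of Reed-Muller codes under the affine group, and a \emph{sharp threshold theorem} for monotone Boolean properties invariant under a transitive group.

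Fix $\RM(\logConstraints,r)$ with blocklength $\constraints=2^{\logConstraints}$ and rate $R\le 1-\varepsilon$. For each coordinate $i\in\bF_2^{\logConstraints}$, let $\mathcal{E}_i$ be the event, over the erasure set $S\sim\mathsf{Ber}(p)^{\constraints}$, that bit $i$ is \emph{not} determined by the unerased coordinates outside $i$. Equivalently, some codeword that is nonzero at $i$ vanishes off $S\cup\{i\}$. Set $h_i(p)=\Pr[\mathcal{E}_i]$. The property $\mathcal{E}_i$ is monotone increasing in $S$. Since the affine group is doubly transitive on $\bF_2^{\logConstraints}$ and preserves the code, all $h_i$ coincide; call the common value $h(p)$. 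Moreover, $\mathcal{E}_0$ is invariant under the stabilizer of $0$, namely $GL_{\logConstraints}(\bF_2)$, which acts transitively on the remaining coordinates.

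Next, apply the standard area theorem for linear codes on the BEC: $\int_0^1 h(p)\,dp = R$. This follows from the chain rule for $H(X\mid Y)$ together with the Russo--Margulis formula $\frac{d}{dp}H(X\mid Y(p))=\sum_i h_i(p)$, and $H(X\mid Y(1))=\dim$. Since $h$ is monotone from $0$ to $1$, it must cross $1/2$ at $p^\ast\approx 1-R\ge\varepsilon$, up to the width of the transition. The sharp threshold theorem of Bourgain--Kalai (or Friedgut--Kalai) for monotone properties invariant under a transitive group on $\constraints-1$ points then says $h$ goes from $\delta$ to $1-\delta$ within a window of width $o(1)$. In the Bourgain--Kalai version for groups such as $GL_{\logConstraints}$, the width is about $O(1/\log^{1-o(1)}\constraints)$ or better. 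Consequently, $h(p)=o(1)$ for $p<\varepsilon-o(1)$, which gives vanishing bit-erasure probability.

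The main obstacle is upgrading bit recovery to \emph{block} recovery: $\Pr[\text{codeword not determined}]\le\sum_i h_i(p)=\constraints\, h(p)$, which requires $h(p)=o(1/\constraints)$, not just $o(1)$. My first attempt is to push the sharp-threshold bound quantitatively. For properties invariant under a group whose orbits are large, Bourgain--Kalai gives a window such that $h$ decays faster than any fixed polynomial once $p$ is a slightly larger constant gap below threshold. It would then suffice to lower the erasure rate by an extra $o(1)$ and absorb this into the $o(1)$ slack in the statement. As a fallback, I would use the nested structure of RM codes, following the original paper: bit-erasure-vanishing for a slightly higher-rate code $\RM(\logConstraints,r')\supseteq\RM(\logConstraints,r)$ with $r'-r$ small, combined with a counting argument over low-weight dual-type patterns, bounds the block error of the smaller code. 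The rate loss from passing to $r'$ is $o(1)$ by Fact~\ref{fct:hamming-volume}-type binomial estimates (or Theorem~\ref{thm:berry-esseen}), so it is again absorbed into the $o(1)$ in the condition $\eta<\varepsilon-o(1)$. Finally, recoverability is information-theoretic, and for a linear code it is realized efficiently by solving a linear system.
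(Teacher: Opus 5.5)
The paper does not prove this statement itself; it imports it from Kudekar et al. Your outline is essentially their argument and is correct: the area theorem, affine invariance making all extrinsic erasure functions equal (with the one at $0$ invariant under $GL_{\logConstraints}(\bF_2)$), a sharp threshold placing the transition at $1-R\ge\varepsilon$, and the union bound $\constraints\, h(p)$ for block recovery. The step worth making explicit is that Friedgut--Kalai alone drives $h$ below $1/\constraints$ only after a constant gap below threshold, whereas the Bourgain--Kalai total-influence bound of order $(\log\constraints)^{2-o(1)}$ for $GL_{\logConstraints}(\bF_2)$-invariant monotone functions gives $h(p)=o(1/\constraints)$ already at distance $(\log\constraints)^{-1+o(1)}=o(1)$; that is exactly what the $o(1)$ slack absorbs, so your nested-code fallback is not needed.
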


Armed with these two theorems, we are ready to prove that this decoder leads to a distinguisher.
Before doing so in Section~\ref{sec:decoding}, we first identify an exact expanding family of coset graphs.

\subsection{GUV Expander from PV Codes}

We will use a brilliant construction of~\cite{GUV09} of an unbalanced expander graph from Parvaresh-Vardy codes.
First, we will show that this expander graph family is a $(\logConstraints, k, d)$ coset graph, giving a Reed-Muller subcode.
Then, we will give two parameter regimes of interest---the first case being when $d = O(\sqrt{\logConstraints})$ and the second being when $d = O(\logConstraints)$.
These two settings correspond respectively to the settings of a quasi-polynomial number of constraints and polynomial number of constraints.
\\

For fixed $E \in \mathbb{F}_Q[X]_{=\pvDeg+1}$ an irreducible polynomial of degree $\pvDeg+1$, and a prime power $Q$, we define a $(E, Q, \pvDeg, \pvLen, h)$-GUV graph as follows:
\begin{itemize}
\item The set of left vertices is identified with the set of all polynomials over $\mathbb{F}_Q$ of degree at most $\pvDeg$: $\mathbb{F}_Q[X]_{\leq \pvDeg} \approx \mathbb{F}_Q^{\polyLength}$. 
\item The set of right vertices is identified with $\mathbb{F}_Q \times \mathbb{F}_Q^\pvLen$.
\item Each left-vertex has exactly $Q$ neighbors: one in each $\{y\} \times \mathbb{F}_Q^\pvLen$ for every $y \in \mathbb{F}_Q$.
\item Specifically, for a polynomial $f$, and $y \in \mathbb{F}_Q$, the $y$ neighbor of $f$ is $(y, f_0(y), \ldots f_{\pvLen-1}(y))$ where $f_i := f^{h^i} \bmod E$.
\end{itemize}
 
Note that when $h$ and $Q = 2^{q}$ are powers of two, we can identify $\mathbb{F}_Q^{\polyLength}$ with $\mathbb{F}_2^{q \polyLength }$ while preserving the additive structure. In this sense $f \mapsto f^{2^i}$ is $\mathbb{F}_2$ linear and $f \mapsto f \pmod E$ is $\mathbb{F}_Q$ linear (hence, also $\mathbb{F}_2$ linear), as is the evaluation map $f \mapsto f(y)$, so the composition from $\mathbb{F}_2^{q \polyLength} \to \mathbb{F}_2^{(\pvLen+1)q}$ sending $f$ to $(y,f_0(y), \ldots f_{\pvLen-1}(y))$ is $\mathbb{F}_2$ linear: a preimage of a point is an affine subspace of codimension at most $(\pvLen+1)q$. 
\begin{fact}
\label{fct:guv-is-coset}
    Let $Q = 2^q$ and $h = 2^t$ be powers of two for $q, t \geq 1$.
    Then every $(E, Q, \pvDeg, \pvLen, h)$-GUV graph is an $(\logConstraints,k,d)$-coset graph with $\logConstraints = (\pvDeg + 1)q$, $k = Q$ and $d = \pvLen \cdot q$.
\end{fact}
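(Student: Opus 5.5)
We need to show that each GUV graph is a coset graph with the stated parameters. The plan is to exhibit, for each $y \in \mathbb{F}_Q$, an $\mathbb{F}_2$-linear map $A_y$ whose fibers are exactly the neighborhoods of the right vertices in the block $\{y\}\times\mathbb{F}_Q^{\pvLen}$.

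First, fix an $\mathbb{F}_2$-basis of $\mathbb{F}_Q$. This identifies $\mathbb{F}_Q \cong \mathbb{F}_2^q$ and the left vertex set $\mathbb{F}_Q[X]_{\leq \pvDeg}\cong \mathbb{F}_Q^{\pvDeg+1}$ with $\mathbb{F}_2^{(\pvDeg+1)q}$. This gives $\logConstraints=(\pvDeg+1)q$, and the identification respects addition. Index the $k=Q$ blocks of right vertices by $y\in\mathbb{F}_Q$, via any bijection with $[Q]$. Identify each block $\{y\}\times \mathbb{F}_Q^{\pvLen}$ with $\mathbb{F}_2^{\pvLen q}$, so that the right vertex set is $[k]\times\mathbb{F}_2^{d}$ with $d=\pvLen q$.

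Second, for fixed $y$, define $A_y:\mathbb{F}_2^{\logConstraints}\to\mathbb{F}_2^{d}$ by $f\mapsto (f_0(y),\dots,f_{\pvLen-1}(y))$, where $f_i=f^{h^i}\bmod E$. The key step is to check that $A_y$ is $\mathbb{F}_2$-linear, by factoring it as a composition of $\mathbb{F}_2$-linear maps.
\begin{itemize}
\item The map $f\mapsto f^{h^i}$ is $\mathbb{F}_2$-linear on $\mathbb{F}_Q[X]$. Indeed, $h^i=2^{ti}$, and the Frobenius map $g\mapsto g^2$ is additive in characteristic $2$, since $(g_1+g_2)^2=g_1^2+g_2^2$. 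Iterating, $g\mapsto g^{2^m}$ is additive, and it fixes the scalars $0,1\in\mathbb{F}_2$, so it is $\mathbb{F}_2$-linear.
\item Reduction modulo $E$ is $\mathbb{F}_Q$-linear.
\item Evaluation at $y$ is $\mathbb{F}_Q$-linear.
\end{itemize}
Hence each coordinate $f\mapsto f_i(y)\in\mathbb{F}_Q\cong\mathbb{F}_2^q$ is $\mathbb{F}_2$-linear, and stacking the $\pvLen$ coordinates gives a matrix $A_y\in\mathbb{F}_2^{d\times \logConstraints}$.

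Third, compare adjacencies. By the GUV definition, $f$ is adjacent to $(y,u)$ exactly when $u=(f_0(y),\dots,f_{\pvLen-1}(y))$, that is, when $A_y f=u$. This is precisely the adjacency rule of the coset graph associated with $(A_y)_{y\in\mathbb{F}_Q}$. Each left vertex then has exactly one neighbor per block, so the graph is $Q$-left-regular, consistent with $k=Q$.

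The only subtlety is the requirement $d<\logConstraints$ in the coset graph definition, which amounts to $\pvLen<\pvDeg+1$. This holds in the GUV parameter regimes used later; otherwise the statement should be read with this condition assumed. The main conceptual point is the additivity of Frobenius in characteristic $2$; everything else is bookkeeping.
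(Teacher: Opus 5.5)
Your proposal is correct and follows essentially the same route as the paper: fix an $\mathbb{F}_2$-basis of $\mathbb{F}_Q$, show each map $L_y$ is $\mathbb{F}_2$-linear using additivity of Frobenius in characteristic $2$ together with $\mathbb{F}_Q$-linearity of reduction mod $E$ and of evaluation at $y$, and read off the edge condition $A_y f = z$. Your further point, that the coset-graph requirement $d<\logConstraints$ amounts to $\pvLen \le \pvDeg$ and holds for the parameters of Corollary~\ref{cor:expanding-coset-family}, is a detail the paper's proof leaves implicit.
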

\begin{proof}
    Fix an $\mathbb{F}_2$-basis of $\mathbb{F}_Q$ where we view $\mathbb{F}_Q \cong \mathbb{F}_2^{q}$ as a $\mathbb{F}_2$-vectorspace. 
    Then the set $\mathbb{F}_Q[X]_{\leq \pvDeg}$ is a $\mathbb{F}_2$-vectorspace of dimension $\logConstraints \coloneqq (\pvDeg+1)q$.
    Consider the map
    \begin{align*}
    &L_y : \mathbb{F}_Q [X]_{\leq \pvDeg} \to \mathbb{F}_Q^{\pvLen}, &f \mapsto (f_0(y), \ldots, f_{\pvLen-1}(y))
    \end{align*}
    where $f_i \coloneqq f^{h^{i}} \bmod E$.
    Then the GUV graph has edges between left vertex $f \in \mathbb{F}_Q[X]_{\leq \pvDeg}$ and right vertex $(y \in \mathbb{F}_Q, z \in \mathbb{F}_Q^{\pvLen})$ if and only if $L_y(f) = z$.

    Observe that $f \mapsto f^{h^{i}}$ is $\mathbb{F}_2$-linear because for any integer $\ell \geq 0$ the following holds in characteristic two field arithmetic:
    \begin{align*}
        (f + g)^{2^{\ell}} &= f^{2^{\ell}} + \left ( \sum_{i=1}^{2^{\ell} - 1} \binom{2^{\ell}}{i} f^{2^{\ell} - i} g^{i} \right )+ g^{2^{\ell}} \\
        &= f^{2^{\ell}} +  g^{2^{\ell}}.
    \end{align*}
    Then recall that modular reduction by irreducible $E$ and evaluation at a fixed point $y$  are both $\mathbb{F}_Q$-linear, therefore $\mathbb{F}_2$-linear.
    Therefore, we can represent $L_y$ by a binary matrix $A_y \in \mathbb{F}_2^{\pvLen \cdot q \times \logConstraints}$.
    Then, for any left vertex $f$ written in $\mathbb{F}_2^{(\pvDeg+1)q}$ representation and right vertex $z$ written in $\mathbb{F}_2^{\pvLen \cdot q}$ representation, we have the edge condition above given exactly by the condition $A_y \cdot f = z$.
    These matrices $\{A_y\}_{y \in \mathbb{F}_2^{q}}$ define a coset graph with $\logConstraints = (\pvDeg+1)q, k = Q$, $d = \pvLen \cdot q$.
\end{proof}

The expansion of GUV graphs has been famously analyzed by~\cite{GUV09} (see also \cite[Theorem 5.35]{vadhan2012}).
\begin{theorem}[\cite{GUV09}]
\label{thm:guv-expansion}
    Every $(E, Q, \pvDeg, \pvLen, h)$-GUV graph is an $(T, \alpha)$-expander for $T = h^\pvLen$ and $\alpha = 1 - \frac{\pvDeg \cdot \pvLen \cdot h}{Q}$.
\end{theorem}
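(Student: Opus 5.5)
The plan is to reprove the Guruswami--Umans--Vadhan expansion bound (cf.\ \cite[Theorem 5.35]{vadhan2012}) by the Parvaresh--Vardy polynomial method, recast as a list-size bound. Set $A \coloneqq Q - \pvDeg\pvLen h$. Since every left vertex has degree $k = Q$, the claim $\alpha = 1 - \pvDeg\pvLen h/Q$ is the same as $|N(S)| \geq A|S|$ for every left set $S$ with $|S| = K \leq h^\pvLen$. We argue by contradiction. Suppose some $S$ with $|S| = K \le h^\pvLen$ has $|N(S)| < AK$. Put $T \coloneqq N(S) \subseteq \mathbb{F}_Q \times \mathbb{F}_Q^\pvLen$. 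Then every $f \in S$ lies in $\mathrm{LIST}(T) \coloneqq \{f : (y, f_0(y),\dots,f_{\pvLen-1}(y)) \in T \ \forall y\}$. It therefore suffices to prove $|\mathrm{LIST}(T)| \leq K-1$ whenever $|T| \leq AK - 1$.

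\emph{Step 1 (interpolation).} Identify each exponent vector $(i_0,\dots,i_{\pvLen-1})$ with $0\le i_j\le h-1$ with the integer $\sum_j i_j h^j < h^\pvLen$. Keep only the monomials $Y_0^{i_0}\cdots Y_{\pvLen-1}^{i_{\pvLen-1}}$ whose integer is less than $K$; there are exactly $K$ of them, because base-$h$ digits give a bijection onto $\{0,\dots,K-1\}$. Multiply these by powers $X^a$ with $a \leq A-1$, which gives $AK > |T|$ unknown coefficients. Hence there is a nonzero $P(X, Y_0,\dots,Y_{\pvLen-1})$ in this span vanishing on all of $T$. Dividing out any factor of $E(X)$ common to all coefficients, we may assume $P \not\equiv 0 \bmod E$; this keeps the $X$-degree at most $A-1$, and the quotient still vanishes on $T$ because $E$ is irreducible of degree $\pvDeg+1 \ge 1$, so $E$ has no roots in $\mathbb{F}_Q$.

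\emph{Step 2 (agreement forces a root).} Fix $f \in \mathrm{LIST}(T)$. The univariate polynomial $g(X) \coloneqq P(X, f_0(X), \dots, f_{\pvLen-1}(X))$ has degree at most $(A-1) + \pvLen(h-1)\pvDeg < Q$, since each $f_i = f^{h^i}\bmod E$ has degree at most $\pvDeg$. It vanishes on all $Q$ points of $\mathbb{F}_Q$, so $g \equiv 0$. Now pass to $\mathbb{F} \coloneqq \mathbb{F}_Q[X]/(E)$, a field of size $Q^{\pvDeg+1}$. In $\mathbb{F}$ we have $f_i \equiv f^{h^i}$, so $f$ is a root of $R(Y) \coloneqq P(X, Y, Y^h, \dots, Y^{h^{\pvLen-1}}) \bmod E \in \mathbb{F}[Y]$. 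The map $f \mapsto f \bmod E$ is injective on polynomials of degree $\le \pvDeg$, so distinct elements of $\mathrm{LIST}(T)$ give distinct roots of $R$.

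\emph{Step 3 (nonvanishing and degree of $R$).} Under the substitution $Y_j \mapsto Y^{h^j}$, distinct retained monomials go to distinct powers $Y^{\sum_j i_j h^j}$, all with exponent at most $K-1$. So no cancellation occurs. The coefficient of each power is the reduction mod $E$ of the corresponding $X$-coefficient of $P$, and at least one of these is nonzero by the normalization in Step 1. Hence $R \neq 0$ and $\deg R \leq K-1$, which gives $|\mathrm{LIST}(T)| \leq K-1 < K = |S|$. This contradicts $S \subseteq \mathrm{LIST}(T)$, so $|N(S)| \ge A|S| = (1 - \pvDeg\pvLen h/Q)\, k\,|S|$.

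The main obstacle is making the argument work uniformly for \emph{all} sizes $K \leq h^\pvLen$, rather than only for $K = h^\pvLen$. The full box of individual degrees $\le h-1$ would bound $\deg R$ only by $h^\pvLen - 1$, which is too weak for small $K$. The fix is the downward-closed choice of exactly $K$ monomials by base-$h$ value, which bounds $\deg R$ by $K-1$. A secondary point to check carefully is the normalization $P \not\equiv 0 \pmod E$, which is what guarantees $R \neq 0$.
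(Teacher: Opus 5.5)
Your proof is correct and is essentially the Guruswami--Umans--Vadhan argument that the paper imports by citation without reproving it (\cite{GUV09}, \cite[Theorem 5.35]{vadhan2012}): interpolate a polynomial vanishing on $N(S)$ using only the $AK$ monomials whose $Y$-part has base-$h$ value below $K$, strip common factors of $E$, and count roots of the reduced univariate polynomial over $\mathbb{F}_Q[X]/(E)$. One small slip: an irreducible $E$ of degree $1$ does have a root in $\mathbb{F}_Q$, so your normalization step needs $\pvDeg \ge 1$; the case $\pvDeg = 0$ is trivial because distinct constants have disjoint neighborhoods, or you can avoid the issue entirely by writing $g = E^e g'$ and cancelling the common power of $E$ after you have shown $g \equiv 0$.
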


\begin{corollary}
    \label{cor:expanding-coset-family}
    For every $\alpha \in (0,1)$ and $\gamma \in (0, 1)$, there is an infinite family of explicit $(\logConstraints,k,d)$-coset graphs, with $d = \gamma \logConstraints + \Theta(\log \logConstraints)$ and $k = (\logConstraints/\lg \logConstraints)^{\Theta(1/\alpha)}$, each of which is $(T, 1-o(1))$-expanding with $T = 2^{(1-\alpha)d}$.
\end{corollary}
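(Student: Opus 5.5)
Instantiate the GUV graphs of \Cref{thm:guv-expansion} with $Q=2^q$ and $h=2^t$ powers of two, so that \Cref{fct:guv-is-coset} makes them $(\logConstraints,k,d)$-coset graphs with $\logConstraints=(\pvDeg+1)q$, $k=Q$ and $d=\pvLen q$. Then tune $(q,t,\pvDeg,\pvLen)$ to get $d\approx\gamma\logConstraints$, $T=h^{\pvLen}=2^{t\pvLen}\ge 2^{(1-\alpha)d}$, and expansion $1-\frac{\pvDeg\pvLen h}{Q}=1-o(1)$. Since $T=2^{t\pvLen}$ and $d=q\pvLen$, the requirement $T\ge 2^{(1-\alpha)d}$ becomes $t\ge(1-\alpha)q$, i.e.\ $h\ge Q^{1-\alpha}$. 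The cleanest choice is $h=Q^{1-\alpha}$, ignoring rounding for now, which is possible whenever $\alpha q$ is an integer.

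With $h=Q^{1-\alpha}$, the expansion defect is $\frac{\pvDeg\pvLen h}{Q}=\pvDeg\pvLen Q^{-\alpha}$. Forcing this to be $o(1)$ gives $Q^{\alpha}=\omega(\pvDeg\pvLen)$, say $Q=(\pvDeg\pvLen\log\logConstraints)^{1/\alpha}$. The ratio $d/\logConstraints=\pvLen/(\pvDeg+1)$ should be $\approx\gamma$, so take $\pvLen=\lceil\gamma(\pvDeg+1)\rceil$. With $\pvDeg$ as the growing parameter, $\logConstraints=(\pvDeg+1)q$ and $q=\Theta(\frac1\alpha\log\pvDeg)$. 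Hence $\logConstraints=\Theta(\pvDeg\log\pvDeg)$, so $\pvDeg=\Theta(\logConstraints/\log\logConstraints)$. It follows that $k=Q=\pvDeg^{\Theta(1/\alpha)}=(\logConstraints/\log\logConstraints)^{\Theta(1/\alpha)}$, as claimed. We also need $d<\logConstraints$ and $h\le Q$, both of which are immediate.

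Next, check the additive error in $d$. We have $d=\pvLen q=\gamma(\pvDeg+1)q+O(q)=\gamma\logConstraints+O(\log\logConstraints)$. To get $\Theta(\log\logConstraints)$ rather than only $O(\cdot)$, we can add one to $\pvLen$ if needed. Explicitness follows because an irreducible $E$ of degree $\pvDeg+1$ over $\mathbb{F}_Q$ can be found deterministically in $\poly(Q,\pvDeg)$ time (for instance by brute force or Shoup's algorithm). Since $\pvDeg$ ranges over an infinite set, this gives an infinite family.

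The main obstacle is the rounding and integrality constraints. Both $q$ and $t=(1-\alpha)q$ must be integers, and $Q$ must be a power of two that is large enough but not too large. Take $q=\lceil\frac1\alpha\log_2(\pvDeg\pvLen\log\logConstraints)\rceil$ and $t=\lceil(1-\alpha)q\rceil$. Rounding $t$ up only increases $T$, so $T\ge 2^{(1-\alpha)d}$ still holds. It does, however, multiply the defect by at most $2$, which does not affect $o(1)$. Rounding $q$ up changes $Q$ by at most a factor of $2$, which is absorbed in the $\Theta(1/\alpha)$ exponent. Finally, the circular dependence of $q$ on $\logConstraints$ is resolved by defining everything in terms of $\pvDeg$, using $\log\logConstraints=\Theta(\log\pvDeg)$.
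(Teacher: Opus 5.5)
Your proposal is correct and is essentially the paper's proof. You use power-of-two $Q=2^q$, $h=2^t$ so that Fact~\ref{fct:guv-is-coset} applies, $\pvLen\approx\gamma(\pvDeg+1)$, $t=\lceil(1-\alpha)q\rceil$ (the paper's $t$ is slightly larger), and $Q^{\alpha}$ exceeding $\pvDeg\pvLen$ by a growing factor ($\log\logConstraints$ for you, $D^{C-2}$ in the paper). This yields $\logConstraints=\Theta(\pvDeg\log\pvDeg)$ and $k=(\logConstraints/\log\logConstraints)^{\Theta(1/\alpha)}$.

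The only slip is in your last line. The fact you need is $\log\logConstraints=\Theta(\log\pvDeg)$, not $\log\log\logConstraints=\Theta(\log\pvDeg)$. The circularity disappears if you use $\log\pvDeg$ instead of $\log\logConstraints$ as the slack factor in the definition of $q$.
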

\begin{proof}
Fix constants $\alpha, \gamma \in (0, 1)$. Define parameter $D \in \mathbb{N}$ that serves as an index for the infinite family of coset graphs.
Let $C > 2$ be a constant and define the following parameters:
\begin{align*}
    &q \coloneqq \left \lceil \frac{C}{\alpha} \lg D \right \rceil, &Q \coloneqq 2^q,\\
    &\logConstraints \coloneqq qD, &\constraints \coloneqq 2^{\logConstraints},\\
    &\logVariables \coloneqq \left \lfloor \gamma D \right \rfloor, &\variables \coloneqq 2^{(\logVariables+1) q},\\
    & t \coloneqq \left \lceil  \frac{(\logVariables+1)(1-\alpha)}{\logVariables} \cdot q\right \rceil, & h \coloneqq 2^{t}.
\end{align*}
For field size $Q$ and for any degree $D$ irreducible polynomial $E(X) \in \mathbb{F}_Q[X]$, we have a $(E, Q, D-1, \logVariables+1, h)$-GUV graph with $\constraints$ left nodes and $\variables$ right nodes with $Q$ left regular degree. 
By Fact~\ref{fct:guv-is-coset}, this GUV graph is a $(\logConstraints, Q, \log \variables)$-coset graph where
\[\log \variables = (\logVariables+1) \cdot q = \gamma \cdot \log \constraints  + \Theta(\log \log \constraints).\]
Moreover, we have $\logConstraints = \Theta\left (\frac{D \lg D}{\alpha} \right)$, implying that $D = \Theta(\alpha \logConstraints / \lg \logConstraints)$. 
Therefore,
\[Q = D^{\Theta(1/\alpha)} = (\logConstraints / \lg \logConstraints)^{\Theta(1/\alpha)}.\]
Now consider the expansion factor.
First observe that our choice of $q$ directly implies that $Q^{\alpha} \geq D^{C}$ for $C > 2$.
Then, observe $h = Q^{1 - \alpha + o(1)}$ which together with the above implies that
\[\frac{(D-1) \cdot (\logVariables+1) \cdot h}{Q} = O(D^2 \cdot Q^{- \alpha + o(1)}) = o(1).\]
Finally, observe that our choice of $t$ gives $h^{\logVariables} \geq \variables^{1-\alpha}$.
By Theorem~\ref{thm:guv-expansion}, this coset graph is $(\variables^{1-\alpha}, 1 - o(1))$ expanding.
\end{proof}

\begin{remark}
\label{rmk:rational-gamma}
    For rational $\gamma$ and appropriate choices of $D$ such that $D \gamma$ is integral, we can sharpen the theorem statement to $d = \gamma \logConstraints$, removing the additive $\Theta(\log \logConstraints)$ term.
\end{remark}

\begin{remark}
\label{rmk:sqrt-d}
    The same theorem statement holds for an infinite family of explicit $(\logConstraints, k, d)$-coset graphs with $d = c_\beta \cdot \sqrt{\logConstraints}$ and $k = (\logConstraints / \lg \logConstraints)^{\Theta(1/\alpha)}$ where $c_\beta$ can be an arbitrarily small constant that depends on a constant parameter $\beta$.
    To see this, observe that the proof above holds when the parameter $\logVariables = \left \lfloor \beta \cdot \sqrt{D / \log D} \right \rfloor$ where $D$ was the index for the infinite family and $\beta$ is any constant of our choice. 
\end{remark}

\section{\texorpdfstring{Distinguisher for the Noisy $k$-XOR Problem on Coset Graphs}{Distinguisher for the Noisy k-XOR Problem on Coset Graphs}}
\label{sec:decoding}

We directly combine the two known theorems, Theorem~\ref{thm:efficient-decoding} and Theorem~\ref{thm:RM-random-erasure}, to construct a distinguisher running in polynomial time in the blocklength $\constraints = 2^\logConstraints$ for the setting in which the error rate $\eta$ has a $o(1)$ gap to the capacity.
The distinguisher proceeds in two steps.
First, it uses Reed-Muller decoding to recover a candidate codeword.
Then, it checks if the corresponding error is of the expected weight.
The following result is unconditional.

\begin{theorem}
\label{thm:decoding-coset-graph}
For any constant $c \in (0, 1)$, for any $ d = d(\logConstraints) < \logConstraints$ such that $\logConstraints + d$ is an even integer and such that $p = d/\logConstraints < c$ for all sufficiently large $\logConstraints$, let $\varepsilon_{\logConstraints, d} \coloneqq 1 - \Rate\left( \RM \left(\logConstraints, \frac{\logConstraints+d}{2} \right ) \right)$.
For any $\eta = \eta(\logConstraints)$ satisfying  $\eta <  \varepsilon_{\logConstraints, d} - o(1)$  and
\[
        \eta \cdot 2^{\logConstraints}  \lesssim 2^{\logConstraints \cdot H_2((1 - p)/2)},
\]
for every $k  = (\log \variables)^{O(1)}$, for any $(\logConstraints,k,d)$-coset graph $H$, there exists an algorithm running in time polynomial in the blocklength $\constraints = 2^\logConstraints$, that solves the $\eta$-noisy XOR distinguishing problem for $H$ with distinguishing advantage $1 - o(1)$. 
\end{theorem}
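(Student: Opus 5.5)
The plan is to build the distinguisher from two ingredients: a Reed--Muller decoder for the planted case, and a counting argument showing the decoder almost never succeeds on a uniform input. By \Cref{lem:coset-graph-is-subcode}, in the planted case $y = c + e$ with $c = A_H x \in \RM(\logConstraints, d)$ and $e \sim \mathsf{Ber}(\eta)^{\constraints}$. The algorithm runs the decoder of \Cref{thm:efficient-decoding} for $\RM(\logConstraints, d)$ on $y$ and obtains a candidate codeword $\hat c$. If decoding fails, it outputs $0$. If $\hat c \in \Image(A_H)$ (a linear system, solvable in $\poly(\constraints)$ time) and $\wt(y - \hat c) \le \tau$ for a threshold $\tau \approx \eta\constraints + O(\sqrt{\eta\constraints}\log\constraints)$ (or simply $2\eta\constraints$ when $\eta\constraints$ is large), it outputs $1$; otherwise it outputs $0$. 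The whole procedure runs in time polynomial in $\constraints = 2^{\logConstraints}$.

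\emph{Planted case.} The erasure pattern $\mathrm{supp}(e)$ is distributed as $\mathsf{Ber}(\eta)^{\constraints}$. Since $\eta < \varepsilon_{\logConstraints,d} - o(1)$, \Cref{thm:RM-random-erasure} applied to $\RM(\logConstraints, (\logConstraints+d)/2)$ says this pattern is recoverable as erasures with probability $1-o(1)$. By \Cref{thm:efficient-decoding}, the decoder then returns exactly $c$. So $\hat c = c \in \Image(A_H)$ and $\wt(y-\hat c) = \wt(e)$, which by Chernoff is at most $\tau$ with probability $1-o(1)$. The algorithm therefore accepts with probability $1-o(1)$. The parity $\logConstraints + d$ even is used only so that the degree $(\logConstraints+d)/2$ is an integer.

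\emph{Null case.} This is the main obstacle. I would handle it by a union bound: the algorithm accepts only if $y$ lies within Hamming distance $\tau$ of some codeword of $\Image(A_H) \subseteq \RM(\logConstraints, d)$. Hence
\[
\Pr[\text{accept}] \le \frac{|\RM(\logConstraints,d)| \cdot \binom{\constraints}{\le \tau}}{2^{\constraints}}.
\]

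I would then bound each factor. First, $\log_2|\RM(\logConstraints,d)| = \binom{\logConstraints}{\le d} \le 2^{\logConstraints H_2(p)}$ by \Cref{fct:hamming-volume}, using $p < c < 1$; if $p$ exceeds $1/2$, I would instead use the trivial bound $\le \constraints\cdot(1-\Omega(1))$ via Berry--Esseen, as the preliminaries suggest. Second, $\log_2\binom{\constraints}{\le\tau} \le \constraints H_2(\tau/\constraints) = O(\eta\constraints\log(1/\eta))$.

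It remains to show the sum of the two exponents is at most $\constraints - \omega(1)$. The constraint $\eta\constraints \lesssim 2^{\logConstraints H_2((1-p)/2)}$ is exactly what keeps the ball-volume term negligible. The code dimension is at most $\constraints(1-\varepsilon_{\logConstraints,d'})$ for every $d' \ge d$, and in particular it is far below $\constraints$ by the factor $1 - \Omega(1)$ when $p < c$. So the code term leaves a deficit of order $\constraints - 2^{\logConstraints H_2(p)}$ (or $\Omega(\constraints)$). Since $\eta < 1/2$ and $\eta\constraints\log(1/\eta)$ is dominated by this deficit, I would verify that $H_2((1-p)/2) < 1$ together with the hypothesis makes the ball volume smaller than the deficit; this is the delicate inequality check. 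If it goes through, acceptance in the null case has probability $2^{-\omega(1)} = o(1)$.

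Combining the two cases gives advantage $1-o(1)$. The hypothesis $k = (\log\variables)^{O(1)}$ plays no role beyond ensuring the graph size is polynomial in $\constraints$.
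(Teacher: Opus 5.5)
Your algorithm and your planted-case analysis match the paper's. The gap is in the null case, which you call the main obstacle and then leave at ``if it goes through.''

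You union-bound over all of $\RM(\logConstraints,d)$. This throws away the test $\hat c\in\Image(A_H)$ that your algorithm already performs. The fact you rely on to make that bound work is also false: it is not true that $\dim\RM(\logConstraints,d)\le(1-\Omega(1))\constraints$ whenever $p<c$. For a constant $p\in(1/2,c)$ the rate of $\RM(\logConstraints,d)$ tends to $1$. The deficit $\constraints-\dim\RM(\logConstraints,d)=\binom{\logConstraints}{\le \logConstraints-d-1}$ is then only $\constraints^{H_2(1-p)+o(1)}$. Berry--Esseen gives an $\Omega(\constraints)$ deficit only when $d\le\logConstraints/2+O(\sqrt{\logConstraints})$. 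So the ``delicate inequality'' is genuinely unresolved in your write-up.

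Your route could be rescued for $p>1/2$ by a different argument than the one you gave. The hypothesis gives $\log_2\binom{\constraints}{\le\tau}\le\constraints^{H_2((1-p)/2)+o(1)}$. Since $H_2(x/2)<H_2(x)$ on $(0,1/2]$, this is dominated by the true deficit $\constraints^{H_2(1-p)+o(1)}$.

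The missing idea is exactly where the hypothesis $k=(\log\variables)^{O(1)}$ enters, which you say plays no role: union-bound only over $C_H=\Image(A_H)$. Since $A_H$ has $k2^d=\variables$ columns,
\[
\log_2|C_H|=\rank(A_H)\le k2^d=\constraints^{p+o(1)}\le\constraints^{c+o(1)}=o(\constraints).
\]
Therefore
\[
\Pr[\text{accept}]\le\frac{|C_H|\binom{\constraints}{\le\tau}}{2^{\constraints}}\le 2^{\,o(\constraints)+\constraints H_2(\tau/\constraints)-\constraints}=o(1)
\]
as soon as $\tau/\constraints$ is bounded away from $1/2$. No comparison with the dimension of $\RM(\logConstraints,d)$ is needed.

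Relatedly, drop your fallback threshold $\tau=2\eta\constraints$. At $\eta=1/3$, the regime of \Cref{thm:constant-rate}, it exceeds $\constraints/2$, and the Hamming ball is essentially all of $\bF_2^{\constraints}$. Use $\tau=(1+\delta)\eta\constraints$ for a small constant $\delta$, as the paper does.
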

\begin{proof}
    For convenience, let $\constraints \coloneqq 2^\logConstraints$.
    By Theorem~\ref{thm:efficient-decoding}, any error pattern that is recoverable from erasures in $\RM \left (\logConstraints, \frac{\logConstraints+d}{2} \right)$ is also decodable from errors in $\RM(\logConstraints,d)$. 
By Theorem~\ref{thm:RM-random-erasure}, if
\[
\eta < \varepsilon_{\logConstraints,d} - o(1)
\]
then a random erasure pattern sampled from $(\Ber(\eta))^{\constraints}$ is (possibly inefficiently) decodable in $\RM \left (\logConstraints,\frac{\logConstraints+d}{2} \right ) $ with probability $1-o(1)$. 
Therefore by
Theorem~\ref{thm:efficient-decoding}, there is an algorithm
$\mathcal{A}_{\RM}$ that runs in time polynomial in the block length $2^\logConstraints$ such that for every $c\in \RM(\logConstraints,d)$,
\[
\Pr_{e\sim(\Ber(\eta))^\constraints} \left[\mathcal A_{\RM}(c+e)=c\right]=1-o(1)
\]
Let $H$ be an $(\logConstraints,k,d)$-coset graph, and define the subspace
\[
C_H \coloneqq \Image(A_H)\subseteq \bF_2^\constraints.
\]
By Lemma~\ref{lem:coset-graph-is-subcode}, we have $C_H\subseteq \RM(\logConstraints,d)$ is a subcode.
Therefore, for every $c\in C_H$,
\[
\Pr_{e \sim (\Ber(\eta))^\constraints} \left[ \mathcal{A}_{\RM}(c+e)=c\right ]=1-o(1).
\]
Define the distinguisher $\Dist$ as follows. On input $y \in \bF_2^\constraints$:
\begin{enumerate}
    \item Compute $\hat{c} \xleftarrow{} \mathcal A_{\RM}(y)$.
    \item Output $1$ if and only if $\wt(y - \hat{c}) \leq t$ and $\hat{c} \in C_H$.
\end{enumerate}
where $t= (1 + \delta) \eta \constraints$ for a small constant $\delta > 0$ is such that standard tail bounds imply that
\[
\Pr_{e \sim (\Ber(\eta))^\constraints} [ \wt(e) \leq t] = 1 - o(1).
\]
This distinguisher is polynomial-time because $A_{\RM}$ is a polynomial-time algorithm and the codeword check can be done via linear algebra. 

In the planted case, $y = c + e$ for a uniformly random $c\in C_H$ and
$e \sim (\Ber(\eta))^\constraints$. 
With probability $1 - o(1)$ over the choice of $e$, the decoder returns
$\hat{c}=c$ and $\wt(e)\leq t$. 
Hence
\[
\Pr[\Dist(y)=1]=1-o(1).
\]
Now consider the null case where the input is $y\sim\Unif(\bF_2^\constraints)$. 
We'll show the probability that $\Dist$ accepts such an input is $o(1)$.
If $\Dist(y)=1$, then by definition of $\Dist$,
there exists some codeword $\hat c\in \RM(\logConstraints,d)$ with $\wt(y-\hat c)\leq t$ such that $\hat{c} \in C_H$.
Let $\Vol(\constraints, t) \coloneqq \binom{\constraints}{\leq t} $ denote the Hamming ball of radius $t$ in dimension $\constraints$. 
By a union bound over all codewords of $C_H$, the probability that $y$ is within $t$ Hamming weight of any codeword in $C_H$ is
\[ \Pr \left [d(y,C_H)\leq t \right ] \leq \frac{|C_H| \cdot \Vol(\constraints,t)}{2^\constraints}.\]
First observe that since $k = (\log N)^{O(1)}$,
\[\lg(|C_H|)=\rank(A_H) \leq k 2^d = \constraints^{p + o(1)} \leq \constraints^c = o(\constraints).\]
Then recall that $t$ can be taken to be $(1 + \delta)\cdot \eta \cdot \constraints $ for any arbitrarily small constant $\delta$.
Since $p \in (0, c)$ for $c < 1$ and $\eta \cdot \constraints \lesssim 2^{\logConstraints \cdot H_2((1-p)/2)}$, we have that $t = o(\constraints)$. 
Then, $H_2(t/\constraints) = o(1)$ and Fact~\ref{fct:hamming-volume} implies
\[\lg (\Vol(\constraints, t)) \leq {\constraints \cdot H_2(t/\constraints)} = o(\constraints).\]
Therefore, 
\[\Pr \left [d(y,C_H)\leq t \right ] = 2^{-\constraints + o(\constraints)} = o(1).\]
Thus, we conclude that
\[
\Pr_{y \sim \Unif(\bF_2^\constraints)}[\Dist(y)=1]=o(1),
\]
so $\Pr[\Dist(y)=0]=1-o(1)$ in the null case.
\end{proof}

As previously, we can inspect the consequences of this in range $d = \gamma \logConstraints$ and $d = (1-\gamma) \logConstraints$ for small constant $\gamma$~---this theorem leads to the same range of parameters as the distinguishing one discussed in the previous section, but with a dual guarantee.

\begin{corollary}
    When $d = (1-\gamma) \logConstraints$, the bound in~\Cref{thm:decoding-coset-graph}  simplifies to
    \begin{equation*}
        2^\logConstraints \eta \lesssim 2^{\logConstraints H_2(\gamma/2)} = (2^{m-d})^{\Omega(\log 1/\gamma)} = \Delta^{\Omega(\log 1/\gamma)}
    \end{equation*}
    for $\gamma \to 0$ and where $\Delta = 2^{\logConstraints-d}$ is the distance of the underlying code.
\end{corollary}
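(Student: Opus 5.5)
The plan is to substitute $d = (1-\gamma)\logConstraints$ directly into the second hypothesis of \Cref{thm:decoding-coset-graph} and then rewrite the entropy exponent using the standard entropy estimates near zero. With this choice $p = d/\logConstraints = 1-\gamma$, so $(1-p)/2 = \gamma/2$. The constraint $2^\logConstraints\eta \lesssim 2^{\logConstraints H_2((1-p)/2)}$ therefore reads $2^\logConstraints \eta \lesssim 2^{\logConstraints H_2(\gamma/2)}$, which is the first claimed form.

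Before that, I would check that the hypotheses of \Cref{thm:decoding-coset-graph} hold in this regime. We need $p = 1-\gamma < c$, so we choose the constant $c\in(1-\gamma,1)$, which is possible since $\gamma>0$ is a fixed constant. We also need $\logConstraints + d$ to be even, which is arranged by restricting to suitable $\logConstraints$, possibly perturbing $d$ by one in the spirit of \Cref{rmk:rational-gamma}. The other hypothesis, $\eta < \varepsilon_{\logConstraints,d} - o(1)$, is untouched by the corollary, which only simplifies the volume-type bound.

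For the second form, note that $\logConstraints - d = \gamma \logConstraints$, so $\Delta = 2^{\logConstraints - d} = 2^{\gamma\logConstraints}$ is the distance of $\RM(\logConstraints,d)$ by the Fact on Reed–Muller distance. It then suffices to show $\logConstraints H_2(\gamma/2) = \gamma\logConstraints\cdot\Omega(\log(1/\gamma))$, that is, $H_2(\gamma/2) \ge \Omega(\gamma\log(1/\gamma))$ as $\gamma\to 0$. This follows from the first term of the entropy:
\[
H_2(\gamma/2) \ge \tfrac{\gamma}{2}\log\tfrac{2}{\gamma} \ge \tfrac{\gamma}{2}\log\tfrac{1}{\gamma}.
\]
The Standard Entropy Bounds fact supplies the matching upper bound $O(\gamma\log(1/\gamma))$, so the exponent is in fact $\Theta(\log 1/\gamma)$. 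Exponentiating gives
\[
2^{\logConstraints H_2(\gamma/2)} \ge 2^{\gamma\logConstraints\cdot\frac12\log(1/\gamma)} = \Delta^{\Omega(\log 1/\gamma)}.
\]

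There is no real obstacle here; the step needing the most care is the bookkeeping of the side conditions, namely the parity of $\logConstraints+d$ and choosing $c$ strictly between $1-\gamma$ and $1$. The only analytic content is the elementary lower bound on $H_2$ near $0$.
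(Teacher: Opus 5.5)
Your proposal is correct and follows the paper's own argument: substitute $p = 1-\gamma$ so the exponent becomes $\logConstraints H_2(\gamma/2)$, note that $\Delta = 2^{\gamma\logConstraints}$, and use $H_2(\gamma/2) = \Theta(\gamma\log(1/\gamma))$ as $\gamma \to 0$. Two things you add are just a more detailed version of the paper's ``the rest follows directly by substitution'': the explicit lower bound $H_2(\gamma/2) \ge \frac{\gamma}{2}\log(1/\gamma)$, and your check of the side conditions (choosing $c \in (1-\gamma,1)$ and the parity of $\logConstraints + d$).
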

\begin{proof}
     Recall the fact that for $\gamma \to 0$, 
     \[H_2(\gamma) = \gamma \log (1/\gamma) + (1-\gamma) \log (1/(1-\gamma)) = \Theta(\gamma \log (1/\gamma)).\] Then observe that $\Delta = 2^{\gamma\cdot  \logConstraints}$ and the rest follows directly by substitution.
\end{proof}

\begin{corollary}
    When $d = \gamma \logConstraints$, the bound in~\Cref{thm:decoding-coset-graph}  simplifies to
    \begin{equation*}
        \eta \lesssim 2^{- \Omega(\gamma^2 \logConstraints)} = \delta^{\Omega(\gamma)}
    \end{equation*}
    for $\gamma \to 0$, where $\delta = 2^{-\gamma \logConstraints}$ is the relative distance of the underlying code.
\end{corollary}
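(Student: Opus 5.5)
We prove the corollary by substituting $p = d/\logConstraints = \gamma$ into the second condition of \Cref{thm:decoding-coset-graph},
\[
\eta \cdot 2^{\logConstraints} \lesssim 2^{\logConstraints H_2((1-p)/2)},
\]
and rewriting it as an upper bound on $\eta$. Dividing both sides by $2^{\logConstraints}$ gives
\[
\eta \lesssim 2^{-\logConstraints(1 - H_2(\frac12 - \frac{\gamma}{2}))}.
\]
This is the same manipulation as in the previous corollary, except that we now need the entropy near $1/2$ rather than near $0$.

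\emph{Step 1: entropy near $1/2$.} By the second item of the Standard Entropy Bounds fact, with $\varepsilon = \gamma/2 \to 0$,
\[
1 - H_2\left(\tfrac12 - \tfrac{\gamma}{2}\right) = \Theta(\gamma^2).
\]
The exponent is therefore $\Theta(\gamma^2 \logConstraints)$, and the condition of the theorem is met whenever $\eta \lesssim 2^{-\Theta(\gamma^2\logConstraints)}$. We write this as $2^{-\Omega(\gamma^2 \logConstraints)}$, matching the statement; the hidden constant is the one from the $\Theta(\varepsilon^2)$ bound.

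\emph{Step 2: express in terms of the relative distance.} Since $\delta = 2^{-d} = 2^{-\gamma\logConstraints}$, we have
\[
2^{-\gamma^2 \logConstraints} = \left(2^{-\gamma\logConstraints}\right)^{\gamma} = \delta^{\gamma}.
\]
Absorbing constants into the exponent then gives $2^{-\Omega(\gamma^2 \logConstraints)} = \delta^{\Omega(\gamma)}$.

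\emph{Step 3: the other hypothesis.} We also check the first condition of the theorem, $\eta < \varepsilon_{\logConstraints,d} - o(1)$, where
\[
\varepsilon_{\logConstraints,d} = \Pr\left[\mathrm{Bin}(\logConstraints,\tfrac12) > \tfrac{(1+\gamma)\logConstraints}{2}\right].
\]
By a Chernoff bound (or the Berry-Ess\'een theorem) this probability is $2^{-\Theta(\gamma^2\logConstraints)}$, which is of the same order as the bound from Step 1. Strictly, the theorem needs $\eta$ to sit below this with a positive gap. We would handle this by reading the $\Omega(\cdot)$ in the statement as a sufficiently large constant, so that $\eta$ is a polynomially small fraction of $\varepsilon_{\logConstraints,d}$, and by noting that \Cref{thm:RM-random-erasure} applies in this regime.

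The main obstacle is this last point. The $o(1)$ gap-to-capacity term in \Cref{thm:RM-random-erasure} could, a priori, swamp $\varepsilon_{\logConstraints,d}$, since $\varepsilon_{\logConstraints,d}$ tends to $0$. The corollary as stated only claims a simplification of the \emph{entropy} bound, so we would confine the proof to Steps 1 and 2 and note explicitly that the erasure-capacity hypothesis must be verified separately.
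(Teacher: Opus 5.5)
Your Steps 1 and 2 are exactly the paper's proof. With $p=\gamma$ one has $(1-p)/2 = 1/2-\gamma/2$, the expansion $H_2(1/2-\gamma/2)=1-\Theta(\gamma^2)$ turns the entropy hypothesis into $\eta \lesssim 2^{-\Theta(\gamma^2\logConstraints)}$, and $2^{-\gamma^2\logConstraints}=\delta^{\gamma}$.

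Confining the proof to those steps is right, since the corollary concerns only the entropy hypothesis. The tentative fix in Step 3 would not work. The known gap in \Cref{thm:RM-random-erasure} (at best $O(1/\log\log\constraints)$) is far larger than $\varepsilon_{\logConstraints,d}=2^{-\Theta(\gamma^2\logConstraints)}$, which is why the paper switches to Conjecture~\ref{conj:RM-random-erasure-weak} via \Cref{thm:decoding-coset-graph-weak} when $d=\Theta(\logConstraints)$.
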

\begin{proof}
Observe that $(1-p)/2 = (1-(d/\logConstraints))/2 = (1/2) - (\gamma/2)$.
Taking the Taylor expansion of the binary entropy function  around $1/2$ gives
$H_2((1/2) - (\gamma/2)) = 1 - \Theta(\gamma^2)$.
Substitution then gives the desired result. 
\end{proof}

The best possible expanding set is of size $\variables/k = 1/\delta$. If we had $(T, 3/4)$-expanding coset graph with $T = (\variables/k)^{1-\alpha}$ (as we could hope from the PV codes), this leads to $T \eta \approx \delta^{-1 + \alpha + \Omega(\gamma)}$. As long as $\alpha + C\gamma \leq 1/2$, this noise rate is enough to solve the conjectured one.

\begin{theorem}
\label{thm:decoding-coset-graph-weak}
Assume Conjecture~\ref{conj:RM-random-erasure-weak}. Let $\xi >0$ and $\zeta\ge 1$
be the constants from that conjecture.
For any constant $c\in(0,1)$, for any $d=d(\logConstraints)<\logConstraints$ such that $\logConstraints+d$ is an even integer and such that $p \coloneqq d/\logConstraints < c$
for all sufficiently large $\logConstraints$, let
\[
\varepsilon_{\logConstraints,d}\coloneqq 1-\Rate \left(\RM \left(\logConstraints,\frac{\logConstraints+d}{2}\right)\right),
\qquad \qquad
\constraints\coloneqq 2^\logConstraints.
\]
If $\constraints^{-\xi}<\varepsilon_{\logConstraints,d}<\frac12$,
then, for any $\eta=\eta(\logConstraints)$ satisfying
\[
\eta<\varepsilon_{\logConstraints,d}^{\zeta}
\qquad\text{and}\qquad
\eta \constraints \lesssim \constraints^{H_2((1-p)/2)},
\]
for every $k = (\log N)^{O(1)}$, for any $(\logConstraints,k,d)$-coset graph $G$, there exists a $\poly(\constraints)$ time algorithm that solves the $\eta$-noisy XOR distinguishing problem for $G$ with distinguishing advantage $1 - o(1)$.
\end{theorem}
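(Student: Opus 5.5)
The argument follows the proof of \Cref{thm:decoding-coset-graph} almost verbatim. The only change is the source of the erasure-recovery guarantee: \Cref{thm:RM-random-erasure} is replaced by Conjecture~\ref{conj:RM-random-erasure-weak}.

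First, apply the conjecture to the code $\RM\bigl(\logConstraints,\frac{\logConstraints+d}{2}\bigr)$, which has rate exactly $1-\varepsilon_{\logConstraints,d}$. The hypothesis $\constraints^{-\xi}<\varepsilon_{\logConstraints,d}<\frac12$ is exactly the range in which the conjecture applies, with $\xi$ playing the role of its constant $\gamma$. So for $\eta<\varepsilon_{\logConstraints,d}^{\zeta}$, a random erasure pattern drawn from $(\Ber(\eta))^{\constraints}$ is recoverable in this higher-degree code with probability $1-o(1)$. Next, feed this into \Cref{thm:efficient-decoding}. This gives a $\poly(\constraints)$-time algorithm $\mathcal{A}_{\RM}$ with $\Pr_e[\mathcal{A}_{\RM}(c+e)=c]=1-o(1)$ for every $c\in\RM(\logConstraints,d)$. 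By Lemma~\ref{lem:coset-graph-is-subcode}, the same holds for every $c\in C_G\coloneqq\Image(A_G)$.

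The distinguisher is the same as before. Compute $\hat c=\mathcal{A}_{\RM}(y)$, and accept iff $\hat c\in C_G$ and $\wt(y-\hat c)\le t$, where $t=(1+\delta)\eta\constraints$. Both checks are linear algebra and a weight count, so the running time is $\poly(\constraints)$.

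\emph{Planted case.} With probability $1-o(1)$ the decoder returns the true codeword. We also need $\wt(e)\le t$ with probability $1-o(1)$. A Chernoff bound gives this when $\eta\constraints\to\infty$. The hypothesis $\eta > 0$ alone does not guarantee $\eta\constraints\to\infty$, so in the degenerate case I would instead set $t=\max\{(1+\delta)\eta\constraints,\log\constraints\}$. This keeps $t=o(\constraints)$, which is all the null case uses.

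\emph{Null case.} Use the union bound $\Pr[d(y,C_G)\le t]\le |C_G|\Vol(\constraints,t)/2^{\constraints}$. Since $k=(\log\variables)^{O(1)}$, we have $\log|C_G|\le k2^d=\constraints^{p+o(1)}\le\constraints^{c'}=o(\constraints)$ for a constant $c'<1$. The hypothesis $\eta\constraints\lesssim\constraints^{H_2((1-p)/2)}$ gives $t=o(\constraints)$, because $p$ is bounded away from $0$ when $p\ge$ const, and in general $H_2((1-p)/2)<1$ for $p>0$. If $p\to0$, then $t\le(1+\delta)\eta\constraints<\varepsilon^{\zeta}\constraints$ with $\varepsilon<1/2$, which is only a constant fraction, so I must be careful here. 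However, $\varepsilon_{\logConstraints,d}$ is the fraction of monomials of degree above $(\logConstraints+d)/2$. This is at most about $1/2$ and tends to a constant below $1/2$, so $t\le(1+\delta)2^{-\zeta}\constraints$, and since $\zeta\ge1$, $H_2(t/\constraints)$ is bounded below $1$. That still gives $|C_G|\Vol/2^{\constraints}\le 2^{-\Omega(\constraints)}$. Fact~\ref{fct:hamming-volume} then bounds $\log\Vol(\constraints,t)\le\constraints H_2(t/\constraints)$, which is $o(\constraints)$, or at least $(1-\Omega(1))\constraints$. So the acceptance probability is $o(1)$ and the advantage is $1-o(1)$.

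The main obstacle is this last point: showing that $t/\constraints$ stays bounded away from $1/2$ when $p$ is small, so that the Hamming-ball volume is exponentially smaller than $2^{\constraints}$. I would first try to get it directly from $\eta<\varepsilon^{\zeta}\le 2^{-\zeta}$ with $\zeta\ge1$, choosing $\delta$ small enough that $(1+\delta)\eta<1/2-\Omega(1)$. I would fall back on the $H_2((1-p)/2)$ hypothesis exactly as in \Cref{thm:decoding-coset-graph} when $p$ is bounded below.
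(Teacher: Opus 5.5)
Your outline matches the paper's proof. The paper says only that the argument of Theorem~\ref{thm:decoding-coset-graph} goes through verbatim with Conjecture~\ref{conj:RM-random-erasure-weak} replacing Theorem~\ref{thm:RM-random-erasure}. Your identification of $\xi$ with the conjecture's $\gamma$, your use of Theorem~\ref{thm:efficient-decoding} and Lemma~\ref{lem:coset-graph-is-subcode}, and your distinguisher are exactly that. Your floor $t\ge\log\constraints$ correctly repairs the tail bound when $\eta\constraints$ stays bounded. You are also right to distrust the inherited null-case step. The proof of Theorem~\ref{thm:decoding-coset-graph} derives $t=o(\constraints)$ from $\eta\constraints\lesssim\constraints^{H_2((1-p)/2)}$, but this only follows when $p$ is bounded away from $0$. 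When $p\to0$ the hypotheses permit constant $\eta$; this is precisely the regime $d=0.1\sqrt{\logConstraints}$, $\eta=1/3$ of Theorem~\ref{thm:constant-rate}.

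However, your resolution of that obstacle does not close it. The quantity $\varepsilon_{\logConstraints,d}=\Pr[\mathrm{Bin}(\logConstraints,1/2)<(\logConstraints-d)/2]$ need not tend to a constant below $1/2$; for $d=o(\sqrt{\logConstraints})$ it tends to $1/2$. Also, $\zeta\ge1$ does not give $(1+\delta)2^{-\zeta}<1/2$. At $\zeta=1$ every $\eta<\varepsilon_{\logConstraints,d}$ is allowed, so for any fixed $\delta>0$ you can have $t=(1+\delta)\eta\constraints>\constraints/2$. Then $\binom{\constraints}{\le t}\ge 2^{\constraints-1}$ and the union bound is vacuous.

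The culprit is the multiplicative slack; take $t=\eta\constraints+\constraints^{0.9}$ instead. In the planted case, Hoeffding gives $\Pr[\mathrm{wt}(e)>t]\le e^{-2\constraints^{0.8}}$, which also subsumes your $\log\constraints$ patch. In the null case, use $\eta<\varepsilon_{\logConstraints,d}^{\zeta}\le\varepsilon_{\logConstraints,d}$. Since $d\ge0$ and $\logConstraints+d$ is even, the central binomial term gives $\frac12-\varepsilon_{\logConstraints,d}=\Omega(\logConstraints^{-1/2})$. Hence $\frac12-t/\constraints=\Omega(\logConstraints^{-1/2})$ and $H_2(t/\constraints)\le 1-\Omega(1/\logConstraints)$. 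Combined with $\log|C_G|\le\constraints^{c+o(1)}=o(\constraints/\logConstraints)$, the null acceptance probability is at most $2^{-\Omega(\constraints/\logConstraints)}$. This holds uniformly in $p$, needs no case split, and does not even use the hypothesis $\eta\constraints\lesssim\constraints^{H_2((1-p)/2)}$.

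Alternatively, you may keep the factor $(1+\delta)$ if you first show the conjecture is false at $\zeta=1$. Take a code with $\varepsilon$ bounded away from $0$ and $1/2$ and set $\eta=\varepsilon-1/\constraints$. The number of erasures then exceeds the codimension $\varepsilon\constraints$ with probability tending to $1/2$, so recovery fails. Therefore $\zeta>1$, and $\eta<2^{-\zeta}$ is a constant strictly below $1/2$.
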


\begin{proof}
    The proof is identical to that of Theorem~\ref{thm:decoding-coset-graph},
with Conjecture~\ref{conj:RM-random-erasure-weak} replacing
Theorem~\ref{thm:RM-random-erasure} in the proof of Theorem~\ref{thm:decoding-coset-graph}.
\end{proof}

\section{Putting Decoding and Expansion Together}

We now restate and prove our main theorems.
Our first main theorem is for the case where the number of constraints is quasi-polynomial in the number of variables and holds unconditionally with constant error rate.

\begin{theorem}[Main Theorem~\ref{thm:main1}]
\label{thm:constant-rate}
    For every constant $\alpha > 0$ there is an infinite family of $k$-left regular constraint graphs $G = ([\constraints] \cup [\variables], E)$, where $\constraints = 2^{\Theta(\log^2 \variables)}$, and $k = (\log \variables)^{\Theta(1/\alpha)}$, which is $(\variables^{1-\alpha}, 1-o(1))$-expanding and there is an algorithm with running time $\poly(\constraints)$ to solve the $\eta$-noisy $k$-XOR distinguishing problem for those graphs with distinguishing advantage $1 - o(1)$, with constant noise rate $\eta = 1/3$.
\end{theorem}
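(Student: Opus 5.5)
The plan is to instantiate the coset-graph family of \Cref{rmk:sqrt-d}, where $d = c_\beta\sqrt{\logConstraints}$, and feed it into the unconditional distinguisher of \Cref{thm:decoding-coset-graph}. The choice $d=\Theta(\sqrt{\logConstraints})$ is what makes both the quasi-polynomial relation $\constraints = 2^{\Theta(\log^2\variables)}$ and a constant noise rate possible. With $\log\variables = d+\lg k = \Theta(\sqrt{\logConstraints})$ (since $k=\poly(\logConstraints)$), we get $\logConstraints=\Theta(\log^2\variables)$. Also $k=(\logConstraints/\lg\logConstraints)^{\Theta(1/\alpha)} = (\log\variables)^{\Theta(1/\alpha)}$, and the corollary gives $(\variables^{1-\alpha},1-o(1))$-expansion directly.

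\textbf{Step 1: parity adjustment.} \Cref{thm:decoding-coset-graph} requires $\logConstraints+d$ to be even. Choose the index $D$ (and hence $\logConstraints=qD$) and $\logVariables$ so that $d=(\logVariables+1)q$ has the same parity as $\logConstraints$. Taking $q\ge 1$ even (e.g.\ rounding $q$ up to an even integer) suffices, since then both are even. This does not affect any asymptotics in the corollary.

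\textbf{Step 2: estimate the capacity gap.} This is the main step. With $p=d/\logConstraints=\Theta(1/\sqrt{\logConstraints})\to 0$, we have
\[
\varepsilon_{\logConstraints,d}=1-\Rate\left(\RM\left(\logConstraints,\tfrac{\logConstraints+d}{2}\right)\right)=\Pr\left[\mathrm{Bin}(\logConstraints,1/2)>\tfrac{\logConstraints+d}{2}\right].
\]
Normalizing, this is $\Pr[S_{\logConstraints} > d/\sqrt{\logConstraints}] = \Pr[S_{\logConstraints} > c_\beta]$ with $S_{\logConstraints}$ a normalized sum of Rademacher variables. By \Cref{thm:berry-esseen} it is at least $\Pr[Z> c_\beta]-O(1/\sqrt{\logConstraints})$. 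Choosing $c_\beta$ small enough (which \Cref{rmk:sqrt-d} permits), $\Pr[Z>c_\beta]>1/3+\kappa$ for a constant $\kappa>0$, say $c_\beta=0.4$, where $\Pr[Z>0.4]\approx 0.345$. Hence $\eta=1/3<\varepsilon_{\logConstraints,d}-o(1)$, which is the first hypothesis of \Cref{thm:decoding-coset-graph}. The obstacle I expect is reconciling $c_\beta$ with the construction: in \Cref{rmk:sqrt-d}, $d=(\logVariables+1)q$ with $\logVariables\approx\beta\sqrt{D/\log D}$ and $\logConstraints=qD$. So $d/\sqrt{\logConstraints}\approx\beta\sqrt{q/\log D}=\Theta(\beta/\sqrt{\alpha})$, and $\beta$ must be chosen as a sufficiently small constant depending on $\alpha$ to force $c_\beta\le 0.4$. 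Separately, the expansion calculation must still give $D^2 h/Q=o(1)$, which holds because $\logVariables\le D$.

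\textbf{Step 3: volume condition.} The second hypothesis asks $\eta 2^{\logConstraints}\lesssim 2^{\logConstraints H_2((1-p)/2)}$. Since $H_2(1/2-p/2)=1-\Theta(p^2)$ and $p^2\logConstraints=\Theta(c_\beta^2)=O(1)$, the right side is $\Omega(2^{\logConstraints})$, so the condition holds with a constant. (The proof of \Cref{thm:decoding-coset-graph} uses this only to get $t=(1+\delta)\eta\constraints$ with $H_2(t/\constraints)$ small. Here $t/\constraints\approx 0.34$, so in the null-case bound I would instead verify $\log|C_H|+\constraints H_2(t/\constraints)<\constraints-\omega(1)$ directly. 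This holds since $H_2(0.34)<1$ and $\log|C_H|\le k2^d=\constraints^{o(1)}$; I would note this small adaptation explicitly.) Finally, $k=(\log\variables)^{O(1)}$ and $p<c$ hold, so \Cref{thm:decoding-coset-graph} yields a $\poly(\constraints)$-time distinguisher with advantage $1-o(1)$, completing the proof.
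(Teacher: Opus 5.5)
Your proposal is correct and follows essentially the same route as the paper: instantiate the $d=\Theta(\sqrt{\logConstraints})$ GUV coset family from \Cref{rmk:sqrt-d}, check the capacity-gap hypothesis of \Cref{thm:decoding-coset-graph} with Berry--Ess\'een and the volume hypothesis with the Taylor expansion of $H_2$ around $1/2$, and then invoke the distinguisher. Your extra care also fills in details the paper's writeup glosses over: you force $\logConstraints+d$ to be even by taking $q$ even, use the correct normalized threshold $d/\sqrt{\logConstraints}$ (the paper writes $c/2$), note that $\beta$ must shrink with $\alpha$, and redo the null-case count, since $t=(1+\delta)\constraints/3$ is not $o(\constraints)$.
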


\begin{proof}

Applying Remark~\ref{rmk:sqrt-d} to Corollary~\ref{cor:expanding-coset-family}, there is an infinite family of explicit $(\logConstraints,k,d)$-coset graphs
\[
G_\logConstraints=([\constraints]\cup [\variables],E_\logConstraints)
\]
that is $(\variables^{1-\alpha},1-o(1))$-expanding with the following relation on parameters:
\begin{itemize}
    \item $\constraints=2^\logConstraints$,
    \item $d= c \cdot \sqrt{\logConstraints}$ for $c = 0.1$ (chosen specifically for $\eta = 1/3$).
    Observe that $c$ satisfies two conditions:
        \begin{equation}
             \Pr \left[ Z \leq c/2 \right ] < 0.6, \label{eqn:hypo-1}
        \end{equation}   
        for a standard normal $Z \sim N(0, 1)$ and 
        \begin{equation}
            c^2 < -\ln (1/3)/ 2 \label{eqn:hypo-2}
        \end{equation}
    \item $k=(\logConstraints/\log \logConstraints)^{\Theta(1/\alpha)}$.
\end{itemize}
First, we claim that $\constraints = 2^{\Theta(\log^2 \variables)}$, and $k = (\log \variables)^{\Theta(1/\alpha)}$.
Since $G_\logConstraints$ is an $(\logConstraints,k,d)$-coset graph, its right side has size $\variables=k\cdot 2^d.$
Because $d=\Theta(\sqrt{\logConstraints})$ and $\log k= O(\log \logConstraints)$, we have $\log \variables=d+\log k= \Theta(\sqrt{\logConstraints})$, 
implying $\logConstraints=\Theta \left(\log^2 \variables \right)$.
Therefore,
\[ \constraints=2^\logConstraints=2^{\Theta(\log^2 \variables)}. \]
Also, since $\logConstraints=\log \constraints$, the left degree satisfies\footnote{Since $\logConstraints = \Theta(\log^2 \variables)$, for any constant $\varepsilon > 0$ and sufficiently large $\variables$, $(\log \variables)^{(2-\varepsilon)/\alpha} \leq \logConstraints/\log \logConstraints  \leq (\log \variables)^{2/\alpha}$.}
\[ 
k=(\logConstraints/\log \logConstraints)^{\Theta(1/\alpha)}=(\log \variables)^{\Theta(1/\alpha)}.
\]

Now we check that our choice of parameters satisfy the hypotheses in the statement of Theorem~\ref{thm:decoding-coset-graph}, which constructs a $\poly(\constraints)$-time distinguisher.
First, we compute the rate $R$ of the $\RM\left( \logConstraints, \frac{\logConstraints+d}{2} \right )$.
For convenience, let $r = \frac{\logConstraints + d}{2}$.

We rewrite the rate as the probability that a binomial random variable with parameters $(\logConstraints, 1/2)$ is at most $r$:
\[R = 2^{-\logConstraints} \cdot \sum_{i=0}^{r} \binom{\logConstraints}{i} = \Pr\left [ \Bin(\logConstraints, 1/2) \leq r \right ]. \]
Then, consider the normalized random variable $S \coloneqq \left( \Bin(\logConstraints, 1/2) - (\logConstraints/2) \right) / (\sqrt{\logConstraints}/ 2)$.
The equivalent event of interest in terms of $S$ is when $S \leq 2\cdot  (r - (\logConstraints/2))/\sqrt{\logConstraints} =  c/2$, that is,
\[\Pr[\Bin(\logConstraints, 1/2) \leq r] = \Pr[S \leq c/2],\]
where we recall that $c$ is the constant such that $d = c \sqrt{\logConstraints}$. 
Since the absolute centered third moment of the $\Ber(1/2)$ random variable is $1/8$, the Berry-Ess\'een Theorem (see Theorem~\ref{thm:berry-esseen}) gives
\[\Pr[S \leq c/2 ] = \Pr \left[ Z \leq c/2 \right ] + O \left ( \frac{1}{\sqrt{\logConstraints}}\right ), \]
for $Z \sim N(0, 1)$.
Any $c$ satisfying Equation~\eqref{eqn:hypo-1}
implies that $1 - R \geq 0.4  + O \left (\frac{1}{\sqrt{\logConstraints}} \right)$ so that $\eta  = 1/3 \leq 1 - R + O(1/\log \logConstraints)$.

For the other hypothesis, we need to show that $\eta \cdot 2^\logConstraints \leq 2^{\logConstraints \cdot H_2((1-(d/\logConstraints))/2) }$ for sufficiently large $\logConstraints$.
Observe that $d/\logConstraints = c/\sqrt{\logConstraints}$.
By the Taylor expansion of $H_2$ around $1/2$, we have that
\[\logConstraints \cdot H_2((1 - (d/\logConstraints))/2) = \logConstraints \cdot \left ( 1 - \frac{2}{\ln 2} \cdot \left (\frac{c}{\sqrt{\logConstraints}} \right )^2 + O(\logConstraints^{-2})  \right) = \logConstraints - \frac{2 c^2}{ \ln 2} + O(\logConstraints^{-1}).\]
Therefore, as long as $(1/3) \leq 2^{- 2 c^2 / (\ln 2)}$, any $c$ satisfying Equation~\eqref{eqn:hypo-2} we have
\[ (1/3) \cdot \constraints \lesssim 2^{\logConstraints \cdot H_2((1-(d/\logConstraints))/2)}.\]

Since both hypotheses are satisfied, Theorem~\ref{thm:decoding-coset-graph} implies that there exists an algorithm with running time $\poly(\constraints)$ that solves the $\eta$-noisy $k$-XOR distinguishing problem with constant noise rate $\eta = 1/3$.
\end{proof}

\subsection{The Setting of Polynomially Many Constraints}

Assuming Reed-Muller codes efficiently achieve capacity of the BEC allows us to obtain a stronger theorem statement in which the number of constraints is polynomially related to the number of variables.
As mentioned before, a weaker conjecture (Conjecture~\ref{conj:RM-random-erasure-weak}) leads to the same conclusion.

\begin{theorem}[Main Theorem~\ref{thm:main2}]
\label{thm:polynomial-size-weak}
Assume Conjecture~\ref{conj:RM-random-erasure-weak}.
Then for every $\alpha, c >0$ there exists an infinite family of
$k$-left-regular constraint graphs $G=([\constraints]\cup[\variables],E)$ such that
\[
\constraints=\variables^{\Theta(1)},
\qquad
k=(\log \variables)^{O(1/\alpha)},
\]
the graph is $(\variables^{1-\alpha},1-o(1))$-expanding, and there is an algorithm running in time
$\poly(\constraints)$ that solves the $\eta$-noisy $k$-XOR distinguishing problem on these graphs for $\eta = \variables^{-c}$.
\end{theorem}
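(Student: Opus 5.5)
We will mirror the proof of Theorem~\ref{thm:constant-rate}, but now take $d = \gamma \logConstraints$ for a small constant $\gamma$ instead of $d = \Theta(\sqrt{\logConstraints})$. The constant $\gamma$ will depend on $c$ and on the constant $\zeta$ of Conjecture~\ref{conj:RM-random-erasure-weak}.

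\textbf{Step 1: the graph family.} Corollary~\ref{cor:expanding-coset-family}, sharpened by Remark~\ref{rmk:rational-gamma} for rational $\gamma$ with $\gamma D$ integral and $\logConstraints + d$ even, gives explicit $(\logConstraints,k,d)$-coset graphs with the following properties:
\begin{itemize}
\item they are $(\variables^{1-\alpha},1-o(1))$-expanding;
\item $d=\gamma\logConstraints$;
\item $k=(\logConstraints/\lg\logConstraints)^{\Theta(1/\alpha)}$.
\end{itemize}
Since $\variables = k 2^d$ and $\log k = O(\log\logConstraints)$, we get $\log\variables = \gamma\logConstraints + O(\log\logConstraints)$. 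Hence $\constraints = 2^{\logConstraints} = \variables^{(1+o(1))/\gamma} = \variables^{\Theta(1)}$ and $k = (\log\variables)^{O(1/\alpha)}$.

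\textbf{Step 2: verify the hypotheses of Theorem~\ref{thm:decoding-coset-graph-weak}.} Set $p = \gamma$, which is below any fixed $c' \in (\gamma,1)$.

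First, consider the rate of $\RM(\logConstraints,\frac{\logConstraints+d}{2})$. We have $1-R = \Pr[\Bin(\logConstraints,1/2) > \frac{(1+\gamma)\logConstraints}{2}]$. By the Hamming-volume bound (Fact~\ref{fct:hamming-volume}) this equals
\[
\varepsilon_{\logConstraints,d} = 2^{-\logConstraints(1-H_2((1-\gamma)/2)) \pm o(\logConstraints)} = \constraints^{-\Theta(\gamma^2)},
\]
using $H_2(1/2-\gamma/2) = 1-\Theta(\gamma^2)$. Choosing $\gamma$ small enough makes $\Theta(\gamma^2) < \xi$, so $\constraints^{-\xi} < \varepsilon_{\logConstraints,d} < 1/2$ holds for large $\logConstraints$.

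Next, the noise requirement $\eta < \varepsilon^{\zeta}$ reads $\variables^{-c} < \constraints^{-\Theta(\zeta\gamma^2)}$. Since $\variables = \constraints^{\gamma - o(1)}$, the left side is $\constraints^{-c\gamma + o(1)}$. So we need $c\gamma > C\zeta\gamma^2$, i.e.\ $\gamma < c/(C\zeta)$, which holds once $\gamma$ is small.

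Finally, the volume condition $\eta\constraints \lesssim \constraints^{H_2((1-\gamma)/2)} = \constraints^{1-\Theta(\gamma^2)}$ needs $c\gamma > \Theta(\gamma^2)$. This is again automatic for small $\gamma$.

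Applying Theorem~\ref{thm:decoding-coset-graph-weak} then yields the $\poly(\constraints)$-time distinguisher.

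\textbf{Main obstacle.} The delicate step is the constant bookkeeping in Step 2. The exponent of $\varepsilon_{\logConstraints,d}$ is only determined up to $\Theta(\gamma^2)$ together with $o(\logConstraints)$ lower-order terms, and it must be compared against both $\xi$ and $c\gamma/\zeta$. I would pin it down with the explicit upper and lower entropy estimates, $1 - H_2(1/2-x) \in [\frac{2}{\ln 2}x^2, 4x^2]$ for small $x$, and then fix $\gamma$ as a small rational multiple of $\min(c/\zeta, \sqrt{\xi})$. If $c$ is large this is even easier; the condition becomes harder only as $c \to 0$, and it stays satisfiable because every requirement is linear in $\gamma$ versus quadratic in $\gamma$.
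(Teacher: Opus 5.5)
Your proposal is correct and follows essentially the same route as the paper. Both take the GUV-based coset family of Corollary~\ref{cor:expanding-coset-family} with $d$ a small constant fraction of $\logConstraints$, and verify the hypotheses of Theorem~\ref{thm:decoding-coset-graph-weak} using that $\lambda(\gamma)=1-H_2((1-\gamma)/2)$ is quadratic in $\gamma$ while the noise exponent $c\gamma$ is linear. The differences are cosmetic: you fix $d=\gamma\logConstraints$ exactly via Remark~\ref{rmk:rational-gamma} and use the explicit bounds $\lambda(\gamma)\in[\gamma^2/(2\ln 2),\gamma^2]$, whereas the paper keeps $d=\beta\logConstraints+\Theta(\log\logConstraints)$, absorbs the perturbation by continuity of $H_2$, and chooses $\beta$ with $\lambda(\beta)<\min(\xi, c\beta/(4\zeta))$.
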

\begin{proof}

Fix constants $\alpha, c > 0$. Let $\xi > 0, \zeta \geq 1$ be constants given by Conjecture~\ref{conj:RM-random-erasure-weak}.
    Let\[
\lambda(\beta)\coloneqq 1-H_2\!\left(\frac{1-\beta}{2}\right)
\]
for a value of $\beta$ we will choose shortly.
By the Taylor expansion of $H_2$ around $1/2$, as $\beta \to 0$ we have 
\[
\lambda(\beta)=\frac{\beta^2}{2\ln 2}+O(\beta^4).
\]
Let $\gamma_0 \coloneqq \frac{c \cdot \beta}{2}$, and choose a sufficiently small constant $\beta\in(0,1)$ such that
\begin{equation}
\lambda(\beta)< \min\left(\xi,\; \frac{c \cdot \beta}{4 \zeta}  \right) < \gamma_0.
\label{eq:beta-choice-for-weak-thm}
\end{equation}
By Corollary~\ref{cor:expanding-coset-family}, there is an infinite family of
explicit $(\logConstraints,k,d)$-coset graphs
\[
G_\logConstraints=([\constraints]\cup[\variables],E_\logConstraints)
\]
which is $(\variables^{1-\alpha},1-o(1))$-expanding with
\begin{itemize}
    \item $\constraints=2^\logConstraints$,
    \item $d=\beta \logConstraints+\Theta(\log \logConstraints)$,
    \item $k=(\logConstraints/\log \logConstraints)^{\Theta(1/\alpha)}$.
\end{itemize}
We claim that $\constraints = \variables^{\Theta(1)}$ and that $k = (\log \variables)^{\Theta(1/\alpha)}$.
Since $G_\logConstraints$ is an $(\logConstraints,k,d)$-coset graph, its right side has size $\variables=k\cdot 2^d$.
Because $d=\beta \logConstraints+\Theta(\log \logConstraints)$ and $\log k = O(\log \logConstraints)$, we have
\[\log \variables = d+\log k = \beta \logConstraints + \Theta(\log \logConstraints) = (\beta + o(1)) \cdot \logConstraints,\]
implying immediately that
\begin{equation}
    \label{eqn:M-asymptotic}
    \variables = \constraints^{\beta + o(1)}.
\end{equation}
and implying immediately that
$\logConstraints=\Theta(\log \variables)$ so that 
\[\boxed{\constraints=2^\logConstraints = \variables^{\Theta(1)}.}\]
Also, since $\logConstraints=\Theta (\log  \variables)$, the left degree satisfies
\[ 
\boxed{k=(\logConstraints/\log \logConstraints)^{\Theta(1/\alpha)}=(\log \variables)^{\Theta(1/\alpha)}.}
\]

We now verify the hypotheses of Theorem~\ref{thm:decoding-coset-graph-weak}.
First observe that,
\[
p\coloneqq d/\logConstraints = \beta + \Theta \left(\frac{\log \logConstraints}{\logConstraints} \right) = \beta + o(1).
\]
For convenience let $r\coloneqq \frac{\logConstraints+d}{2}$.
Using the continuity of the binary entropy function
\begin{equation}
\label{eqn:continuity-for-weak-thm}
H_2 \left(\frac{1-p}{2}\right)
=H_2 \left(\frac{1-\beta}{2}\right)+o(1)
=1-\lambda(\beta)+o(1).
\end{equation}
For convenience let $r\coloneqq \frac{\logConstraints+d}{2}$.
We now show the first hypothesis holds for Theorem~\ref{thm:decoding-coset-graph-weak}, namely that for sufficiently large $\logConstraints$, 
\[ \constraints^{-\xi} < \varepsilon_{\logConstraints, d} < 1/2.\]
Observe that
\[
\varepsilon_{\logConstraints,d}
\coloneqq 1-\Rate\!\left(\RM\!\left(\logConstraints,\frac{\logConstraints+d}{2}\right)\right)
=2^{-\logConstraints}\sum_{i<\frac{\logConstraints-d}{2}}\binom{\logConstraints}{i}.
\]
By applying the standard Hamming ball asymptotics and then applying Equation~\eqref{eqn:continuity-for-weak-thm}, we have
\begin{equation}
\label{eqn:vareps}
   \varepsilon_{\logConstraints,d}
=2^{-\logConstraints(1-H_2((1-p)/2))+o(\logConstraints)}
= \constraints^{-\lambda(\beta)+o(1)}. 
\end{equation}
This quantity is less than $1/2$ for all sufficiently large $\logConstraints$ since $\constraints = 2^\logConstraints$.
Then recall that $\beta$ was chosen so that $\lambda(\beta) < \xi$. 
Therefore, for sufficiently large $\logConstraints$, we have the desired property 
\[\boxed{\constraints^{-\xi} < \varepsilon_{\logConstraints, d} < 1/2.}\]
Now we show that $\eta = \variables^{-c}$ satisfies
$\eta < \varepsilon_{\logConstraints, d}^{\zeta}$.
Equation~\eqref{eqn:M-asymptotic} implies $\eta = \constraints^{- c \cdot \beta + o(1)}$.
Equation~\eqref{eqn:vareps} implies
\[
\varepsilon_{\logConstraints,d}^{\zeta}
= \constraints^{-\zeta\cdot \lambda(\beta)+o(1)}.
\]
Equation~\eqref{eq:beta-choice-for-weak-thm} implies $\frac{\lambda(\beta)}{\beta} < \frac{c}{\zeta}$, which implies the desired statement that for sufficiently large $\logConstraints$, 
\[
\boxed{\eta < \varepsilon_{\logConstraints, d}^{\zeta}.}
\]
Finally we show that $ \eta \cdot \constraints \lesssim \constraints^{H_2((1-p)/2)}$.
Since $\eta = \variables^{-c}$, Equation~\eqref{eqn:M-asymptotic} implies that
\[\eta \cdot \constraints = \constraints^{1 - c \cdot \beta + o(1)}.\]
Equation~\eqref{eqn:continuity-for-weak-thm} implies that
\[\constraints^{H_2((1-p)/2)} = \constraints^{1 - \lambda(\beta) + o(1)}.\]
Equation~\eqref{eq:beta-choice-for-weak-thm} implies that $1 - c\cdot \beta < 1 - \lambda(\beta)$ since $\lambda(\beta) < c \cdot \beta/4$ (since $\zeta \geq 1$).
Therefore we conclude that 
\[\boxed{\eta \cdot \constraints \lesssim \constraints^{H_2((1-p)/2)}.}\]
Since all hypotheses of Theorem~\ref{thm:decoding-coset-graph-weak} hold, there exists a $\poly(\constraints)$ time algorithm that solves the $\eta$-noisy XOR distinguishing problem for $G_\logConstraints$ with distinguishing advantage $1 - o(1)$. 
\end{proof}
\section*{Acknowledgements}
We thank Andrej Bogdanov for helpful discussions. PL and AR are supported by European Research Council (ERC) under the EU’s Horizon 2020 research and innovation programme (Grant agreement No. 101019547). PL is additionally supported by Stellar Foundation grant and AR by Cariplo CRYPTONOMEX grant. MS is supported in part by a Simons Investigator Award and NSF Award CCF 2152413. Part of MS's work done while visiting Bocconi University.
\printbibliography

@inproceedings{DBLP:conf/crypto/BogdanovRT25,
  author       = {Andrej Bogdanov and
                  Alon Rosen and
                  Kel Zin Tan},
  title        = {Sample Efficient Search to Decision for kLIN},
  booktitle    = {Advances in Cryptology - {CRYPTO} 2025 - 45th Annual International
                  Cryptology Conference, Santa Barbara, CA, USA, August 17-21, 2025,
                  Proceedings, Part {I}},
  series       = {Lecture Notes in Computer Science},
  pages        = {203--220},
  publisher    = {Springer},
  year         = {2025}
}

@inproceedings{FOCS:PV05,
  author       = {Farzad Parvaresh and
                  Alexander Vardy},
  title        = {Correcting Errors Beyond the Guruswami-Sudan Radius in Polynomial
                  Time},
  booktitle    = {46th Annual {IEEE} Symposium on Foundations of Computer Science, {FOCS}
                  2005, Pittsburgh, PA, USA, October 23-25, 2005, Proceedings},
  pages        = {285--294},
  publisher    = {{IEEE} Computer Society},
  year         = {2005},
  url          = {https://doi.org/10.1109/SFCS.2005.29},
  doi          = {10.1109/SFCS.2005.29},
  bibsource    = {dblp computer science bibliography, https://dblp.org}
}

@article{DBLP:journals/siamcomp/Applebaum13,
  author       = {Benny Applebaum},
  title        = {Pseudorandom Generators with Long Stretch and Low Locality from Random
                  Local One-Way Functions},
  journal      = {{SIAM} J. Comput.},
  volume       = {42},
  number       = {5},
  pages        = {2008--2037},
  year         = {2013},
  url          = {https://doi.org/10.1137/120884857},
  doi          = {10.1137/120884857},
  bibsource    = {dblp computer science bibliography, https://dblp.org}
}

@article{DBLP:journals/siamcomp/ApplebaumL18,
  author       = {Benny Applebaum and
                  Shachar Lovett},
  title        = {Algebraic Attacks against Random Local Functions and Their Countermeasures},
  journal      = {{SIAM} J. Comput.},
  volume       = {47},
  number       = {1},
  pages        = {52--79},
  year         = {2018},
  url          = {https://doi.org/10.1137/16M1085942},
  doi          = {10.1137/16M1085942},
  bibsource    = {dblp computer science bibliography, https://dblp.org}
}

@article{DBLP:journals/cc/BogdanovQ12,
  author       = {Andrej Bogdanov and
                  Youming Qiao},
  title        = {On the security of Goldreich's one-way function},
  journal      = {Comput. Complex.},
  volume       = {21},
  number       = {1},
  pages        = {83--127},
  year         = {2012},
  url          = {https://doi.org/10.1007/s00037-011-0034-0},
  doi          = {10.1007/S00037-011-0034-0},
  bibsource    = {dblp computer science bibliography, https://dblp.org}
}

@inproceedings{bogdanov-sabin-vasudevan,
author = {Andrej Bogdanov and Manuel Sabin and Prashant Nalini Vasudevan},
title = {XOR Codes and Sparse Learning Parity with Noise},
booktitle = {Proceedings of 2019 ACM-SIAM Symposium on Discrete Algorithms (SODA)},
pages = {986-1004},
year = {2019},
doi = {10.1137/1.9781611975482.61},
URL = {https://epubs.siam.org/doi/abs/10.1137/1.9781611975482.61},
eprint = {https://epubs.siam.org/doi/pdf/10.1137/1.9781611975482.61}}

@article{DBLP:journals/joc/ApplebaumBR16,
  author       = {Benny Applebaum and
                  Andrej Bogdanov and
                  Alon Rosen},
  title        = {A Dichotomy for Local Small-Bias Generators},
  journal      = {J. Cryptol.},
  volume       = {29},
  number       = {3},
  pages        = {577--596},
  year         = {2016},
  url          = {https://doi.org/10.1007/s00145-015-9202-8},
  doi          = {10.1007/S00145-015-9202-8},
  bibsource    = {dblp computer science bibliography, https://dblp.org}
}

@article{DBLP:journals/rsa/MosselST06,
  author       = {Elchanan Mossel and
                  Amir Shpilka and
                  Luca Trevisan},
  title        = {On epsilon-biased generators in $\mathrm{NC}^0$},
  journal      = {Random Struct. Algorithms},
  volume       = {29},
  number       = {1},
  pages        = {56--81},
  year         = {2006},
  url          = {https://doi.org/10.1002/rsa.20112},
  doi          = {10.1002/RSA.20112},
  bibsource    = {dblp computer science bibliography, https://dblp.org}
}

@inproceedings{DBLP:conf/crypto/BlumFKL93,
  author       = {Avrim Blum and
                  Merrick L. Furst and
                  Michael J. Kearns and
                  Richard J. Lipton},
  title        = {Cryptographic Primitives Based on Hard Learning Problems},
  booktitle    = {13th Annual International Cryptology Conference, CRYPTO 1993},
  pages        = {278--291},
  year         = {1993},
  url          = {https://doi.org/10.1007/3-540-48329-2\_24},
  doi          = {10.1007/3-540-48329-2\_24},
  bibsource    = {dblp computer science bibliography, https://dblp.org}
}

@inproceedings{GHJS25,
author = {Ghosal, Riddhi and Hair, Isaac M. and Jain, Aayush and Sahai, Amit},
title = {Using the Planted Clique Conjecture for Cryptography: Public-Key Encryption from Planted Clique and Noisy k-LIN over Expanders},
year = {2025},
isbn = {9798400715105},
publisher = {Association for Computing Machinery},
address = {New York, NY, USA},
url = {https://doi.org/10.1145/3717823.3718306},
doi = {10.1145/3717823.3718306},
booktitle = {Proceedings of the 57th Annual ACM Symposium on Theory of Computing},
pages = {1921–1932},
numpages = {12},
location = {Prague, Czechia},
series = {STOC '25}
}

@article{ASW14,
  author       = {Emmanuel Abbe and
                  Amir Shpilka and
                  Avi Wigderson},
  title        = {Reed-Muller codes for random erasures and errors},
  journal      = {CoRR},
  volume       = {abs/1411.4590},
  year         = {2014},
  url          = {http://arxiv.org/abs/1411.4590},
  eprinttype    = {arXiv},
  eprint       = {1411.4590},
  bibsource    = {dblp computer science bibliography, https://dblp.org}
}

@misc{SS18,
      title={On the Performance of Reed-Muller Codes with respect to Random Errors and Erasures}, 
      author={Ori Sberlo and Amir Shpilka},
      year={2018},
      eprint={1811.12447},
      archivePrefix={arXiv},
      primaryClass={cs.IT},
      url={https://arxiv.org/abs/1811.12447}, 
}

@article{SSV15,
  author       = {Ramprasad Saptharishi and
                  Amir Shpilka and
                  Ben Lee Volk},
  title        = {Decoding high rate Reed-Muller codes from random errors in near linear
                  time},
  journal      = {CoRR},
  volume       = {abs/1503.09092},
  year         = {2015},
  url          = {http://arxiv.org/abs/1503.09092},
  eprinttype    = {arXiv},
  eprint       = {1503.09092},
  bibsource    = {dblp computer science bibliography, https://dblp.org}
}

@article{kkmpsu16,
  author       = {Shrinivas Kudekar and
                  Santhosh Kumar and
                  Marco Mondelli and
                  Henry D. Pfister and
                  Eren Sasoglu and
                  R{\"{u}}diger L. Urbanke},
  title        = {Reed-Muller Codes Achieve Capacity on Erasure Channels},
  journal      = {{IEEE} Trans. Inf. Theory},
  volume       = {63},
  number       = {7},
  pages        = {4298--4316},
  year         = {2017},
  url          = {https://doi.org/10.1109/TIT.2017.2673829},
  doi          = {10.1109/TIT.2017.2673829},
  bibsource    = {dblp computer science bibliography, https://dblp.org}
}

@inproceedings{STOC:KKMPSU16,
  author       = {Shrinivas Kudekar and
                  Santhosh Kumar and
                  Marco Mondelli and
                  Henry D. Pfister and
                  Eren Sasoglu and
                  R{\"{u}}diger L. Urbanke},
  editor       = {Daniel Wichs and
                  Yishay Mansour},
  title        = {Reed-Muller codes achieve capacity on erasure channels},
  booktitle    = {Proceedings of the 48th Annual {ACM} {SIGACT} Symposium on Theory
                  of Computing, {STOC} 2016, Cambridge, MA, USA, June 18-21, 2016},
  pages        = {658--669},
  publisher    = {{ACM}},
  year         = {2016},
  url          = {https://doi.org/10.1145/2897518.2897584},
  doi          = {10.1145/2897518.2897584},
  bibsource    = {dblp computer science bibliography, https://dblp.org}
}

@article{GUV09, author = {Guruswami, Venkatesan and Umans, Christopher and Vadhan, Salil}, title = {Unbalanced expanders and randomness extractors from Parvaresh--Vardy codes}, year = {2009}, issue_date = {June 2009}, publisher = {Association for Computing Machinery}, address = {New York, NY, USA}, volume = {56}, number = {4}, issn = {0004-5411}, url = {https://doi.org/10.1145/1538902.1538904}, doi = {10.1145/1538902.1538904}, journal = {J. ACM}, month = jul, articleno = {20}, numpages = {34} }

@article{vadhan2012,
  title={Pseudorandomness},
  author={Vadhan, Salil P},
  journal={Foundations and Trends{\textregistered} in Theoretical Computer Science},
  volume={7},
  number={1-3},
  pages={1--336},
  year={2012},
  publisher={Emerald Publishing Limited}
}

@inproceedings{FOCS:Alekhnovich03,
  author       = {Michael Alekhnovich},
  title        = {More on Average Case vs Approximation Complexity},
  booktitle    = {44th Symposium on Foundations of Computer Science, {FOCS} 2003, Cambridge,
                  MA, USA, October 11-14, 2003, Proceedings},
  pages        = {298--307},
  publisher    = {{IEEE} Computer Society},
  year         = {2003},
  url          = {https://doi.org/10.1109/SFCS.2003.1238204},
  doi          = {10.1109/SFCS.2003.1238204},
  bibsource    = {dblp computer science bibliography, https://dblp.org}
}

@article{arXiv:BHLM25,
  author       = {Arpon Basu and
                  Jun{-}Ting Hsieh and
                  Andrew D. Lin and
                  Peter Manohar},
  title        = {Solving Random Planted CSPs below the $n^{k/2}$ Threshold},
  journal      = {CoRR},
  volume       = {abs/2507.10833},
  year         = {2025},
  url          = {https://doi.org/10.48550/arXiv.2507.10833},
  doi          = {10.48550/ARXIV.2507.10833},
  eprinttype   = {arXiv},
  eprint       = {2507.10833},
  bibsource    = {dblp computer science bibliography, https://dblp.org}
}

@inproceedings{NIPS:FPV15,
  author       = {Vitaly Feldman and
                  Will Perkins and
                  Santosh S. Vempala},
  editor       = {Corinna Cortes and
                  Neil D. Lawrence and
                  Daniel D. Lee and
                  Masashi Sugiyama and
                  Roman Garnett},
  title        = {Subsampled Power Iteration: a Unified Algorithm for Block Models and
                  Planted CSP's},
  booktitle    = {Advances in Neural Information Processing Systems 28: Annual Conference
                  on Neural Information Processing Systems 2015, December 7-12, 2015,
                  Montreal, Quebec, Canada},
  pages        = {2836--2844},
  year         = {2015},
  url          = {https://proceedings.neurips.cc/paper/2015/hash/9597353e41e6957b5e7aa79214fcb256-Abstract.html},
  bibsource    = {dblp computer science bibliography, https://dblp.org}
}

@inproceedings{STOC:RRS17,
  author       = {Prasad Raghavendra and
                  Satish Rao and
                  Tselil Schramm},
  editor       = {Hamed Hatami and
                  Pierre McKenzie and
                  Valerie King},
  title        = {Strongly refuting random CSPs below the spectral threshold},
  booktitle    = {Proceedings of the 49th Annual {ACM} {SIGACT} Symposium on Theory
                  of Computing, {STOC} 2017, Montreal, QC, Canada, June 19-23, 2017},
  pages        = {121--131},
  publisher    = {{ACM}},
  year         = {2017},
  url          = {https://doi.org/10.1145/3055399.3055417},
  doi          = {10.1145/3055399.3055417},
  bibsource    = {dblp computer science bibliography, https://dblp.org}
}

@inproceedings{STOC:KMOW17,
  author       = {Pravesh K. Kothari and
                  Ryuhei Mori and
                  Ryan O'Donnell and
                  David Witmer},
  editor       = {Hamed Hatami and
                  Pierre McKenzie and
                  Valerie King},
  title        = {Sum of squares lower bounds for refuting any {CSP}},
  booktitle    = {Proceedings of the 49th Annual {ACM} {SIGACT} Symposium on Theory
                  of Computing, {STOC} 2017, Montreal, QC, Canada, June 19-23, 2017},
  pages        = {132--145},
  publisher    = {{ACM}},
  year         = {2017},
  url          = {https://doi.org/10.1145/3055399.3055485},
  doi          = {10.1145/3055399.3055485},
  bibsource    = {dblp computer science bibliography, https://dblp.org}
}

@article{CC:Grigoriev01,
  author       = {Dima Grigoriev},
  title        = {Complexity of Positivstellensatz proofs for the knapsack},
  journal      = {Comput. Complex.},
  volume       = {10},
  number       = {2},
  pages        = {139--154},
  year         = {2001},
  url          = {https://doi.org/10.1007/s00037-001-8192-0},
  doi          = {10.1007/S00037-001-8192-0},
  bibsource    = {dblp computer science bibliography, https://dblp.org}
}

@book{baraksteurer16,
 title={Proofs, beliefs, and algorithms through the lens of sum-of-squares},
  author       = {Boaz Barak and
                  David Steurer},
  year = {2016},
}

@article{barak2014sum,
  title={Sum of squares upper bounds, lower bounds, and open questions},
  author={Barak, Boaz},
  journal={Lecture notes},
  year={2014}
}

@article{COJA:OGHLAN_GOERDT_LANKA_2007, title={Strong Refutation Heuristics for Random k-SAT}, volume={16}, DOI={10.1017/S096354830600784X}, number={1}, journal={Combinatorics, Probability and Computing}, author={Coja-Oghlan, Amin and Goerdt, Andreas and Lanka, Andr\'e}, year={2007}, pages={5–28}}

@misc{barak2016noisytensorcompletionsumofsquares,
      title={Noisy Tensor Completion via the Sum-of-Squares Hierarchy}, 
      author={Boaz Barak and Ankur Moitra},
      year={2016},
      eprint={1501.06521},
      archivePrefix={arXiv},
      primaryClass={cs.LG},
      url={https://arxiv.org/abs/1501.06521}, 
}

@misc{allen2015refuterandomcsp,
      title={How to refute a random CSP}, 
      author={Sarah R. Allen and Ryan O'Donnell and David Witmer},
      year={2015},
      eprint={1505.04383},
      archivePrefix={arXiv},
      primaryClass={cs.CC},
      url={https://arxiv.org/abs/1505.04383}, 
}

@misc{guruswami2023algorithmscertificatesbooleancsp,
      title={Algorithms and Certificates for Boolean CSP Refutation: "Smoothed is no harder than Random"}, 
      author={Venkatesan Guruswami and Pravesh K. Kothari and Peter Manohar},
      year={2023},
      eprint={2109.04415},
      archivePrefix={arXiv},
      primaryClass={cs.CC},
      url={https://arxiv.org/abs/2109.04415}, 
}

@inproceedings{applebaum2010public,
  title={Public-key cryptography from different assumptions},
  author={Applebaum, Benny and Barak, Boaz and Wigderson, Avi},
  booktitle={Proceedings of the forty-second ACM symposium on Theory of computing},
  pages={171--180},
  year={2010}
}

@misc{barak2023hiddenprogressdeeplearning,
      title={Hidden Progress in Deep Learning: SGD Learns Parities Near the Computational Limit}, 
      author={Boaz Barak and Benjamin L. Edelman and Surbhi Goel and Sham Kakade and Eran Malach and Cyril Zhang},
      year={2023},
      eprint={2207.08799},
      archivePrefix={arXiv},
      primaryClass={cs.LG},
      url={https://arxiv.org/abs/2207.08799}, 
}

@misc{raoyoutube,
  author       = "Anup Rao",
  title        = "Reed-Muller codes achieve capacity on the erasure channel",
  howpublished = "YouTube",
  month        = 12,
  year         = 2021,
  note         = "",
  url         = "https://www.youtube.com/watch?v=V9HCmPPz110"
}

@misc{abbe2019reedmullercodespolarize,
      title={Reed-Muller codes polarize}, 
      author={Emmanuel Abbe and Min Ye},
      year={2019},
      eprint={1901.11533},
      archivePrefix={arXiv},
      primaryClass={cs.IT},
      url={https://arxiv.org/abs/1901.11533}, 
}

@misc{hassani2018optimalscalingreedmullercodes,
      title={Almost Optimal Scaling of Reed-Muller Codes on BEC and BSC Channels}, 
      author={Hamed Hassani and Shrinivas Kudekar and Or Ordentlich and Yury Polyanskiy and Rüdiger Urbanke},
      year={2018},
      eprint={1801.09481},
      archivePrefix={arXiv},
      primaryClass={cs.IT},
      url={https://arxiv.org/abs/1801.09481}, 
}

@book{tao2023topics,
  title={Topics in random matrix theory},
  author={Tao, Terence},
  volume={132},
  year={2023},
  publisher={American Mathematical Society}
}

@incollection{Goldreich11,
  author       = {Oded Goldreich},
  editor       = {Oded Goldreich},
  title        = {Candidate One-Way Functions Based on Expander Graphs},
  booktitle    = {Studies in Complexity and Cryptography. Miscellanea on the Interplay
                  between Randomness and Computation - In Collaboration with Lidor Avigad,
                  Mihir Bellare, Zvika Brakerski, Shafi Goldwasser, Shai Halevi, Tali
                  Kaufman, Leonid Levin, Noam Nisan, Dana Ron, Madhu Sudan, Luca Trevisan,
                  Salil Vadhan, Avi Wigderson, David Zuckerman},
  series       = {Lecture Notes in Computer Science},
  pages        = {76--87},
  publisher    = {Springer},
  year         = {2011},
  url          = {https://doi.org/10.1007/978-3-642-22670-0\_10},
  doi          = {10.1007/978-3-642-22670-0\_10},
  bibsource    = {dblp computer science bibliography, https://dblp.org}
}

@article{ECCC:Goldreich00,
  author       = {Oded Goldreich},
  title        = {Candidate One-Way Functions Based on Expander Graphs},
  journal      = {Electron. Colloquium Comput. Complex.},
  volume       = {{TR00}},
  eid          = {{TR00-090}},
  year         = {2000},
  url          = {https://eccc.weizmann.ac.il/eccc-reports/2000/TR00-090/index.html},
  eprinttype   = {ECCC},
  eprint       = {TR00-090},
  bibsource    = {dblp computer science bibliography, https://dblp.org}
}

@inproceedings{ITCS:OST19,
  author       = {Igor Carboni Oliveira and
                  Rahul Santhanam and
                  Roei Tell},
  editor       = {Avrim Blum},
  title        = {Expander-Based Cryptography Meets Natural Proofs},
  booktitle    = {10th Innovations in Theoretical Computer Science Conference, {ITCS}
                  2019, San Diego, California, USA, January 10-12, 2019},
  series       = {LIPIcs},
  pages        = {18:1--18:14},
  publisher    = {Schloss Dagstuhl - Leibniz-Zentrum f{\"{u}}r Informatik},
  year         = {2019},
  url          = {https://doi.org/10.4230/LIPIcs.ITCS.2019.18},
  doi          = {10.4230/LIPICS.ITCS.2019.18},
  bibsource    = {dblp computer science bibliography, https://dblp.org}
}

@inproceedings{TCC:BKR23,
  author       = {Andrej Bogdanov and
                  Pravesh K. Kothari and
                  Alon Rosen},
  editor       = {Guy N. Rothblum and
                  Hoeteck Wee},
  title        = {Public-Key Encryption, Local Pseudorandom Generators, and the Low-Degree
                  Method},
  booktitle    = {Theory of Cryptography - 21st International Conference, {TCC} 2023,
                  Taipei, Taiwan, November 29 - December 2, 2023, Proceedings, Part
                  {I}},
  series       = {Lecture Notes in Computer Science},
  pages        = {268--285},
  publisher    = {Springer},
  year         = {2023},
  url          = {https://doi.org/10.1007/978-3-031-48615-9\_10},
  doi          = {10.1007/978-3-031-48615-9\_10},
  bibsource    = {dblp computer science bibliography, https://dblp.org}
}

@book{guruswami2012essential,
  title={Essential Coding Theory},
  author={Guruswami, Venkatesan and Rudra, Atri and Sudan, Madhu},
  publisher ={Draft available online},
  url = {http://www.cse.buffalo.edu/atri/courses/coding-theory/book},
  year={2012}
}

@inproceedings{feige2002relations,
  title={Relations between average case complexity and approximation complexity},
  author={Feige, Uriel},
  booktitle={Proceedings of the thirty-fourth annual ACM symposium on Theory of computing},
  pages={534--543},
  year={2002}
}

@INPROCEEDINGS{fko06,
  author={Feige, Uriel and Kim, Jeong Han and Ofek, Eran},
  booktitle={2006 47th Annual IEEE Symposium on Foundations of Computer Science (FOCS'06)}, 
  title={Witnesses for non-satisfiability of dense random 3CNF formulas}, 
  year={2006},
  volume={},
  number={},
  pages={497-508},
  doi={10.1109/FOCS.2006.78}}

@article{JACM:Braverman08,
  title={Polylogarithmic independence fools AC 0 circuits},
  author={Braverman, Mark},
  journal={Journal of the ACM (JACM)},
  volume={57},
  number={5},
  pages={1--10},
  year={2008},
  publisher={ACM New York, NY, USA}
}

@article{Journal:ASY21,
  author       = {Emmanuel Abbe and
                  Amir Shpilka and
                  Min Ye},
  title        = {Reed-Muller Codes: Theory and Algorithms},
  journal      = {{IEEE} Trans. Inf. Theory},
  volume       = {67},
  number       = {6},
  pages        = {3251--3277},
  year         = {2021},
  url          = {https://doi.org/10.1109/TIT.2020.3004749},
  doi          = {10.1109/TIT.2020.3004749},
  bibsource    = {dblp computer science bibliography, https://dblp.org}
}

@inproceedings{buhai2025quasipolynomiallowdegreeconjecturefalse,
  author       = {Rares{-}Darius Buhai and
                  Jun{-}Ting Hsieh and
                  Aayush Jain and
                  Pravesh K. Kothari},
  title        = {The Quasi-Polynomial Low-Degree Conjecture is False},
  booktitle    = {66th {IEEE} Annual Symposium on Foundations of Computer Science, {FOCS}
                  2025, Sydney, Australia, December 14-17, 2025},
  pages        = {2577--2590},
  publisher    = {{IEEE}},
  year         = {2025},
  url          = {https://doi.org/10.1109/FOCS63196.2025.00134},
  doi          = {10.1109/FOCS63196.2025.00134},
  bibsource    = {dblp computer science bibliography, https://dblp.org}
}

@book{hopkins2018statistical,
  title={Statistical inference and the sum of squares method},
  author={Hopkins, Samuel},
  year={2018},
  publisher={Cornell University}
}

@ARTICLE{journal:guruswami-xia,
  author={Guruswami, Venkatesan and Xia, Patrick},
  journal={IEEE Transactions on Information Theory}, 
  title={Polar Codes: Speed of Polarization and Polynomial Gap to Capacity}, 
  year={2015},
  volume={61},
  number={1},
  pages={3-16},
  doi={10.1109/TIT.2014.2371819}}

@article{journal:hassani2014finite,
  title={Finite-length scaling for polar codes},
  author={Hassani, Seyed Hamed and Alishahi, Kasra and Urbanke, R{\"u}diger L},
  journal={IEEE Transactions on Information Theory},
  volume={60},
  number={10},
  pages={5875--5898},
  year={2014},
  publisher={IEEE}
}

@ARTICLE{journal:gry22,
  author={Guruswami, Venkatesan and Riazanov, Andrii and Ye, Min},
  journal={IEEE Transactions on Information Theory}, 
  title={Arıkan Meets Shannon: Polar Codes With Near-Optimal Convergence to Channel Capacity}, 
  year={2022},
  volume={68},
  number={5},
  pages={2877-2919},
  doi={10.1109/TIT.2022.3146786}}

@article{journal:mondelli2014polar,
  title={From polar to Reed-Muller codes: A technique to improve the finite-length performance},
  author={Mondelli, Marco and Hassani, S Hamed and Urbanke, R{\"u}diger L},
  journal={IEEE Transactions on Communications},
  volume={62},
  number={9},
  pages={3084--3091},
  year={2014},
  publisher={IEEE}
}

@inproceedings{ITCS:HW21,
  author       = {Justin Holmgren and
                  Alexander S. Wein},
  editor       = {James R. Lee},
  title        = {Counterexamples to the Low-Degree Conjecture},
  booktitle    = {12th Innovations in Theoretical Computer Science Conference, {ITCS}
                  2021, Virtual Conference, January 6-8, 2021},
  series       = {LIPIcs},
  pages        = {75:1--75:9},
  publisher    = {Schloss Dagstuhl - Leibniz-Zentrum f{\"{u}}r Informatik},
  year         = {2021},
  url          = {https://doi.org/10.4230/LIPIcs.ITCS.2021.75},
  doi          = {10.4230/LIPICS.ITCS.2021.75},
  bibsource    = {dblp computer science bibliography, https://dblp.org}
}

\end{document}